\theoremstyle{plain}
\newtheorem{thm}{Theorem}[section]
\newtheorem{lem}[thm]{Lemma}
\newtheorem{prop}[thm]{Proposition}
\newtheorem{defn}[thm]{Definition}
\newtheorem{asmn}[thm]{Assumption}
\newtheorem{rem}[thm]{Remark}
\let\oldlemma\lem
\renewcommand{\lem}{%
  \crefalias{thm}{lem}
  \oldlemma}
\Crefname{lem}{Lemma}{Lemmas}
\let\olddefn\defn
\renewcommand{\defn}{%
  \crefalias{thm}{defn}
  \olddefn}
\Crefname{defn}{Definition}{Definitions}
\let\oldrem\rem
\renewcommand{\rem}{%
  \crefalias{thm}{rem}
  \oldrem}
\Crefname{rem}{Remark}{Remarks}
\let\oldcor\cor
\renewcommand{\cor}{%
  \crefalias{thm}{cor}
  \oldcor}
\Crefname{cor}{Corollary}{Corollaries}
\let\oldclaim\claim
\renewcommand{\claim}{%
  \crefalias{thm}{claim}
  \oldclaim}
\Crefname{claim}{Claim}{Claims}
\let\oldprop\prop
\renewcommand{\prop}{%
  \crefalias{thm}{prop}
  \oldprop}
\Crefname{prop}{Proposition}{Propositions}
\let\oldcon\con
\renewcommand{\con}{%
  \crefalias{thm}{con}
  \oldcon}
\Crefname{con}{Conjecture}{Conjectures}
\let\oldasmn\asmn
\renewcommand{\asmn}{%
  \crefalias{thm}{asmn}
  \oldasmn}
\Crefname{asmn}{Assumption}{Assumptions}
\definecolor{darkgrn}{rgb}{0, 0.8, 0}
\newcommand{\R}{ {\mathbb R} }
\newcommand{\hK}{\hat{K}}
\newcommand{\tK}{\tilde{K}}
\newcommand{\hV}{\hat{V}}
\newcommand{\hD}{\hat{D}}
\newcommand{\hE}{\hat{E}}
\newcommand{\hF}{\hat{F}}
\newcommand{\hT}{\hat{T}}
\newcommand{\hG}{\hat{G}}
\newcommand{\hu}{\hat{u}}
\newcommand{\hv}{\hat{v}}
\newcommand{\tv}{\tilde{v}}
\newcommand{\he}{\hat{e}}
\newcommand{\hf}{\hat{f}}
\newcommand{\htt}{\hat{t}}
\newcommand{\ttt}{\tilde{t}}
\newcommand{\hrho}{\hat{\rho}}
\newcommand{\CH}{\mathcal{C}^H}
\newcommand{\CO}{\mathcal{C}^O}
\newcommand{\hCH}{\hat{\mathcal{C}}^H}
\newcommand{\hCO}{\hat{\mathcal{C}}^O}
\newcommand{\sT}{\mathcal{T}}
\newcommand{\cP}{\mathcal{P}}
\newcommand{\tf}{\tilde{f}}
\newcommand{\diam}{\operatorname{D}}
\newcommand{\per}{\operatorname{P}}
\newcommand{\area}{\operatorname{A}}
\newcommand{\g}{\gamma}
\newcommand{\gl}{\lambda} 
\newcommand{\ghf}{\gamma(\hf)}
\title{Euler Transformation of Polyhedral Complexes}
\author{
    Prashant Gupta \hspace*{1in}   Bala Krishnamoorthy \\
    Department of Mathematics and Statistics, Washington State University \\
    \{prashant.gupta,kbala\}@wsu.edu
}
\begin{document}

\maketitle

\begin{abstract}

  We propose an \emph{Euler transformation} that transforms a given $d$-dimensional cell complex $K$ for $d=2,3$ into a new $d$-complex $\hK$ in which every vertex is part of the same even number of edges.
Hence every vertex in the graph $\hG$ that is the $1$-skeleton of $\hK$ has an even degree, which makes $\hG$ Eulerian, i.e., it is guaranteed to contain an Eulerian tour.
Meshes whose edges admit Eulerian tours are crucial in coverage problems arising in several applications including 3D printing and robotics.

For $2$-complexes in $\R^2$ ($d=2$) under mild assumptions (that no two adjacent edges of a $2$-cell in $K$ are boundary edges), we show that the Euler transformed $2$-complex $\hK$ has a geometric realization in $\R^2$, and that each vertex in its $1$-skeleton has degree $4$.
We bound the numbers of vertices, edges, and $2$-cells in $\hK$ as small scalar multiples of the corresponding numbers in $K$.

We prove corresponding results for $3$-complexes in $\R^3$ under an additional assumption that the degree of a vertex in each $3$-cell containing it is $3$.
In this setting, every vertex in $\hG$ is shown to have a degree of $6$.

  We also present bounds on parameters measuring geometric quality (aspect ratios, minimum edge length, and maximum angle of cells) of $\hK$ in terms of the corresponding parameters of $K$ for $d=2,3$.
Finally, we illustrate a direct application of the proposed Euler transformation in additive manufacturing.
\end{abstract}

\section{Introduction} \label{sec:intro}

An \emph{Eulerian circuit}, or Eulerian tour, in a finite graph $G=(V,E)$ is a closed walk that traverses every edge in $E$ exactly once.
In other words, the walk starts and ends at the same node while possibly visiting some nodes in $V$ multiple times while \emph{covering} all edges.
The classical result attributed to Euler \cite{Eu1736,Hi1873} states that $G$ has an Eulerian tour if and only if $G$ is connected and every node in $V$ has an even degree.
We are interested in Eulerian circuits in the context of coverage problems arising in additive manufacturing (3D printing), robotics, and other areas.
The domain to be covered is usually modeled by a cell complex, i.e., a tessellation.
Triangulations or hexagonal meshes in 2D, and cubical meshes or tetrahedralizations in 3D are typical examples.
Complete coverage of the domain is ensured by traversing all edges (and hence all vertices) in the mesh.
Efficient traversal of all edges in a contiguous fashion becomes critical in this context.

In additive manufacturing, we first print the outer ``shell'' or boundary of the 3D object in each layer.
We then cover the interior space by printing an \emph{infill lattice} \cite{BrBrWiHa2012,WuAaWeSi2018}, which is typically a standard mesh where any two edges meet at most at a vertex.
In large scale additive manufacturing,  printing most, if not all, edges of the infill lattice in a contiguous manner is critical to decrease non-print motions of the printer-head.
The problem of coverage path planning in robotics seeks to find a path that passes through all points while avoiding obstacles \cite{GaCa2013}.
Standard approaches for such coverage problems employ graph-based algorithms \cite{Xu2011}.
A robot is typically required to cover all vertices and edges of the graph, while using the edges sequentially without repetition \cite{CaHuHa1988}.
Traversing the edges along an Eulerian tour is required to address these challenges.
But the graph made of the vertices and edges in a cell complex is not always guaranteed to contain an Eulerian tour.
On the other hand, any cell complex that guarantees the existence of an Eulerian tour covering its edges would be expected to also offer reasonable bounds on the quality of elements, e.g., as measured by the aspect ratios of its cells.

\subsection{Our contributions} \label{ssec:contrib}

We propose a method that transforms a given $d$-dimensional cell complex $K$ (or $d$-complex, in short) for $d=2,3$ into a new $d$-complex $\hK$ in which every vertex is part of the same even number of edges.
Hence every vertex in the graph $\hG$ that is the $1$-skeleton of $\hK$ has an even degree, which makes $\hG$ Eulerian, i.e., it is guaranteed to contain an Eulerian tour.
We refer to this method as an \emph{Euler transformation} of a polyhedral mesh (or cell complex).
We first describe the Euler transformation of a $d$-complex for $d=2,3$ \emph{abstractly}, i.e., without specifying details of a geometric realization.

For $2$-complexes in $\R^2$ ($d=2$) under mild assumptions (that no two adjacent edges of a $2$-cell in $K$ are boundary edges), we show that the Euler transformed $2$-complex $\hK$ has a geometric realization in $\R^2$, and that each vertex in its $1$-skeleton has degree $4$.
We bound the numbers of vertices, edges, and polygons in $\hK$ as small scalar multiples of the corresponding numbers in $K$.
We prove corresponding results for $3$-complexes in $\R^3$ under an additional assumption that each vertex in $K$ is connected to three edges in a $3$-cell that contains the vertex, i.e., the degree of each vertex in the $1$-skeleton of each $3$-cell in $K$ containing the vertex is $3$.
In this setting, every vertex in $\hG$ is shown to have a degree of $6$.
We show another nice geometric property of the Euler transformed $3$-complex: every edge in $\hK$ is shared by exactly four polygons (i.e., faces).
As a result, if we slice $\hK$ by a plane that cuts across only edges but does not intersect any vertices  in $\hK$, the resulting $2$-complex is guaranteed to be Euler, with each vertex having degree $4$.

Next, we present bounds on parameters measuring geometric quality (aspect ratios) of $\hK$ in terms of the corresponding parameters of $K$ (for $d=2, 3$).
One can control these quality measures by choosing user-defined offset parameters appropriately.
Finally, we illustrate a direct application of the proposed Euler transformation in additive manufacturing.
We present a complete algorithmic framework for continuous toolpath planning in additive manufacturing using our Euler transformation in a separate paper \cite{GuKrDr2020}.

\medskip
We illustrate the Euler transformation for $d=2$ in Figure \ref{fig:eulermeshillust}.
Given a $2$-dimensional cell complex $K$ tessellating a rectangular region in $\R^2$, the Euler transformation produces the $2$-complex $\hK$ tessellating the same region with every vertex having degree $4$.

\subsection{Related Work} \label{ssec:relwork}
\vspace*{-0.05in}

In one of the earliest works on degree-constrained triangulations, Jansen \cite{Ja1993} proved that it is NP-complete to decide whether a plane geometric graph can be triangulated with degree at most $7$.
Hoffmann and Kriegel showed that a $2$-connected, bipartite, planar graph can be triangulated such that the resulting graph is $3$-colorable \cite{HoKr1996}, implying all vertices have even degrees.
Aichholzer et al.~studied plane graphs with parity constraints on the vertices \cite{AiHaHoPiRoSpVo2009,AiHaHoPiRoSpVo2014}, and showed that we can always find a plane tree, two-connected outerplanar graph, or a pointed pseudo-triangulation that satisfies all but at most three parity constraints.
For triangulations, they showed that about $2/3$ of the constraints can be satisfied. But in the worst case, there is a linear number of constraints that cannot be fulfilled.
Pel\'aez et al.~\cite{PeRaUr2010} improved the lower bound on the number of even degree vertices in a triangulation to around $4/5$ of the total number.
Aichholzer et al.~\cite{AiHaHoPiRoSpVo2014} showed that it is NP-complete to decide whether there exists a triangulation of a simple polygon with polygonal holes that satisfies all parity constraints.
Recently, Gewali and Gurung \cite{GeGu2018} have proposed a heuristic algorithm for triangulating a planar annular region with increased number of even degree vertices.

Alvarez \cite{Al2015} studied parity-constrained triangulations with Steiner points.
For a given set of points $P$, Alvarez showed how to construct a set of Steiner points $S$ such that a triangulation of $P \cup S$ can be always constructed such that all vertices in the triangulation are even (or odd).
At the same time, one might have to choose two of these Steiner points \emph{outside} the convex hull of $P$.
Further, this result does not apply to input polygons with polygonal holes.

We consider polyhedral complexes, which are more general than simplicial complexes, and are increasingly used in computational mathematics \cite{FlGiSu2014,GiRaBa2012,MuWaWa2014} and in robotics \cite{GaCa2013,GaMoAbMa2006}.
At the same time, degree constrained polyhedral complexes have not received much attention.

Comparisons of our Euler transformation to other subdivision schemes such as Catmull-Clark \cite{CatmullClark1978} and Doo-Sabin \cite{DoSa1978} were presented in a separate paper \cite{GuKrDr2020}.
While not related, Edelsbrunner's work on deformable smooth surfaces \cite{Ed1999} indicated some coincidental similarities to our work.
In particular, all interior vertices of the mixed cells graph shown in Figure 10 of this paper \cite{Ed1999} have degree 4, just as in the case of our \cref{fig:eulermeshillust}.
\begin{figure}[hbp!] 
    \centering
    \includegraphics[width=70mm,height=70mm]{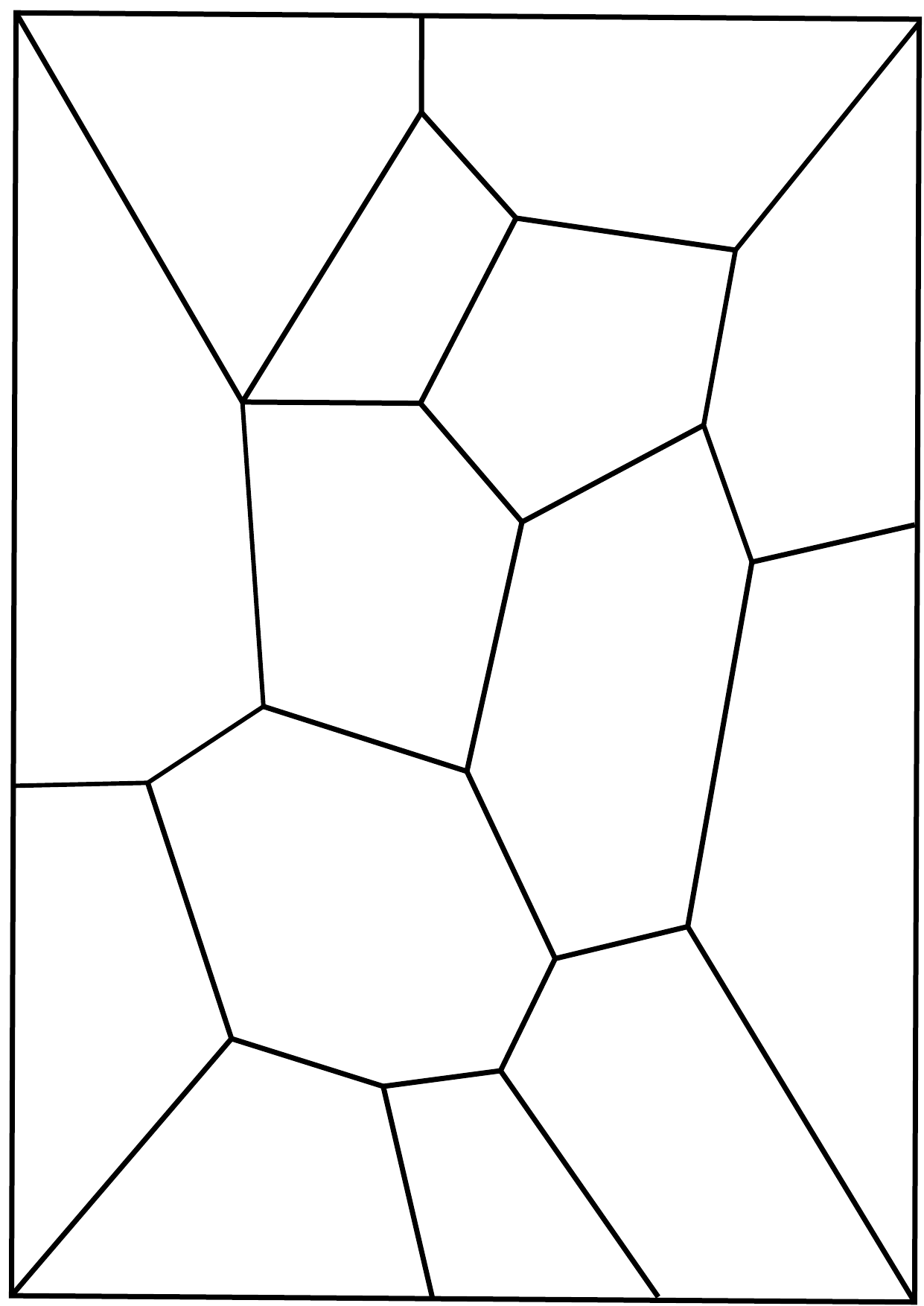}
    \quad\quad
    \includegraphics[width=70mm,height=70mm]{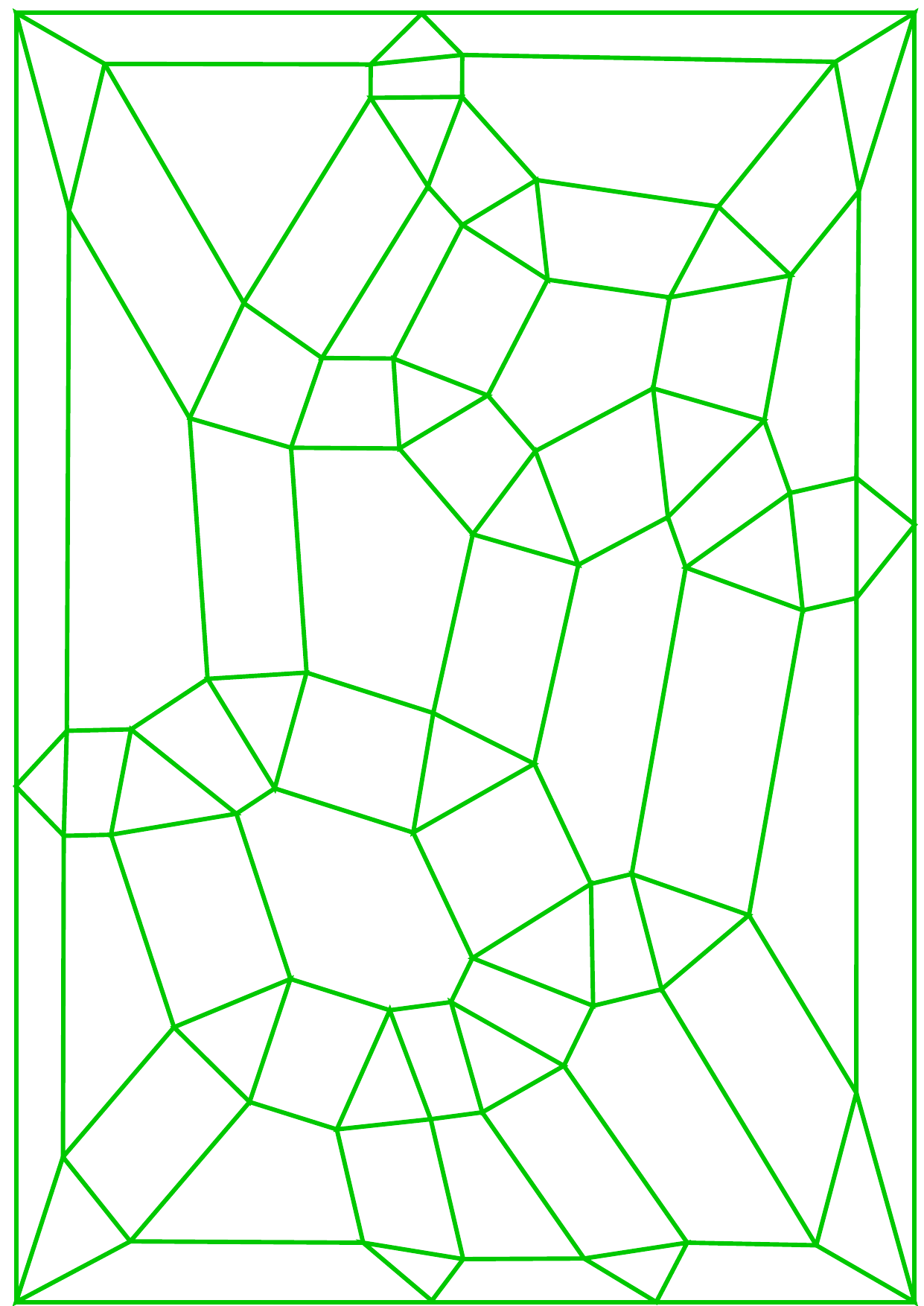}
    \caption{
      A $2$-complex $K$ in the plane (left) and its Euler transformation $\hK$ (right).
      Every vertex in $\hK$ has degree $4$.
    }
    \label{fig:eulermeshillust}
\end{figure}

\section{Notations, Definitions on Polyhedral Complexes} \label{sec:defns}

We present definitions that we use to specify properties of the input complex $K$ as well as the Euler transformed complex $\hK$.
See standard books on algebraic topology \cite{Hatcher2002,Munkres1984} for details.
For quick reference, we collect important notation used throughout the paper in Table \ref{tab:notation}.

\begin{table}[h!]
  \centering
  \caption{\label{tab:notation} Notations used in the paper, and their explanations.}
  \begin{tabular}{ll}
    \hline
    Notation & definition/interpretation \\    \hline
    \vspace*{-0.1in} \\
    $K, \hK$     & input complex and its transformed complex \\
    $|K|$        & underlying space of complex $K$\\
    $\CO, \hCO$  & single $d$-cell defining the \emph{outside}, before and after transformation\\
    $\CH, \hCH$  & collection of $d$-cells defining holes, before and after transformation \\
    $v_i, e_i, f_i, t_i$ & (generic) vertex, edge, polygon ($2$-cell), and $3$-cell in $K$\\
    $\hv_i, \he_i, \hf_i, \htt_i$ & (generic) cells of dimensions 0--3 in $\hK$\\
    $\hf, \hf_e, \hf_v$ & Class-$1$, $2$, $3$ cells in $\hK$ in 2D \\
    $\htt, \htt_f, \htt_e, \htt_v$ & Class-$1$, $2$, $3$, $4$ cells in $\hK$ in 3D\\
    $V, E, F, T$ & set of vertices, edges, polygon ($2$-cells), and $3$-cells in $K$ \\
    $\hV, \hE, \hF, \hT$ & sets of vertices, edges, polygon ($2$-cells), and $3$-cells in $\hK$\\
    $\rho$, $D$ & inradius and diameter of a cell in $K$\\
    $\hrho$, $\hD$ & inradius and diameter of a cell in $\hK$\\ 
    $\gamma(f), \gamma(\hf)$ & aspect ratios of cells $f \in K, \hf \in \hK$ \\
    $\lambda$, $\mu$ & aspect ratio parameters for a particular edge $\hK$ \\
    $\lambda^*$, $\mu^*$ & parameters over all edges in $\hK$ \\
    $|e_{\min}|, |e_{\max}|$ & minimum and maximum edge length in a $2$-cell in $K$ \\
    $|\he_{\min}|, |\he_{\max}|$ & minimum and maximum edge length in a $2$-cell in $\hK$\\
    $\theta_{\min}$ & minimum interior angle of a $2$-cell in $K$ \\
    $\alpha, \beta$ & maximum and minimum angles formed by edges of $\hf_v$ on vertex $v$\\
    \hline
  \end{tabular}
\end{table}

\begin{defn} \label{def:plyhdcplx}
  \emph{\bfseries (Polyhedral complex)}
  A polyhedral complex (also called polytopal complex) $K$ is a collection of \emph{polyhedra} (polytopes) in some Euclidean space $\R^d$ such that every face of a polyhedron in $K$ is also included in $K$, and the nonempty intersection of any two polyhedra in $K$ is a face of both.
  The polyhedra in $K$ are referred to as its \emph{cells}.
  The \emph{dimension} $d$ of a polyhedral complex $K$ is the largest dimension of any cell in $K$.
  In this case, we refer to $K$ as a $d$-complex.
\end{defn}

We will work with \emph{finite} polyhedral complexes, i.e., when the set of cells in $K$ is finite.
While some of our definitions and results apply in arbitrary dimensions, we concentrate mostly on \emph{full-dimensional} polyhedral complexes of dimensions $2$ and $3$, i.e., in $\R^2$ and $\R^3$, respectively.
We will follow the convention that cells up to dimension $3$ are referred to as polyhedra (higher dimensional versions are termed \emph{polytopes}).
Formally, a $d$-dimensional polyhedron (a $d$-cell) is homeomorphic to the closed $d$-dimensional Euclidean ball.
The $d$-cells of interest in this work are vertices ($d=0$), edges ($d=1$), polygons ($d=2$), and \emph{polyhedra} or $3$-cells ($d=3$).

Our definition of Euler transformation (in \cref{sec:eulertsfm}) as well as geometric realization results in $d=2,3$ (in \cref{sec:geomrlzn}) do not require the polyhedra in $K$ to be convex.
Note that vertices and edges are always convex, but polygons and $3$-cells could be nonconvex in our general setting.
Further, some cells in the Euler transformed complex $\hK$ may not to be convex.
But if we assume cells in $K$ are convex, then we can guarantee a large majority of cells in $\hK$ are so as well.
We assume cells in $K$ are convex when describing results on the geometric quality of cells in $\hK$ (in \cref{sec:geomqual}).

\begin{defn} \label{def:purecplx}
  \emph{\bfseries (Pure  complex)}
  A polyhedral $d$-complex is \emph{pure} if every $p$-cell in $K$ for $p < d$ is a face of some $d$-cell in $K$.
\end{defn}
Being pure means that all top-dimensional cells in $K$ have dimension $d$.
A pure $2$-complex has no ``isolated'' edges or vertices, for instance.
In other words, every edge is a face of some polygon in the complex.

We assume the input mesh $K$ is a finite, connected, pure $d$-complex in $\R^d$ for $d=2$ or $d=3$.
Along with $K$, we assume we are given a collection $\CH$ of $d$-cells that capture $d$-dimensional \emph{holes}, and a singleton set $\CO$ that contains a $d$-cell capturing the \emph{outside}.
To be precise, $\CH = \cup_i c_i$ where each $c_i$ is a $d$-cell that is \emph{not} part of $K$ but all $(d-1)$-cells that constitute its $d$-boundary, which is homeomorphic to a $(d-1)$-sphere, are present in $K$.
Note that $p$-cells for $p < d$ in the intersection of a $d$-cell in $K$ and a $d$-cell in $\CH$ or $\CO$ are precisely the boundary cells of $K$.
For technical reasons that we explain later, we make the following assumptions about intersections of full-dimensional cells in $K$, $\CH$, and $\CO$.
We denote the \emph{underlying spaces} of these objects as $|K|, |\CH|$, and $|\CO|$, respectively.
Note that $|\CH| = \cup_{c_i \in \CH} |c_i|$.

\begin{asmn} \label{asmn:Kholesoutside}
  The following conditions hold for the input complex $K$, the collection of holes $\CH$, and the outside cell $\CO$.
  \begin{enumerate}
    \item $|K| \cup |\CH| \cup |\CO| = \R^d$.
    \item \label{asmn:sepholes} $d$-cells in $\CH$ are pairwise disjoint, and are also disjoint from the $d$-cell that is $\CO$. 
    \item \label{asmn:holeintr} Any $d$-cell in $K$ and a $d$-cell in $\CH$ intersect in at most \emph{one} $(d-1)$-facet of both.
      See \cref{rem:adjbdyedges} for an explanation of the need for this assumption.
    \item  \label{asmn:outintr} No two $(d-1)$-cells that are \emph{adjacent facets} of a $d$-cell in $K$, i.e., they intersect in a common $(d-2)$ cell, intersect the $d$-cell that is $\CO$.
      Again, see \cref{rem:adjbdyedges} for an explanation.
  \end{enumerate}
\end{asmn}
\noindent Intuitively, the $d$-cells in $K, \CH$, and $\CO$ cover all of $\R^d$, and each $d$-cell in $\CH$ captures a separate hole that is also separate from the outside.

We point out that \emph{articulation} (or cut) vertices are allowed in $K$, i.e., vertices whose removal disconnects the complex (we assume $K$ is connected to start with).
Conditions specified in \cref{asmn:Kholesoutside} ensure such vertices are boundary vertices of $K$.
For instance, $K$ could consist of two copies of the complex shown on the left in \cref{fig:eulermeshillust} that meet at one of the four corner points.

\section{Definition of Euler Transformation} \label{sec:eulertsfm}

We define the Euler transformation $\hK$ of the input $d$-complex $K$ by explicitly listing the $d$-cells that are included in $\hK$.
Since we are working with cells (rather than simplices), we specify each $d$-cell by explicitly listing all $(d-1)$-cells that are its facets.
We denote vertices as $v$ (or $u, v_i$), edges as $e$ (or $e_i$), polygons or $2$-cells as $f$ (or $f_i$), and $3$-cells as $t$ (or $t_i$).
The corresponding cells in $\hK$ are denoted $\hv, \he, \hf, \htt$, and so on.
We first define the cells in $\hK$ \emph{abstractly}, and discuss aspects of geometric realization in \cref{sec:geomrlzn}.

\vspace*{-0.05in}
\subsection{Euler transformation for $d=2$} \label{ssec:euler2d}
\vspace*{-0.05in}

We start by \emph{duplicating} every polygon ($2$-cell) in $K \cup \CH \cup \CO$.
Since we do not want to alter the domain in $\R^d$ captured by $K$, we set $\hCH = \CH$ and $\hCO = \CO$.
But we ``shrink'' each polygon in $K$ when duplicating (see \cref{sec:geomrlzn} for details).
By the definition of $K$ and \cref{asmn:Kholesoutside}, this duplication results in each edge $e \in K$ being represented by two copies in $\hK$.

The polygons ($2$-cells) in $\hK$ belong to three classes, and correspond to the polygons, edges, and vertices in $K$ as described below.
See Figure \ref{fig:3types2cells2d} for illustrations of each class.
\begin{enumerate}
  \item \label{2dETcls1}
    For each polygon $f \in K$, we include $\hf \in \hK$ as the copy of $f$.

  \item \label{2dETcls2}
    Each edge $e \in K$ generates the $4$-gon ($4$-sided polygon) $\hf_e$ in $\hK$ specified as follows.
    Let $e = \{u,v\} \in f,f'$, where $f \in K$ and $f' \in K \cup \CH \cup \CO$.
    Then $\hf_e$ is the polygon whose facets are the four edges $\{\hu,\hv\}, \, \{\hv,\hv'\}, \, \{\hu',\hv'\}$, and $\{\hu,\hu'\}$.
    Here, $\hv, \hv'$ are the two copies of $v$ in $\hK$.
    Note that the edges $\he = \{\hu,\hv\}$ and $\he' =  \{\hu',\hv'\}$ are facets of the Class \ref{2dETcls1} polygons $\hf$ added to $\hK$ (as described above) or of the polygons $\hf'$ in $\hCH$ or $\hCO$.
    Edges $\{\hu,\hu'\}$ and $\{\hv,\hv'\}$ are added new.
    
  \item \label{2dETcls3}
    Each vertex $v \in K$ that is part of $p$ polygons in $K$ generates a $p$-gon (polygon with $p$ sides) $\hf_v$ in $\hK$ whose vertices and edges are specified as follows.
    Let $v \in f_k$ for $k=1,\dots,p$ in $K$.
    Then $\hf_v$ has vertices $\hv_k$, $k=1,\dots,p$, where $\hv_k$ is the copy of $v$ in $\hf_k$ (in $\hK$).
    For every pair of polygons $f_i,f_j \in \{f_k\}_1^p$ that intersect in an edge $e_{ij} \in K$, the edge $\he_{ij} = \{\hv_i, \hv_j\}$ is included as a facet of $\hf_v$.
    Note that edges $\he_{ij}$ are precisely the edges added new as facets of the Class \ref{2dETcls2} polygons described above.
\end{enumerate}
\begin{figure}[htp!]
  \centering
  \includegraphics[scale=0.2]{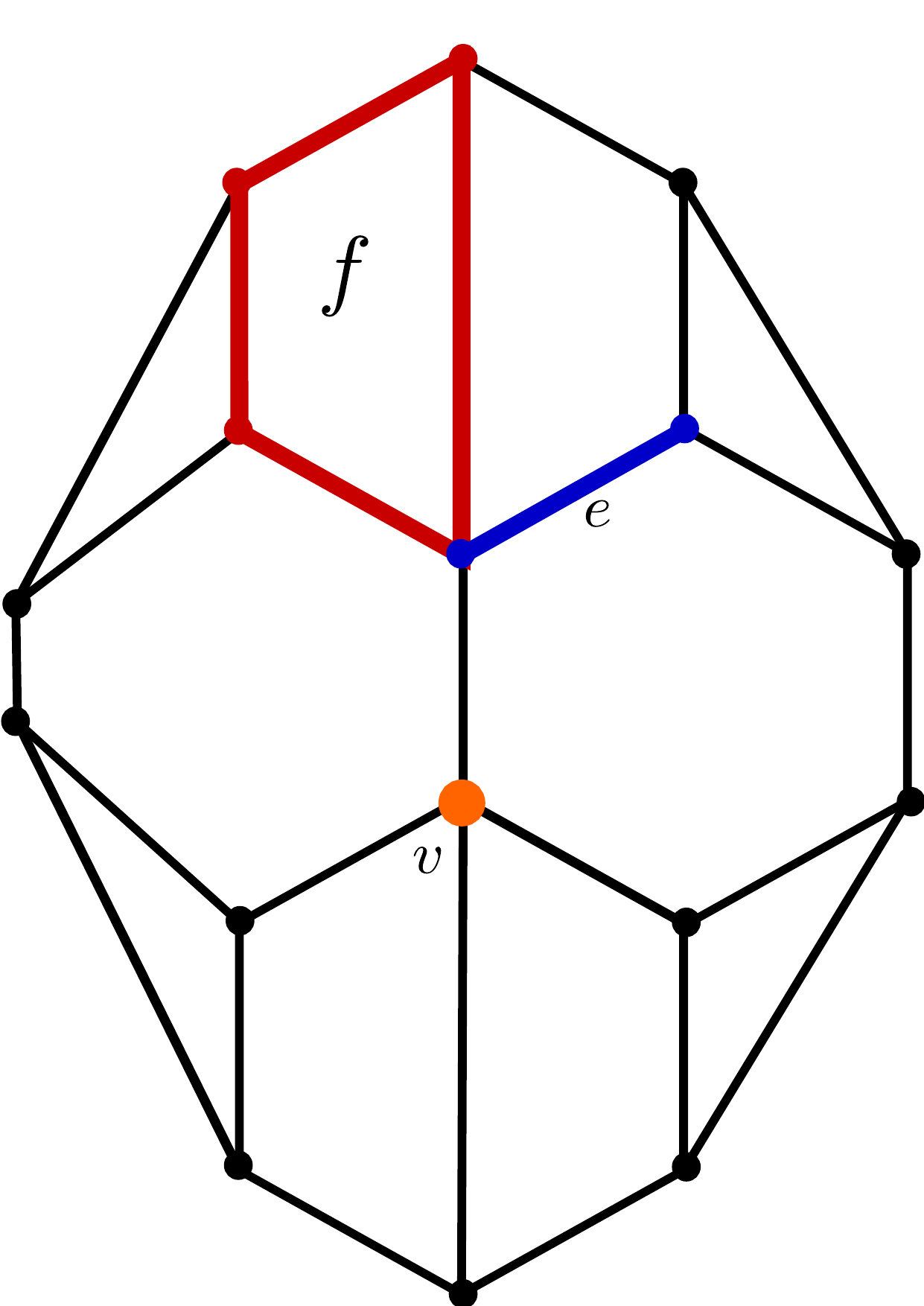}
  \quad
  \includegraphics[scale=0.2]{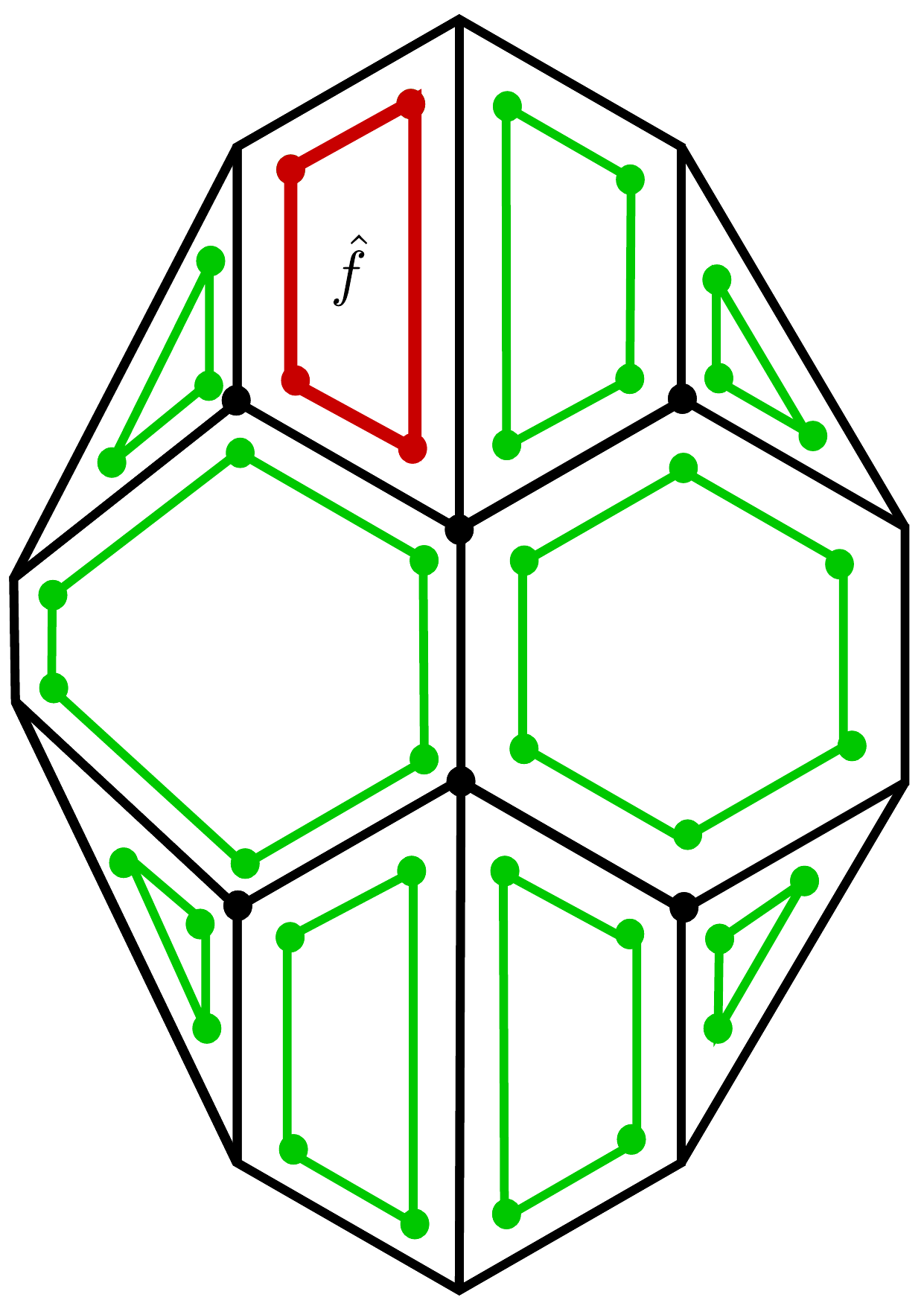}
  \quad
  \includegraphics[scale=0.2]{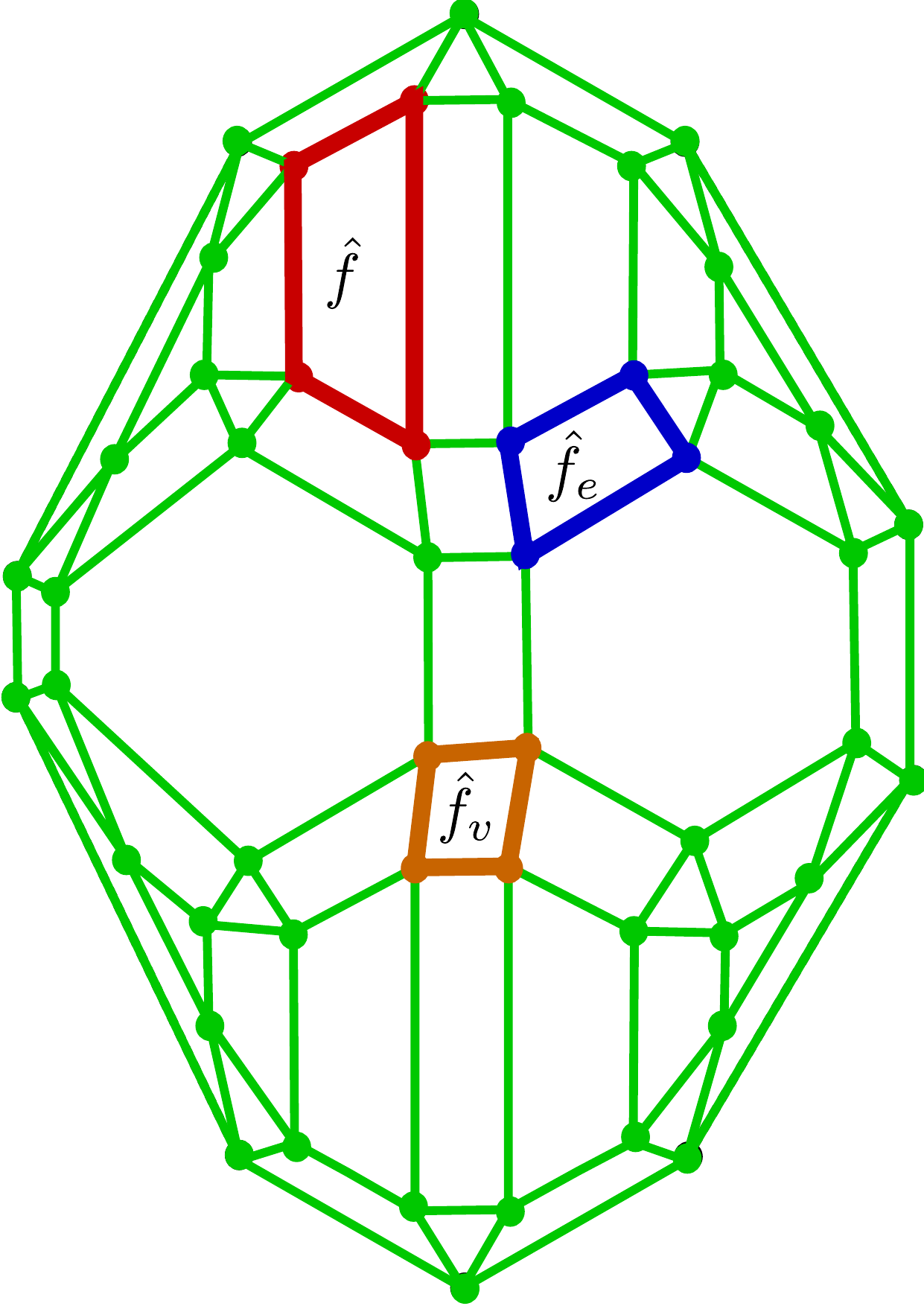}
  \caption{A polygon $f$, edge $e$, and a vertex $v$ highlighted in an input complex $K$ (left),
    an intermediate complex showing only the copies of original polygons in $K$ that are included in $\hK$, i.e., of Class \ref{2dETcls1} (middle),
    and the final Euler transformation $\hK$ (right).}
  \label{fig:3types2cells2d}
\end{figure}

\subsection{Euler transformation for $d=3$} \label{ssec:euler3d}

We start by duplicating every polyhedron ($3$-cell) in $K \cup \CH \cup \CO$.
Similar to the case of $d=2$, we set $\hCH = \CH$ and $\hCO = \CO$, but ``shrink'' each $3$-cell in $K$ when duplicating.
This duplication results in each polygon $f \in K$ being represented by two copies in $\hK$.

The $3$-cells in $\hK$ belong to four classes, and correspond to the $3$-cells, polygons, edges, and vertices in $K$ as described below.
%
\begin{enumerate}
  \item \label{3dETcls1}
    For each $3$-cell $t \in K$, we include $\htt \in \hK$ as the copy of $t$.
    
  \item \label{3dETcls2}
      Each $p$-sided polygon $f \in K$ generates a $3$-cell $\htt_f$ in $\hK$ with $p+2$ polygons as facets specified as follows.
      Let $f$ be a facet of $3$-cells $t, t'$ where $t \in K$ and $t' \in K \cup \CH \cup \CO$.
      Also, let $f$ consist of vertices $\{v_i\}_{i=1}^p$ and edges $e_i=\{v_i,v_j\}$ for $i=1,\dots,p-1,\,j=i+1$ and $i=p,j=1$.
      Then the polygons that are facets of $\htt_f$ include $\hf, \hf'$, and the $p$ $4$-gons $\hf_i$ whose edges consist of $\he_i,\he'_i, \{\hv_i,\hv_i'\}, \{\hv_j, \hv_j'\}$ for $i=1,\dots,p-1,\,j=i+1$ and $i=p,j=1$.
      Here, $\hv'_i, \he'_i$ are the vertex and edge in $\hf'$ generated by $v_i, e_i$ for each $i$, where $\hf'$ is the polygonal facet of $\htt'$ corresponding to $f$.
      And $\htt'$ is the Class \ref{3dETcls1} $3$-cell in $\hK$ (as defined above) corresponding to $t'$.

      We also note that the polygons $\hf,\hf'$ are already included as facets of $\htt, \htt'$, which are added to $\hK$ as Class \ref{3dETcls1} cells.
      The vertices $\hv_i, \hv'_i$ are included already as part of $\htt, \htt'$ as well.
      The $p$ $4$-gons $\hf_i$ are added \emph{new}, and so are the edges $\{\hv_i,\hv'_i\}$ for $i=1,\dots,p$.
      See Figure \ref{fig:3dETcls2} for an illustration with $p=5$.
      \begin{figure}[htp!]
        \centering
        \includegraphics[scale=0.2]{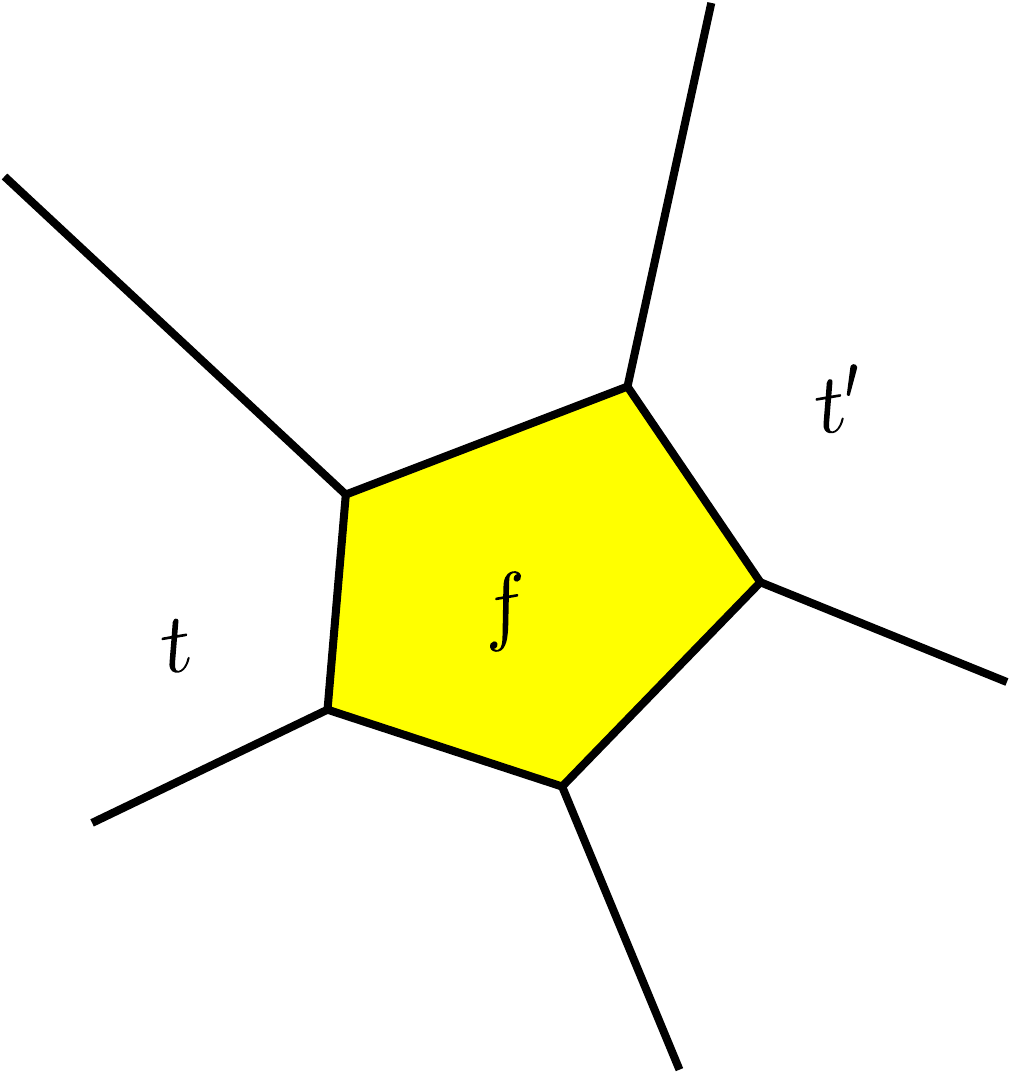}
        \hspace*{0.3in}
        \includegraphics[scale=0.2]{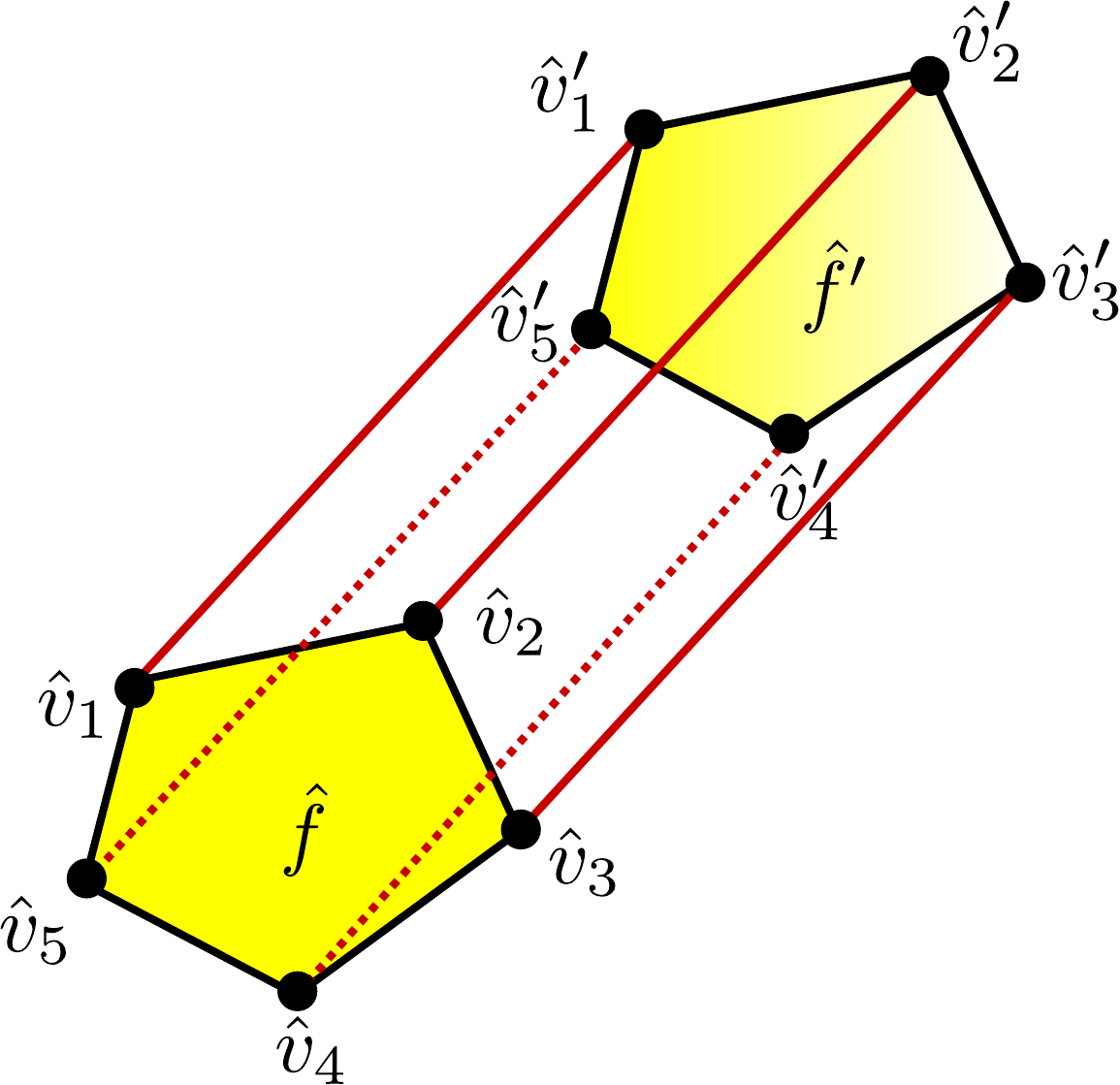}
        \caption{Illustration of a Class \ref{3dETcls2} $3$-cell $\htt_f$ (right) generated by the pentagon $f \in K$ (left).
          The $p=5$ $4$-gons are not shaded for clarity.}
        \label{fig:3dETcls2}
      \end{figure}
      
  \item \label{3dETcls3}
    Each edge $e \in K$ that is part of $q$ $3$-cells generates a $3$-cell $\htt_e$ in $\hK$ with $q+2$ polygons as facets specified as follows.
    Let $\{t_i\}_{i=1}^q$ be the $q$ $3$-cells that have $e = \{u,v\}$ as an edge, and
    let $\he_i = \{\hu_i, \hv_i\},\,i=1,\dots,q$ be the $q$ copies of the edge $e$ in $\htt_i$.
    We add a $4$-gon $\hf_i$ as a facet of $\htt_e$ for every pair of \emph{adjacent} $3$-cells $t_i,t_j$ from this collection, i.e., when $t_i \cap t_j = f_{ij}$ is a polygon that contains $e$ as an edge, for $i=1,\dots,q-1,\,j=i+1$ and $i=q,j=1$.
    The edges of this $4$-gon $\hf_i$ are $\he_i, \he_j, \{\hu_i, \hu_j\}$, and $\{\hv_i, \hv_j\}$.
    Note that there are $q$ such $4$-gons $\hf_i$ that are facets of $\htt_e$.
    Also note that these $4$-gons are already included as faces of the Class \ref{3dETcls2} $3$-cells $\htt_{f_{ij}}$ described above.

    Finally, we add two \emph{new} $q$-gons as facets of $\htt_e$ whose $q$ edges are $\{\hu_i,\hu_j\}$ and $\{\hv_i,\hv_j\}$, respectively, for  $i=1,\dots,q-1,\,j=i+1$ and $i=q,j=1$.
    \cref{fig:3dETcls3} illustrates a $q=5$ instance.
    At the same time, these $q$-gons cannot be guaranteed to be planar in all geometric realizations of $\hK$ (see \cref{rem:nonplnr3d}). 
    \begin{figure}[htp!]
      \centering
      \includegraphics[scale=0.2]{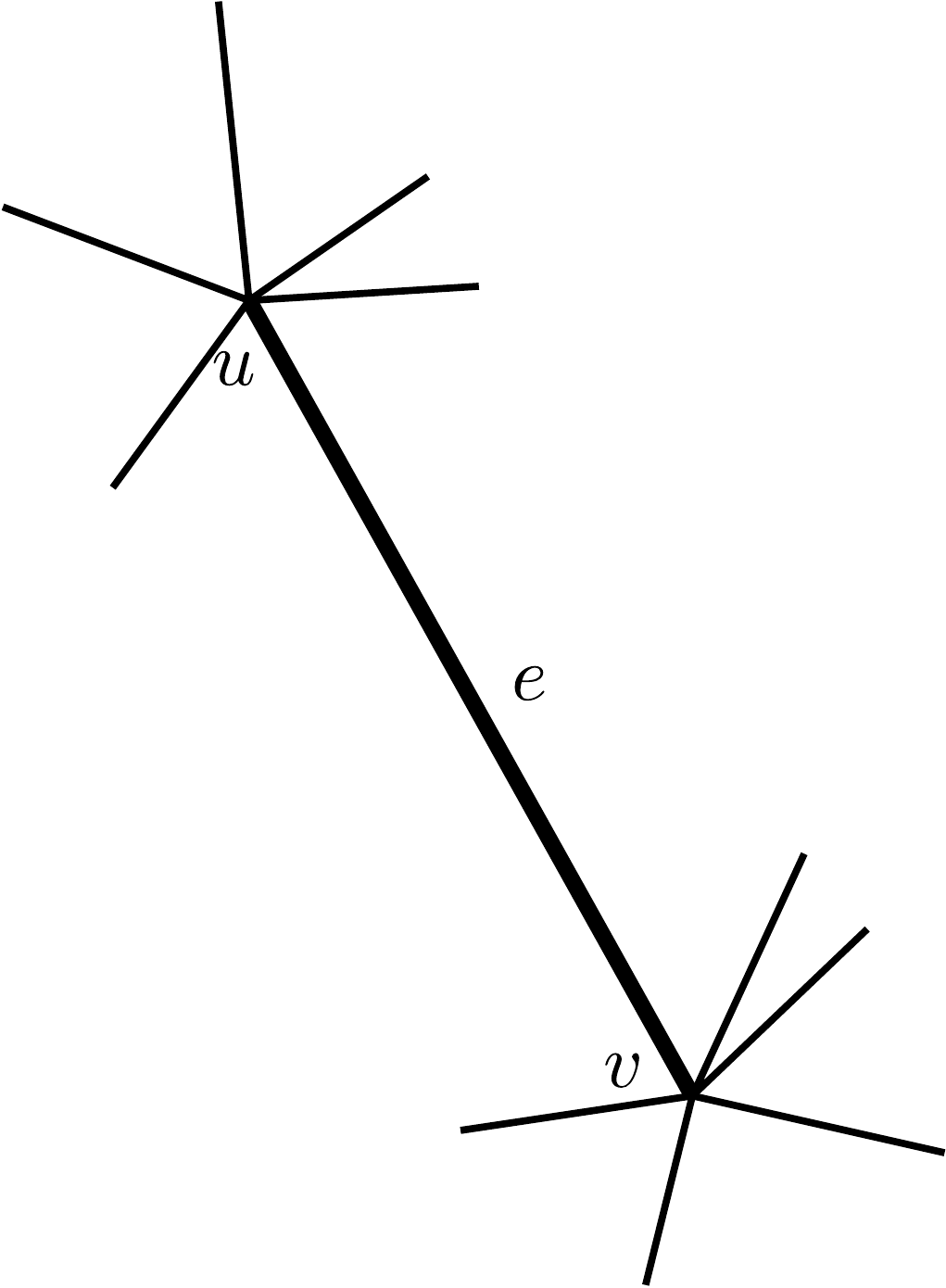}
      \hspace*{0.1in}
      \includegraphics[scale=0.2]{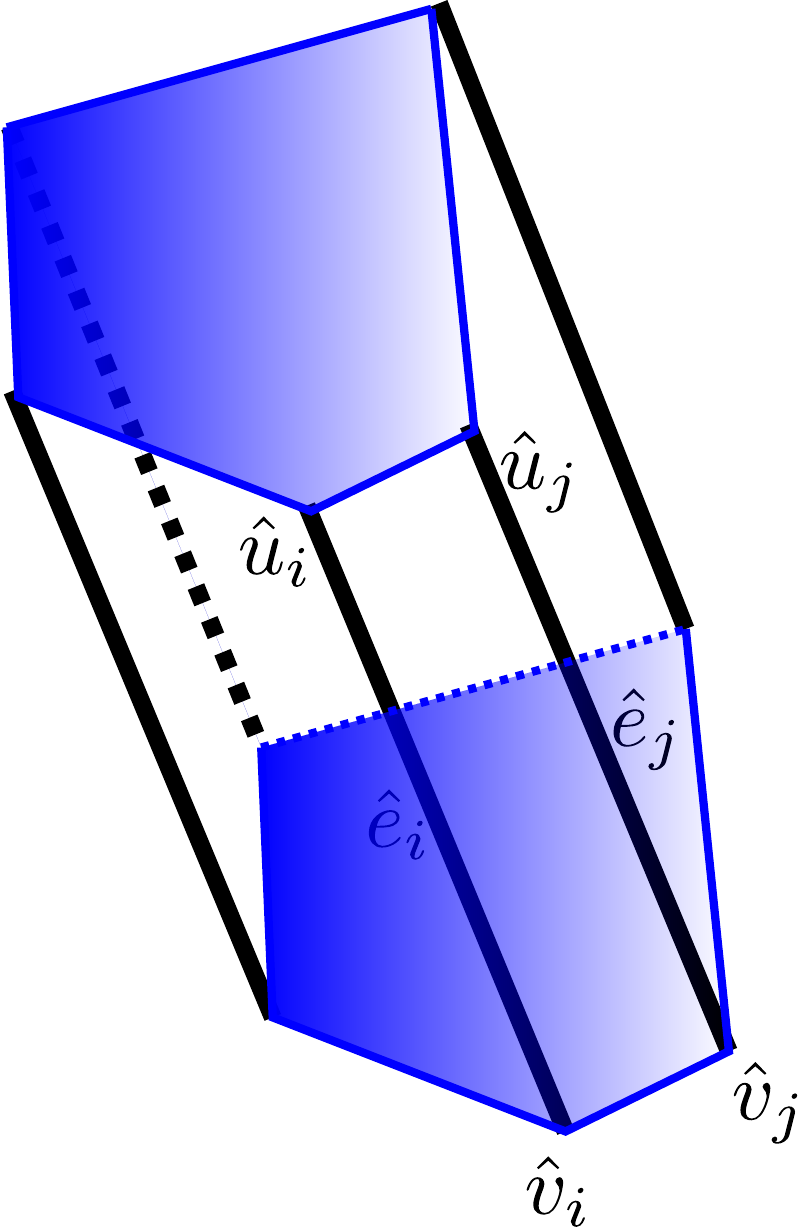}
      \caption{Illustration of a Class \ref{3dETcls3} $3$-cell $\htt_e$ (right) generated by the edge $e \in K$ shared by $5$ cells (left).
        The $q=5$ $4$-gons are not shaded for clarity.
      }
        \label{fig:3dETcls3}
    \end{figure}

  \item \label{3dETcls4}
    Each vertex $v \in K$ generates a $3$-cell $\htt_v$ in $\hK$ specified as follows.
    Let $v$ be a vertex in $r$ $3$-cells $\{t_k\}_{k=1}^r$, and let $\hv_k$ be the corresponding copy of $v$ in $\htt_k$ (in $\hK$) for $k=1,\dots,r$.
    Then the vertices of $\htt_v$ are precisely $\hv_1,\dots,\hv_r$.
    For every pair of \emph{adjacent} $3$-cells $t_i,t_j \in \{t_k\}_{k=1}^r$, i.e., when $t_i \cap t_j$ is a polygon that contains $v$ as a vertex, we add the edge $\{\hv_i,\hv_j\}$ to $\htt_v$.

    We then consider every subset $\sT \subset \{t_k\}_{k=1}^r$ of $3$-cells that intersect in an edge that has $v$ as one vertex.
    We add a polygon as a facet of $\htt_v$ that has as edges $\{\hv_i,\hv_j\}$ where $t_i,t_j \in \sT$ intersect in a polygon.
    We repeat this process for every such $\sT \subset \{t_k\}_{k=1}^r$ with $3 \leq |\sT| < r$.
    Note that each such polygon has already been added as a facet of the Class \ref{3dETcls3} $3$-cell generated by the edge common to the $3$-cells in collection $\sT$ (these are the $q$-gons described above).

\end{enumerate}

\begin{rem}
  Based on our definition of $\hK$, any two cells in $\hK$ intersect in a face of both cells, and all faces of cells are included in $\hK$.
  Hence $\hK$ is a polyhedral complex.
\end{rem}

\begin{rem}
 \cref{thm:deg4ET2d}, \cref{lem:cntshVhEhF2d}, \cref{thm:deg6ET3d}, and \cref{lem:cntshThVhEhf3d} hold independent of the geometric realization of the Euler transformed complex.
\end{rem}

\section{Geometric Properties of the Euler Transformed Complex} \label{sec:geomrlzn}

As Euler transformation adds new full-dimensional cells corresponding to polygonal facets, edges, and vertices, we \emph{offset} the cells added to $\hK$ as copies of the full-dimensional cells in $K$ in order to generate enough space to add the extra cells.
Intuitively, we ``shrink'' each of the full-dimensional cells in $K$ in order to produce cells in $\hK$ that are geometrically similar to the input cells.
We use standard techniques for producing offset polygons in 2D, e.g., \emph{mitered offset} generated using the straight skeleton (SK) of the input polygon \cite{AiAuAlGa1995}.

\begin{defn} \label{def:miteredoffset}
  \emph{\bfseries (Mitered offset \cite{AiAuAlGa1995})}
  The mitered offset of a polyhedron $P$ with offset distance $b$ is the polyhedron obtained by moving each facet of $P$ toward its interior in the direction of the normal to the supporting plane of the facet and where the resulting facet has a supporting plane at a distance $b$ along the normal. 
\end{defn}

For the $d=2$ case, we define the cell offset as a mitered offset of the polygon (i.e., choice of $b$ in \cref{def:miteredoffset}) that creates no \emph{combinatorial} or \emph{topological changes}---i.e., no edges are shrunk to points, and the polygon is not split into multiple polygons.
On the other hand, we have studied the case of mitered offset with \emph{combinatorial} or \emph{topological changes} in $d=2$ separately \cite{GuKrDr2020}.

Unlike for polygons, parallel offsetting of a polyhedron ($d=3$) is not defined uniquely in general \cite{BaEpGoVa2008}.
Although a unique offset polyhedron could be constructed for orthogonal polyhedra or for convex polyhedra \cite{BaEpGoVa2008,MaViPl2011}, shrinking a generic polyhedron goes through continuous geometrical changes, combinatorial changes, as well as topological changes (e.g., breaking into multiple polyhedra).
In fact, offsetting vertices with degree $4$ or more in the $1$-skeleton of the polyhedron can produce multiple vertices even with an infinitesimal shrinkage \cite{AuWa2013,AuWa2016}.
Hence we assume in the case of $d=3$ that each vertex in the input cell complex $K$ has degree $3$ in the $1$-skeleton of each cell that it is part of, on top of the requirements in \cref{asmn:Kholesoutside}
(see \cref{rem:3dETdeg4vtx} for how one may deal with a complex where this assumption does not hold).
Under this assumption, we define the cell offset of a polyhedron as a mitered offset that creates no combinatorial or topological changes (similar to the case of polygons in $d=2$).

Naturally, we do not want to alter the domain modeled by the cell complex $K$, i.e., its underlying space $|K|$.
Hence we maintain the cells in $\CH$ and $\CO$, i.e., these cells are included in $\hK$ without any changes.
Since every top-dimensional cell is offset, the new cells in $\hK$ are \emph{fit} within the extra space created by offsetting each top-dimensional cell.

\subsection{Geometric Realization in $d=2$} \label{ssec:rlzn2d}

We state and prove several properties of the geometric realization of the Euler transformed polygon complex $\hK$ in $d=2$.
We restrict our discussion to the cases where $2$-cells in $\hK$ are planar and edges are straight lines.
We start with the main result---every vertex in $\hK$ has degree $4$ in its $1$-skeleton.

\begin{thm} \label{thm:deg4ET2d}
  Every vertex in $\hK$, the Euler transformation of the $2$-complex $K$, has degree $4$ in the $1$-skeleton of $\hK$.
\end{thm}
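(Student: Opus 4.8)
The plan is to show that every vertex of $\hK$ is a copy $\hv$ of some vertex $v \in K$ lying in a single cell $c \in K \cup \CH \cup \CO$ incident to $v$, and that the edges at $\hv$ split cleanly into exactly two \emph{radial} edges (lying inside the cell $c$ containing $\hv$) and exactly two \emph{circumferential} edges (joining $\hv$ to the copies of $v$ in the two neighboring cells across the two edges of $c$ at $v$). Summing these gives degree $4$ uniformly.

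First I would enumerate the vertices of $\hK$. By the duplication step, each vertex $v \in K$ contributes one shrunk copy $\hv_f$ for every polygon $f \in K$ with $v \in f$; in addition, every boundary vertex retains its original (unshrunk) incarnation on the cells of $\hCH = \CH$ and $\hCO = \CO$, which are kept unchanged. Since $|K| \cup |\CH| \cup |\CO| = \R^2$ and the boundary edges of the hole and outside cells all lie in $K$, these account for every vertex of $\hK$. I would then fix a copy $\hv$ of $v$ sitting in a cell $c$ and count its incident edges. The two edges of $c$ incident to $v$ produce the two radial edges at $\hv$: for a shrunk copy these are the mitered-offset images of the two boundary edges of $v$ in the polygon $f$, appearing as Class~\ref{2dETcls1} facets of $\hf$, whereas for an original copy they are the two original boundary edges of $v$, retained as facets of the cell in $\hCH$ or $\hCO$. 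Across each such edge $e \ni v$ lies exactly one neighboring cell $c'$, and the Class~\ref{2dETcls2} polygon $\hf_e$ contributes the new edge joining $\hv$ to the copy of $v$ in $c'$; these are precisely the circumferential edges, equivalently the facets of the Class~\ref{2dETcls3} polygon $\hf_v$. This gives two circumferential edges, and since radial edges are intra-cell while circumferential edges are inter-cell, all four are distinct.

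I would then verify that this uniform count specializes correctly in each geometric situation. For an interior vertex, all $p$ copies are shrunk and $\hf_v$ is a genuine $p$-cycle, so each copy receives $2 + 2$ edges. For a boundary vertex whose incident polygons form a fan, the two ``end'' copies each receive one circumferential edge from $\hf_v$ and one from the Class~\ref{2dETcls2} polygon on the incident boundary edge, again totalling $4$; the interior copies of the fan behave as in the interior case; and the original copy of $v$ receives its two radial edges along $\partial \CO$ (or $\partial \CH$) together with two circumferential edges joining it to the two end copies.

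The step I expect to be the main obstacle is guaranteeing that the two circumferential edges at each copy are genuinely \emph{distinct}, and correctly handling boundary and articulation vertices. Distinctness can fail only if the two edges of $c$ at $v$ border the \emph{same} neighboring gap cell, i.e.\ if some polygon has two adjacent boundary edges meeting at $v$; this is exactly what \cref{asmn:Kholesoutside}(\ref{asmn:outintr}) forbids, so each copy has two distinct circumferential edges and the degree is exactly $4$ rather than $3$. The genuinely delicate case is an articulation (cut) vertex, where the outside cell meets $v$ in two separate angular wedges: here one must argue that $v$ yields two independent copies in $\hK$, one per wedge, each of degree $4$, rather than a single copy of degree $8$. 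I would settle this by carrying out the radial/circumferential count locally within each wedge of the cyclic sequence of cells around $v$, invoking \cref{asmn:Kholesoutside} to ensure the wedges are separated by cells of $K$ and that each boundary edge borders exactly one gap.
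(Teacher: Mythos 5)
Your proposal is correct and follows essentially the same route as the paper's proof: each vertex of $\hK$ is a copy $\hv$ of some $v\in K$ inside a single cell of $K\cup\CH\cup\CO$, receiving two edges from that (offset or unchanged) cell's boundary and two new edges from the Class~\ref{2dETcls2} polygons generated by the two edges of the cell at $v$. Your additional care about the distinctness of the two new edges (via Condition~\ref{asmn:outintr} of \cref{asmn:Kholesoutside}) and about articulation vertices is sound; the paper defers these points to \cref{rem:adjbdyedges} and \cref{prop:hKcnctd2d} rather than treating them inside the proof.
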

\begin{proof}
  Consider a vertex $v$ shared by adjacent edges $e_1, e_2 \in f$, where $f \in K \cup \CH \cup \CO$ is a polygon.
  Following \cref{asmn:Kholesoutside}, the edges $e_1$ and $e_2$ are shared by exactly two cells each from the input complex,  holes, or the outside cell.
  Let $f'_1, f'_2$ be the other polygons containing edges $e_1, e_2$, respectively (with $f$ being the first polygon).

  Consider the vertex $\hv \in \hK$ generated as part of $\hf$, as specified in the Euler transformation (\cref{ssec:euler2d}).
  $\hf$ is a mitered offset of $f$ when $f \in K$, or is identical to $f$ when it belongs to $\CH \cup \CO$.
  Hence $\hf$ is a simple polygon in both cases, and $\hv$ is part of two edges $\he_1, \he_2 \in \hf$.
  Further, $\hv$ will be part of two more edges $\{\hv,\hv'_1\}$ and  $\{\hv,\hv'_2\}$  added as facets of the Class \ref{2dETcls2} polygons generated by $e_1, e_2$.
  Here $\hv'_i \in \he'_i \in \hf'_i$ for $i=1,2$.
  Hence $\hv$ has degree $4$ in the $1$-skeleton of $\hK$.
\end{proof}

\begin{rem}
  \label{rem:adjbdyedges}
  {\rm 
    We show why we require the input complex to satisfy Conditions \ref{asmn:holeintr} and \ref{asmn:outintr} in \cref{asmn:Kholesoutside}, which require that no two adjacent edges of a polygon in $K$ can be boundary edges.
    Consider the input complex $K$ consisting of a single square, whose four edges are shared with the outside cell $\CO$.
    If we apply the Euler transformation as specified in \cref{ssec:euler2d}, every vertex in the output complex  $\tilde{K}$ will have the odd degree of $3$, as shown in Figure \ref{fig:outintr}.
    But if we apply the Euler transformation \emph{once more} to $\tilde{K}$, we do get a valid complex $\hK$ with each vertex having degree $4$.
    Note that $\tilde{K}$ does satisfy Condition \ref{asmn:outintr}, and hence becomes a valid input.
    \begin{figure}[htp!]
      \centering
      \includegraphics[scale=0.2]{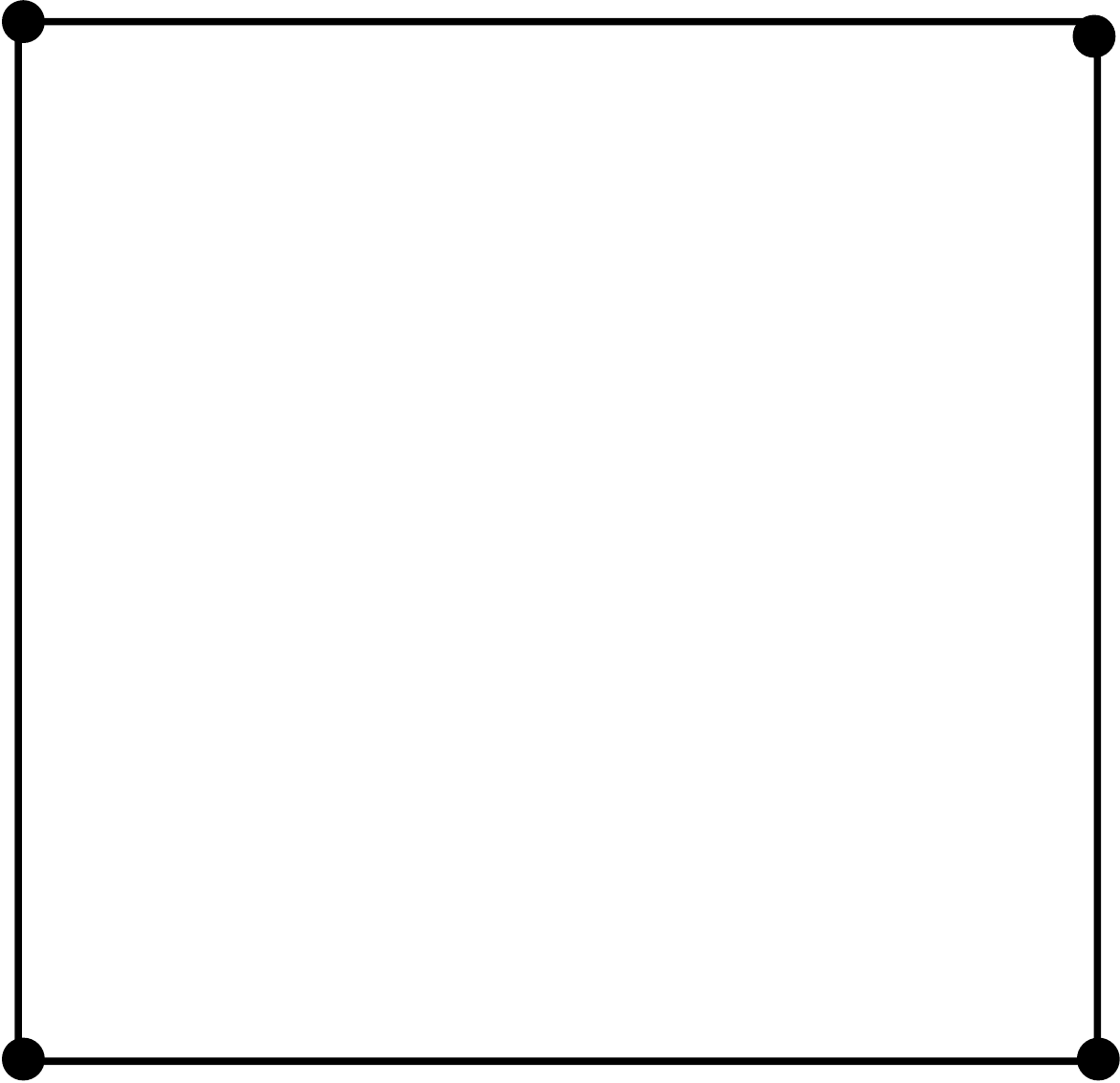}
      \quad
      \includegraphics[scale=0.2]{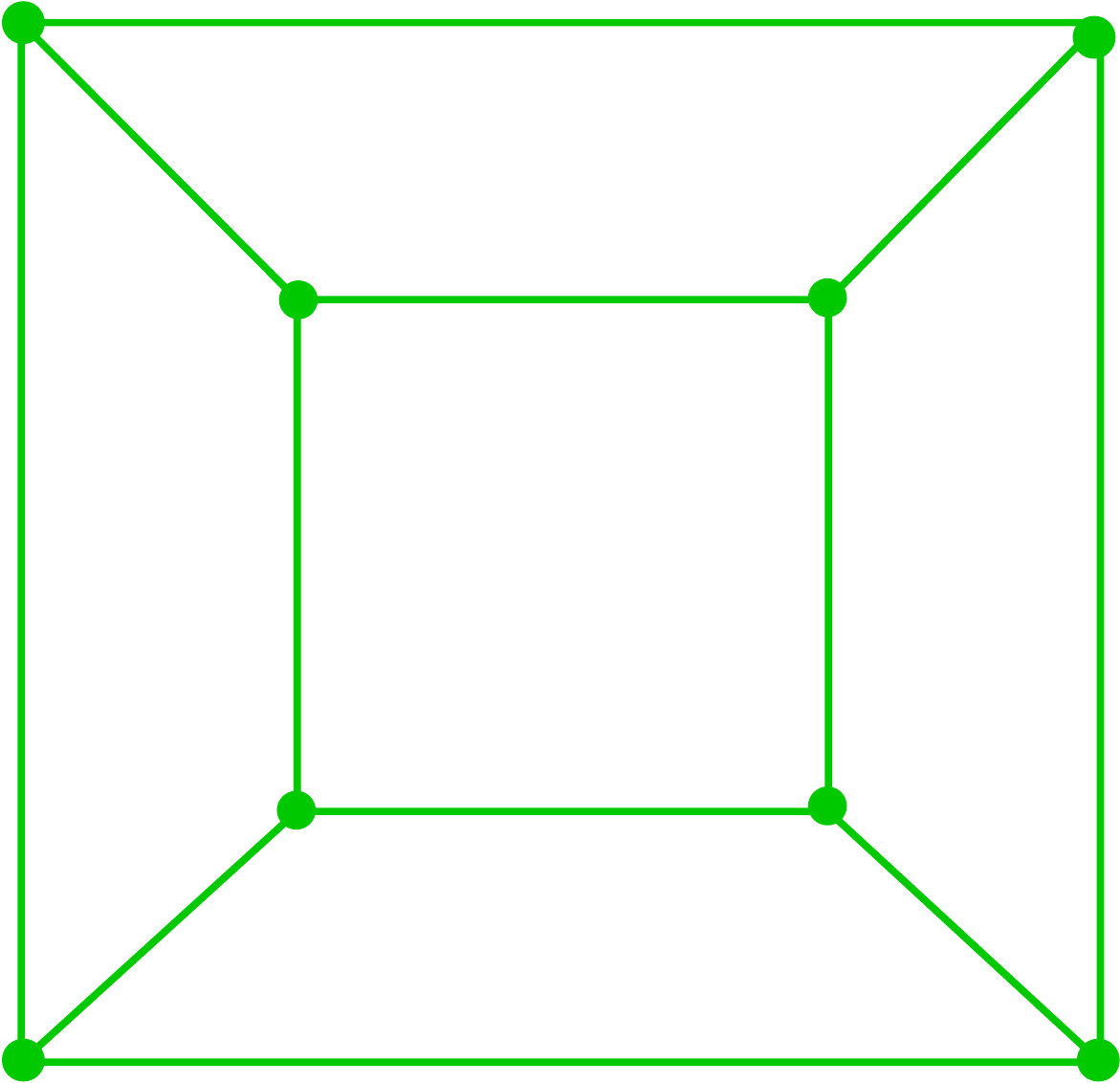}
      \quad
      \includegraphics[scale=0.2]{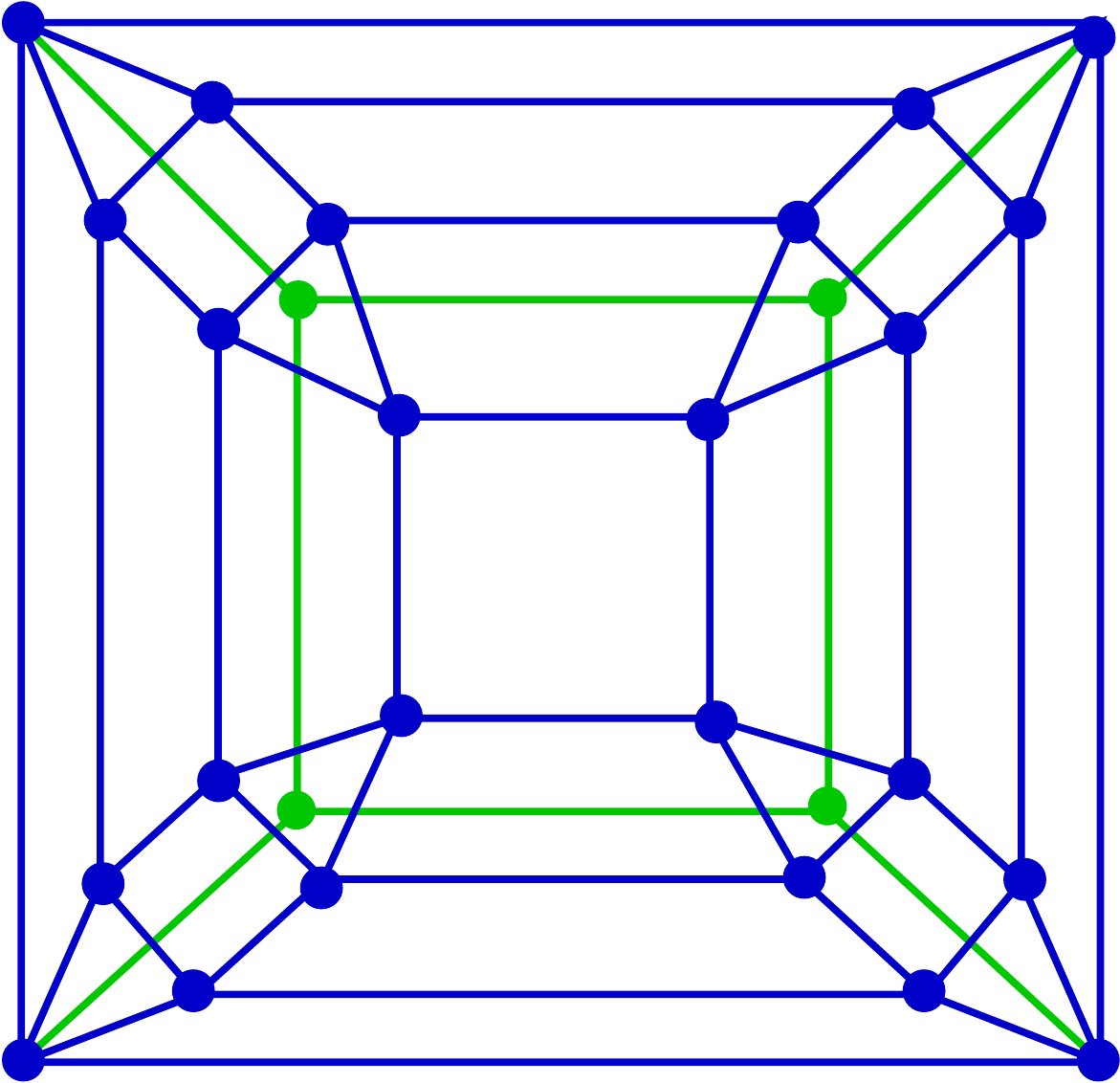}
      \caption{Applying the Euler transformation to the square $K$ whose more than one adjacent edge is shared with the outside (left) produces a complex $\tilde{K}$ in which every vertex has odd degree (middle).
      Applying the Euler transformation again to $\tilde{K}$ produces a valid complex $\hK$ where every vertex has degree $4$ (right).}
      \label{fig:outintr}
    \end{figure}
  }
\end{rem}

\begin{lem}
  \label{lem:cntshVhEhF2d}
  Let $V, E, F$ denote the sets of vertices, edges, and polygons (faces) in $K$, and let $\hV, \hE, \hF$ denote the corresponding sets in $\hK$.
  The following relations hold for the cardinalities of these sets: $|\hV| = 2|E|,~|\hE| = 4|E|$, and $|\hF| = |V| + |E| + |F|$.
\end{lem}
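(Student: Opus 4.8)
The plan is to establish each of the three cardinality relations by directly counting the cells created in each of the three classes of the Euler transformation for $d=2$, as specified in \cref{ssec:euler2d}. The counting is a bookkeeping exercise, so the main work is to make sure each new cell is counted exactly once and that the boundary cells (those shared with $\CH$ or $\CO$) are handled correctly so that no double-counting or omission occurs.

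First I would count the vertices $|\hV|$. By the Euler transformation, each polygon $f \in K \cup \CH \cup \CO$ is duplicated (shrunk for $f \in K$), and the vertices of $\hK$ are precisely the copies of the original vertices appearing in these polygon copies. A cleaner way to count is to observe that the copies of a vertex $v$ are in bijection with the polygon-corners at $v$, but the most direct route uses the fact established in the duplication step: each edge $e \in K$ is represented by exactly two copies in $\hK$. Since every vertex $\hv \in \hK$ is an endpoint of exactly the edges $\he,\he'$ that are copies of the original edges (the Class~\ref{2dETcls1} edges), and by \cref{thm:deg4ET2d} each $\hv$ lies on exactly two such copy-edges, I would pair up vertices with edge-copies. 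Each copy-edge $\he$ has two endpoints, giving $2 \cdot (2|E|)$ incidences; each vertex accounts for $2$ such incidences, so $|\hV| = 2 \cdot 2|E| / 2 = 2|E|$. Alternatively, and perhaps more transparently, I would note that each original edge $e=\{u,v\}$ contributes via its two copies exactly the four vertices $\hu,\hv,\hu',\hv'$ that are the corners of the Class~\ref{2dETcls2} $4$-gon $\hf_e$, but these are shared across adjacent edges; the bijection with copy-edge endpoints avoids this sharing issue and is the route I would commit to.

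Next I would count the edges $|\hE|$. The edges of $\hK$ are of two kinds: the $2|E|$ copies of original edges (two per $e \in K$), and the ``new'' edges $\{\hv,\hv'\}$ added as the non-copy facets of the Class~\ref{2dETcls2} $4$-gons. Each $4$-gon $\hf_e$ contributes exactly two new edges, namely $\{\hu,\hu'\}$ and $\{\hv,\hv'\}$, and these new edges are not shared between distinct $\hf_e$'s (each is the gap-edge straddling the two copies of a single original endpoint along a single original edge). Since there is one $4$-gon per edge $e \in K$, this yields $2|E|$ new edges. Adding the $2|E|$ copy-edges gives $|\hE| = 2|E| + 2|E| = 4|E|$. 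A consistency check: by \cref{thm:deg4ET2d} every vertex has degree $4$, so the handshake count gives $2|\hE| = 4|\hV| = 4 \cdot 2|E| = 8|E|$, i.e.\ $|\hE| = 4|E|$, confirming the direct count.

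Finally, for $|\hF|$ I would simply tally the three classes: Class~\ref{2dETcls1} contributes one polygon $\hf$ per polygon $f \in K$, giving $|F|$; Class~\ref{2dETcls2} contributes one $4$-gon $\hf_e$ per edge $e \in K$, giving $|E|$; and Class~\ref{2dETcls3} contributes one $p$-gon $\hf_v$ per vertex $v \in K$, giving $|V|$. Hence $|\hF| = |V| + |E| + |F|$. The subtle point I expect to be the main obstacle is verifying that the Class-$2$ and Class-$3$ cells are correctly indexed by \emph{original} edges and vertices and not by their copies or by cells in $\CH,\CO$: the construction creates Class-$2$ $4$-gons from every edge $e$ with $e \in f, f'$ where $f \in K$, and Class-$3$ polygons only from vertices $v \in K$, so I must confirm using \cref{asmn:Kholesoutside} (particularly that duplication gives exactly two copies per edge and that holes/outside cells are copied without generating extra Class-$2$/$3$ cells) that each original edge and vertex generates exactly one such cell. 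Once this indexing is pinned down, all three counts follow by addition, and the handshake identity serves as an internal sanity check tying the three together.
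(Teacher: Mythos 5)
Your proposal is correct and follows essentially the same route as the paper: the paper also obtains $|\hV|=\sum_v \delta(v)=2|E|$ by a double count using the pairwise disjointness of the mitered offset polygons, gets $|\hE|=4|E|$ from the handshake identity with \cref{thm:deg4ET2d} (which you include as your consistency check, alongside an equivalent direct count of copy-edges plus new edges), and gets $|\hF|=|V|+|E|+|F|$ from the one-cell-per-original-cell correspondence of the three classes. The only cosmetic difference is that you organize the vertex count around incidences with copy-edges rather than around the original vertices' degrees; both reduce to the same double-counting argument.
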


\begin{proof}
  Let $\delta(v)$ denote the degree of vertex $v \in K$, and let $\hf_v$ be the polygon generated by $v$ in $\hK$.
  This is a polygon of Class \ref{2dETcls3} specified in the definition of Euler transformation for $d=2$ (Section \ref{ssec:euler2d}).
  Following \cref{asmn:Kholesoutside} about $K, \CH, \CO$, it is clear that when $v$ belongs to $p$ polygons in $K$, we must have $\delta(v)=p$ and $\hf_v$ has $p$ vertices.
  Since each cell $\hf$ corresponding to polygon $f \in K$ is a mitered offset, and since each vertex $\hv$ is part of one such offset polygon, it follows that $\hf_u \cap \hf_v = \emptyset$ for any two vertices $u,v \in K$.
  Hence we get
  \[
    |\hV| = \sum_{v \in K} \delta(v) = 2|E|.
  \]
  By \cref{thm:deg4ET2d}, each vertex $\hv \in \hK$ has degree $\hat{\delta}(\hv) = 4$ in $\hK$.
  Combined with the result above on $|\hV|$, we get that
  \[
    |\hE| = \frac{1}{2} \sum_{\hv \in \hK} \hat{\delta}(\hv) = \frac{1}{2} \cdot 2|E| \cdot 4 = 4|E|.
  \]
  Following the definition of Euler transformation (Section \ref{ssec:euler2d}), each polygon, edge, and vertex in $K$ generate corresponding unique polygons in $\hK$ belonging to three classes.
  Hence we get $|\hF| = |F| + |E| + |V|$.
\end{proof}

\begin{lem}
  \label{lem:hGplanar}
  Let $\hG$ denote the graph that is the $1$-skeleton of $\hK$. Then $\hG$ is planar.
\end{lem}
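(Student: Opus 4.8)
The plan is to prove planarity \emph{constructively}, by reading off a crossing-free drawing of $\hG$ directly from the geometric realization of $\hK$ in $\R^2$. By \cref{def:grealize2d}, every $2$-cell of $\hK$ is realized as a planar polygon and every edge as a straight line segment; in particular all vertices of $\hK$ are points of $\R^2$ and all edges are segments joining them. Because $\hK$ is a polyhedral complex (as recorded at the end of \cref{sec:eulertsfm}), any two of its cells meet only in a common face. Applied to the $1$-skeleton, this says that two edges $\he,\he'$ of $\hK$ are either disjoint or share exactly one endpoint $\hv$, so their relative interiors never meet. A straight-line drawing in which distinct vertices are distinct points and edges meet only at shared endpoints is, by definition, a plane drawing, and a graph admitting such a drawing is planar. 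Thus the whole statement reduces to verifying that the realization of $\hK$ actually places its cells in $\R^2$ with pairwise disjoint interiors.

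To that end I would account for all three families of $2$-cells and check they tile $\R^2$ without overlap. The realization covers exactly $|K|\cup|\CH|\cup|\CO|=\R^2$. The Class \ref{2dETcls1} cells are the mitered offsets $\hf$ of the polygons $f\in K$ together with the unaltered copies of the cells in $\CH,\CO$; since the offset is required to introduce no combinatorial or topological change, each $\hf$ is a simple polygon lying strictly inside $f$, and distinct offsets are pairwise disjoint. This frees a strip between $\partial f$ and $\partial\hf$ along each edge of $f$, and these strips are where the new cells go. For an edge $e=\{u,v\}$ shared by $f$ and $f'$, the Class \ref{2dETcls2} $4$-gon $\hf_e$ occupies the quadrilateral region spanned by the two offset copies $\he,\he'$ of $e$; Condition \ref{asmn:holeintr} of \cref{asmn:Kholesoutside} guarantees exactly one such strip per edge, so these $4$-gons are mutually disjoint and disjoint from the offset polygons. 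For a vertex $v$ lying in $p$ polygons, the Class \ref{2dETcls3} $p$-gon $\hf_v$ fills the remaining gap at $v$, bounded by the new edges $\{\hv_i,\hv_j\}$ already introduced along the incident Class \ref{2dETcls2} cells; Condition \ref{asmn:outintr} of \cref{asmn:Kholesoutside} (no two adjacent edges of a polygon are boundary edges) ensures this gap closes up into a single simple polygon rather than a degenerate or disconnected region. Together the three families partition $\R^2$ into polygons with disjoint interiors, which is exactly the statement that $\hK$ is embedded in the plane.

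The main obstacle is this last non-overlap verification: one must argue that the offset strips are geometrically wide enough, and that the Class \ref{2dETcls2} and Class \ref{2dETcls3} cells can be fit into them without crossing each other or intruding on any offset polygon. This is where the mitered-offset construction and the intersection conditions of \cref{asmn:Kholesoutside} do the real work. Once disjointness of all realized cells is established, planarity of $\hG$ is immediate: the shared-face property of a polyhedral complex embedded in $\R^2$ forbids any two straight edges from crossing in their interiors, so the realization \emph{is} a plane embedding of $\hG$, and $\hG$ is therefore planar.
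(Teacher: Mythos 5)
Your proposal is correct and follows essentially the same route as the paper's proof: both verify class by class that the Class \ref{2dETcls1} mitered offsets are pairwise disjoint simple polygons, that the Class \ref{2dETcls2} $4$-gons occupy the strips between adjacent offsets (meeting only at vertices when the generating edges are adjacent), and that the Class \ref{2dETcls3} cells are simple and pairwise disjoint, then conclude that the straight-line realization is a plane embedding of $\hG$. The ``main obstacle'' you flag (the detailed non-overlap verification) is treated at the same level of informality in the paper itself, so nothing essential is missing relative to the published argument.
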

\begin{proof}
  By the definition of Euler transformation (\cref{ssec:euler2d}), and since each polygon $\hf \in \hK$ generated by the polygon $f \in K$ is a mitered offset and hence a simple closed polygon, any two polygons $\hf, \hf' \in \hK$ of Class \ref{2dETcls1} generated by polygons $f,f' \in K$ satisfy $\hf \cap \hf' = \emptyset$.

  Consider two polygons $\hf_e, \hf_{e'} \in \hK$ of Class \ref{2dETcls2} generated by edges $e,e' \in K$.
  By the way we construct these polygons, $\hf_e$ and $\hf_{e'}$ intersect at a vertex $\hv$ if and only if $e$ and $e'$ are adjacent edges of a polygon $f \in K$ meeting at the vertex $v$.

  Since each $\hf \in \hK$ is a mitered offset of some polygon $f \in K$, at least one of the two copies $\he, \he'$ of edges in $\hK$ corresponding to the edge $e \in K$ is shorter in length than $e$ (see Definition of Class \ref{2dETcls2} polygons).
  In particular, if $e$ is not a boundary edge, then both $\he$ and $\he'$ are shorter than $e$.
  If $e$ is a boundary edge, i.e., $e \in f \in \CH \cup \CO$, then one edge out of $\he, \he'$ has the same length as $e$ while the other is shorter.
  Hence each polygon $\hf_e$ of Class \ref{2dETcls2} is a convex $4$-gon (trapezium).

  Since all edges of the polygon $\hf_v$ of Class \ref{2dETcls3} generated by vertex $v \in K$ are precisely the \emph{new} edges added to define the Class \ref{2dETcls2} polygons, each $\hf_v$ is a simple closed polygon.
  Further, by the properties of Class \ref{2dETcls2} polygons specified above, $\hf_v \cap \hf_{v'} = \emptyset$ for any two vertices $v, v' \in K$.

  Thus every polygon in $\hK$ is simple and closed.
  Any two such polygons intersect at most in an edge or a vertex, and any two edges in $\hK$ intersect at most in a vertex.
  Hence $\hG$, the $1$-skeleton of $\hK$, is a planar graph.
\end{proof}

We do not alter the holes or the outside cell.
Hence $\hCH = \CH$ and $\hCO = \CO$, by definition, and $|\hK| \cup |\hCH| \cup |\hCO| = \R^2$ as expected.
Further, using the counts of cells in $\hK$ specified in \cref{lem:cntshVhEhF2d}, we get
\[
  |\hV| - |\hE| + |\hF| = 2|E| - 4|E| +  |V| + |E| + |F| = |V| - |E| + |F|,
\]
confirming that the Euler characteristic remains unchanged by the transformation.
Since the input complex is assumed to be planar, this result reconfirms the planarity of the output complex.

\begin{rem}
  \label{rem:disjholes}
  {\rm
    We illustrate why we require holes in the domain to be disjoint (Condition \ref{asmn:sepholes} in \cref{asmn:Kholesoutside}).
    Consider the input complex $K$ with two holes $h, \bar{h} \in \CH$ that intersect at a vertex $v$.
    The corresponding vertex $\hv$ in the transformed complex $\hK$ will not have a degree of $4$.
    There will also be other vertices in $\hK$ that have odd degree, which are circled in \cref{fig:disjholes}.
    Let these odd-degree vertices be labeled $\hv',\hv''$.
    Technically, there are two \emph{identical} copies of the edge $\{\hv,\hv'\}$ and similarly of $\{\hv,\hv''\}$.
    But such duplicate edges make the graph $\hG$ ($1$-skeleton of $\hK$) non-planar.
    If we include only one copy of each pair of duplicate edges, we get odd degree vertices in $\hG$.    \begin{figure}[htp!]
      \centering
      \includegraphics[scale=0.3]{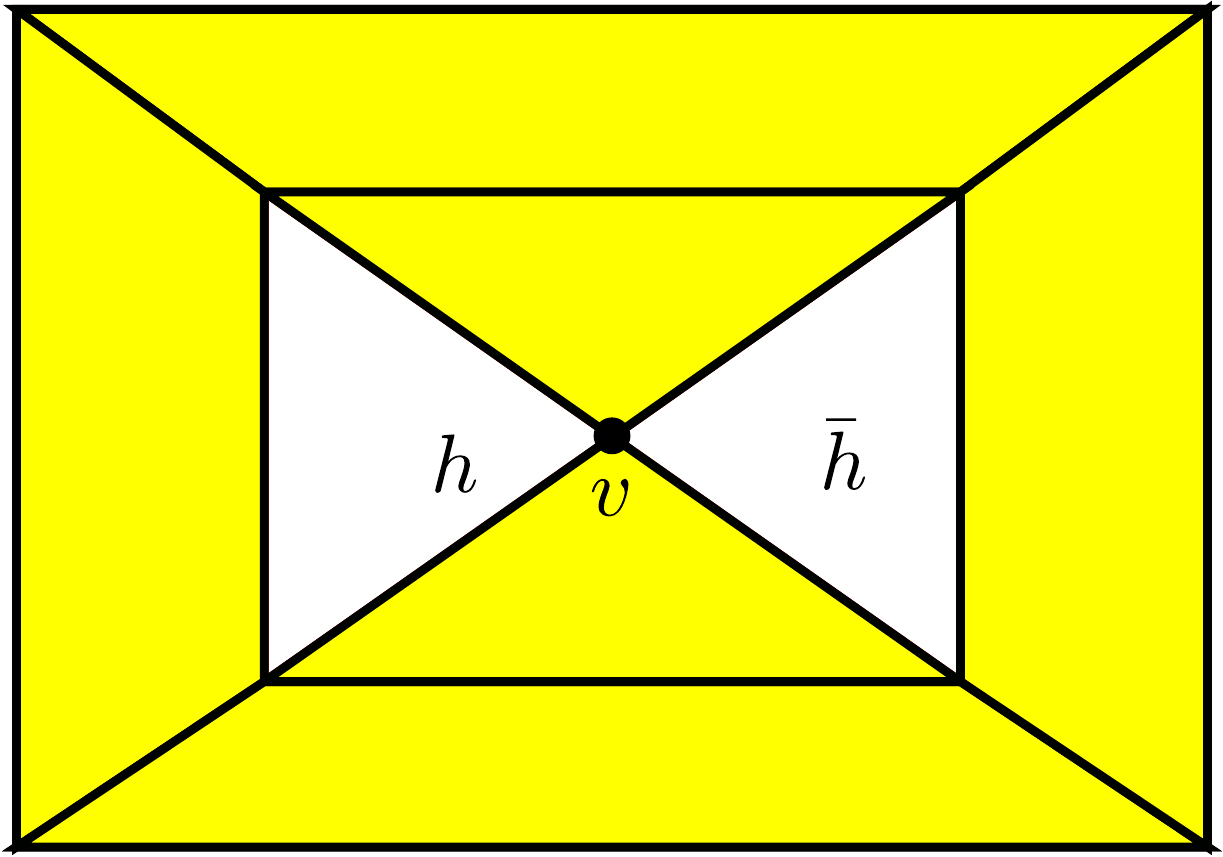}
      \quad
      \includegraphics[scale=0.3]{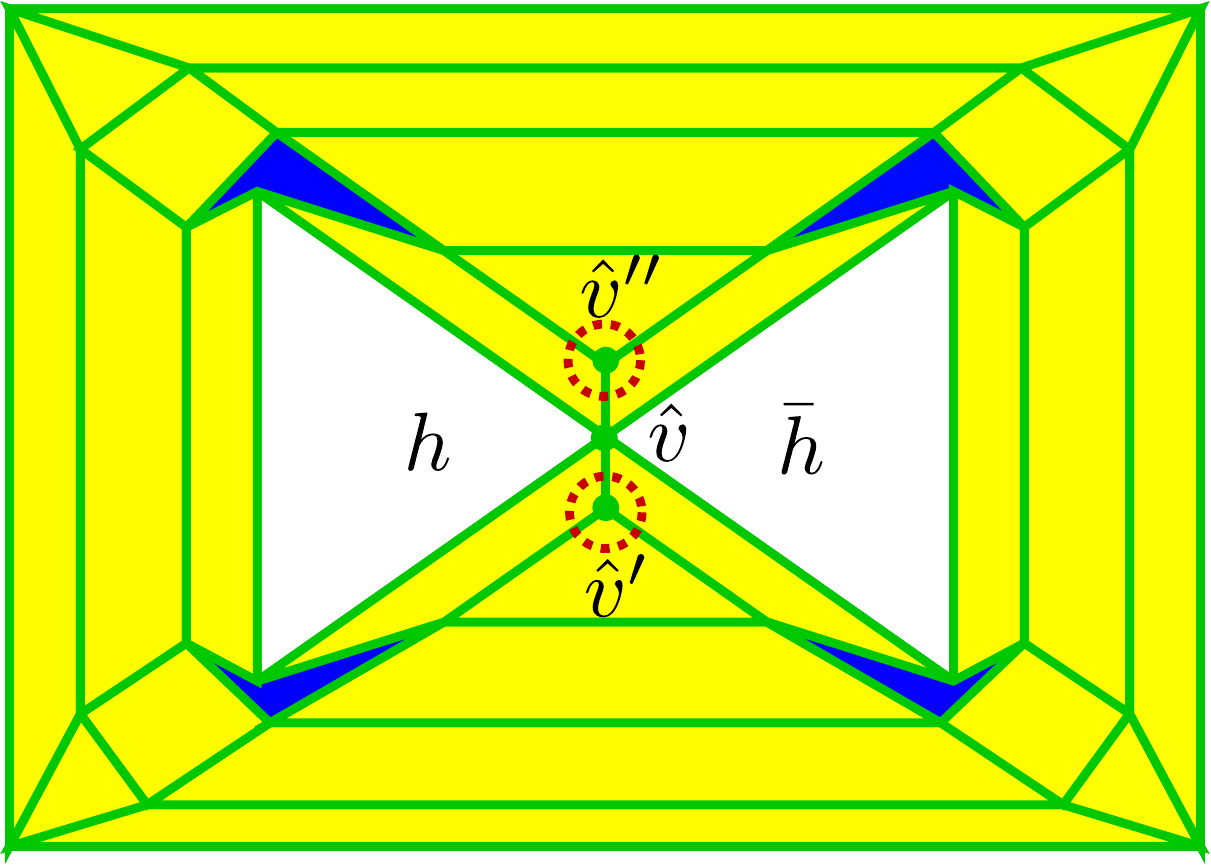}
      \caption{Two holes touching at a vertex (left), and the result of applying Euler transformation (right).
        Vertices with odd degree in the resulting complex are circled.
        The four cells shaded in blue are polygons of Class \ref{2dETcls3} generated by vertices in the input complex (see \cref{ssec:euler2d}).
        These cells could be nonconvex.
      }
      \label{fig:disjholes}
    \end{figure}
    
  }
\end{rem}

We pointed out in the Proof of \cref{lem:hGplanar} that the polygons of Class \ref{2dETcls2} in $\hK$ generated by edges are convex $4$-gons.
Each polygon $\hf \in \hK$  of Class \ref{2dETcls1} is geometrically similar to the polygon $f \in K$ generating it.
Hence if $f$ is convex, so is $\hf$.
But polygons of Class \ref{2dETcls3} generated by vertices are not guaranteed to be convex.
 In fact, when $v \in K$ is a boundary vertex where $K$ has a notch, or an ``incut corner'', $\hf_v \in \hK$ could be nonconvex---see \cref{fig:disjholes} for illustrations.

We finish this section with a result that guarantees $\hK$ remains connected.

\begin{prop}
  \label{prop:hKcnctd2d}
  Assuming the input complex $K$ in $d=2$ is connected, its Euler transformation $\hK$ is also connected.
\end{prop}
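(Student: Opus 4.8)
The plan is to prove the equivalent statement that the $1$-skeleton $\hG$ of $\hK$ is connected. I would first observe that every cell of $\hK$ is anchored to a Class~\ref{2dETcls1} polygon: each Class~\ref{2dETcls1} polygon $\hf$ is a simple closed polygon and hence has connected $1$-skeleton; each Class~\ref{2dETcls2} polygon $\hf_e$ shares its edge $\he$ with the Class~\ref{2dETcls1} polygon $\hf$ for the cell $f \in K$ containing $e$; and each Class~\ref{2dETcls3} polygon $\hf_v$ has all of its vertices $\hv_k$ on the Class~\ref{2dETcls1} polygons $\hf_k$. Consequently it suffices to show that all Class~\ref{2dETcls1} polygons lie in a single connected component of $\hG$; every remaining cell then lies in that component automatically.

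To connect the Class~\ref{2dETcls1} polygons, I would exploit the connectivity of $K$. Since $K$ is a finite, connected, pure $2$-complex, any two of its polygons $f, f'$ are joined by a chain $f = g_0, g_1, \dots, g_m = f'$ in which each consecutive pair $g_i, g_{i+1}$ meets in a common face that is either an edge or a single vertex. It then suffices to show, for each consecutive pair, that the copies $\hat{g}_i, \hat{g}_{i+1}$ lie in the same component of $\hG$, and I would argue this by cases on the shared face.

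If $g_i \cap g_{i+1}$ is an edge $e$, the Class~\ref{2dETcls2} polygon $\hf_e$ contains the newly added edge $\{\hu, \hu'\}$ with $\hu \in \hat{g}_i$ and $\hu' \in \hat{g}_{i+1}$, which directly joins the two skeletons. If $g_i \cap g_{i+1}$ is a single interior vertex $v$, then the polygons of $K$ incident to $v$ form one cyclic fan of edge-adjacent polygons, so $\hf_v$ is a single cycle through all copies $\hv_k$, and the arc of $\hf_v$ from the copy in $\hat{g}_i$ to the copy in $\hat{g}_{i+1}$ provides the connection. The remaining possibility, $g_i \cap g_{i+1}$ a single boundary vertex $v$, is where I expect the main obstacle: here the polygons of $K$ incident to $v$ may split into several edge-adjacent fans separated by regions of $|\CH| \cup |\CO|$, and the edges of $\hf_v$ join copies only within a single fan, so $\hf_v$ by itself need not connect $\hat{g}_i$ to $\hat{g}_{i+1}$.

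To overcome this I would use the cells that are left unchanged. Each boundary edge incident to $v$ is shared with a cell $f' \in \CH \cup \CO$, which is retained as is, so its Class~\ref{2dETcls2} polygon $\hf_e$ contains the original (unshrunk) vertex $v$ together with the new edge $\{\hv, v\}$ from the shrunk copy to it. Because $v$ is a single $0$-cell of $\hK$ incident to all the boundary edges around it, these edges attach every fan at $v$ to the common vertex $v$, placing all copies $\hv_k$ in one component; Conditions~\ref{asmn:holeintr} and~\ref{asmn:outintr} of \cref{asmn:Kholesoutside} guarantee that each fan terminates in a genuine boundary edge at $v$ (no two adjacent boundary edges, as in the degenerate case of \cref{rem:adjbdyedges}), so that this bridging edge always exists. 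This settles all three cases, and together with the anchoring observation from the first step it shows $\hG$, and hence $\hK$, is connected. As an independent check, one may note the purely topological shortcut: the mitered offsets refill exactly the space vacated by shrinking, so $|\hK| = \R^2 \setminus (|\hCH| \cup |\hCO|) = |K|$, and connectivity of $|K|$ then transfers directly to $\hK$.
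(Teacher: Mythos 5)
Your proposal is correct and follows essentially the same route as the paper's proof: walk along a chain of adjacent polygons of $K$, and bridge each consecutive pair of Class~\ref{2dETcls1} offsets either by the new edges of the Class~\ref{2dETcls2} polygon $\hf_e$ (shared edge) or by the Class~\ref{2dETcls3} polygon $\hf_v$ (shared vertex). Your boundary-vertex repair via the retained original vertex $v$ and the new edges $\{\hv, v\}$ coming from boundary Class~\ref{2dETcls2} polygons is precisely the mechanism the paper invokes in its articulation-vertex case, where the several Class~\ref{2dETcls3} polygons at $v$ all share the identical copy $\hv = v$; your explicit anchoring step for Class~\ref{2dETcls2} and~\ref{2dETcls3} cells is left implicit in the paper.
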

\begin{proof}
  We noted in the proof of \cref{lem:hGplanar} that the mitered offset polygons in $\hK$ are pairwise disjoint.
  But we show that when polygons $f,f' \in K$ are connected, so are the corresponding offset polygons $\hf, \hf' \in \hK$.
  By \cref{asmn:Kholesoutside} on the input complex, when polygons $f,f' \in K$ intersect, they do so either in an edge $e$ or in a vertex $v$.
  If $f \cap f' = e$, then by the definition of Euler transformation (\cref{ssec:euler2d}), the corresponding offset polygons $\hf,\hf' \in \hK$ are connected by the pair of new edges defining $\hf_e$, the $4$-gon of Class \ref{2dETcls2} generated by edge $e$.
  If $f \cap f' = v$ and $v$ is not an articulation vertex, then the corresponding offset polygons $\hf,\hf' \in \hK$ are similarly connected by the Class \ref{2dETcls3} polygon $\hf_v$ generated by $v$, with the corresponding copies $\hv, \hv'$ of $v$ in $\hf,\hf'$, respectively, being vertices of $\hf_v$.
  If $f \cap f' = v$ that is an articulation vertex, then $\hv=v$ is the identical copy of this vertex in $\hK$.
  There will be two Class \ref{2dETcls3} polygons $\hf_v, \hf_v'$ generated by $v$ in the two biconnected components joined at $v$, with $\hf_v \cap \hf_v' = \hv$.
  Further, $\hf_v$ is connected to $\hf$ and $\hf_v'$ to $\hf'$, ensuring that $\hf$ and $\hf'$ are connected.
  It follows that $\hK$ is connected, since we assume the input complex $K$ is connected.
\end{proof}

\subsection{Geometric Realization in $d=3$} \label{ssec:rlzn3d}

We restrict discussion of geometric realization in 3D to the case where $3$-cells in $\hK$ are homeomorphic to a $3$-ball and edges are straight lines, but $2$-cells can be non-planar.
For the sake of completeness, we relist the assumptions on the input complex $K$ in $d=3$ here.
\begin{asmn}
  \label{asmn:3dKCHCO}
  In dimension $d=3$, the input complex $K$, holes $\CH$, and the outside cell $\CO$ are assumed to satisfy the conditions specified in \cref{asmn:Kholesoutside}.
  In addition, we assume that the degree of each vertex $v$ in each $3$-cell $t \in K$ containing $v$ is $3$.
  In other words, each vertex $v \in t$ is connected to exactly three other vertices $v' \in t$.
\end{asmn}

It follows from \cref{asmn:3dKCHCO} that a vertex $v$ in a $3$-cell $t \in K$ is shared by exactly three polygons that are facets of $t$.
Tetrahedral,  cubical, and rectangular cuboid meshes are examples of polyhedral complexes satisfying the degree $3$ condition.

We first present the main result on the same even degree of vertices in the Euler transformation.

\begin{thm} \label{thm:deg6ET3d}
  Every vertex in $\hK$, the Euler transformation of the $3$-complex $K$, has degree $6$ in the $1$-skeleton of $\hK$.
\end{thm}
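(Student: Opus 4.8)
The plan is to mirror the degree-$4$ argument of \cref{thm:deg4ET2d}: fix an arbitrary vertex of $\hK$ and show it is incident to exactly six edges by accounting for the contributions of the four classes of $3$-cells. Every vertex of $\hK$ is a copy of some vertex $v \in K$ sitting inside one of the $3$-cells that contain $v$; when that $3$-cell is $t \in K$ it is the offset (shrunk) copy $\hv$, which I treat as the main case. I would split the incident edges into two geometric types: \emph{horizontal} edges lying inside a single offset cell $\htt$, and \emph{vertical} edges, newly created by the transformation, that join the copy of $v$ in $t$ to the copy of $v$ in a neighbouring $3$-cell.

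First I would count the horizontal edges. Since $\htt$ is combinatorially identical to $t$ and, by \cref{asmn:3dKCHCO}, $v$ has degree $3$ in the $1$-skeleton of $t$, the copy $\hv$ is incident to exactly three edges inside $\htt$ (the offset copies of the three edges of $t$ at $v$). Next I would count the vertical edges. A degree-$3$ vertex of a $3$-cell lies on exactly three of its facets, and by purity together with \cref{asmn:Kholesoutside} each such facet $f$ is shared with exactly one other cell $t' \in K \cup \CH \cup \CO$. The Class~\ref{3dETcls2} cell $\htt_f$ generated by $f$ contributes precisely one new edge $\{\hv,\hv'\}$ at $\hv$, where $\hv'$ is the copy of $v$ in $t'$; this yields three vertical edges. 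The crucial bookkeeping point is that Classes~\ref{3dETcls3} and~\ref{3dETcls4} create \emph{no further} edges at $\hv$: the edges of the new $q$-gon facets of $\htt_e$ (for an edge $e$ at $v$) and of the polygonal facets of $\htt_v$ are exactly the vertical edges between copies of $v$ in adjacent cells, i.e.\ edges already produced as Class~\ref{3dETcls2} facets. Hence nothing is double counted and no edge is missed, giving $3+3=6$.

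I would then verify distinctness and rule out coincidences: the three horizontal edges stay inside $\htt$ while the three vertical ones leave it, so the two groups are disjoint, and it remains to see that the three neighbouring cells $t'$ across the three facets at $v$ yield three \emph{distinct} copies of $v$. This is exactly where the structural hypotheses earn their keep: Condition~\ref{asmn:outintr} forbids two of these facets from both meeting $\CO$, and Condition~\ref{asmn:holeintr} forbids a single hole from meeting $t$ in two facets, so (with the disjointness of holes) no two vertical edges can collapse onto the same copy of $v$. The step I expect to be the main obstacle is the treatment of vertices on the boundary of $|K|$, where the relevant copy is the unshrunk original vertex, shared by several boundary facets and by the fixed cells of $\hCH \cup \hCO$; there \cref{asmn:3dKCHCO} gives no degree bound inside the outside cell, so the count must come from the local structure of the boundary surface. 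In $2$D the analogous argument is trivial because the $1$-dimensional boundary meets every boundary vertex in exactly two edges, forcing degree $4$; the genuine difficulty in $3$D is that the $2$-dimensional boundary surface can a priori present a vertex with more incident boundary faces, and carrying the boundary bookkeeping through---matching the original boundary edges at such a vertex against its vertical edges so that the total is again six---is the delicate part of the argument.
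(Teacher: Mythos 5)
Your main count is exactly the paper's argument: for a vertex $\hv$ lying in a Class~\ref{3dETcls1} offset cell $\htt$, \cref{asmn:3dKCHCO} gives three edges of $\hv$ inside $\htt$, and the three facets of $t$ at $v$ each contribute one new edge $\{\hv,\hv'_i\}$ via their Class~\ref{3dETcls2} cells, for a total of six. Your extra bookkeeping (that Classes~\ref{3dETcls3} and~\ref{3dETcls4} contribute no edges not already created by Class~\ref{3dETcls2} cells, and that the three neighbours across the three facets are distinct) is correct and consistent with the construction, which explicitly notes that those facets are reused rather than new. For the offset copies, then, you and the paper coincide.

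The boundary issue you flag at the end is not a loose end you failed to tidy up---it is a genuine gap, and the paper's own proof shares it. That proof opens with ``Consider a vertex $\hv \in \hK$ that is part of the $3$-cell $\htt$, added as a Class~\ref{3dETcls1} mitered offset copy of $t \in K$'' and never returns to the copies of boundary vertices sitting in $\hCH \cup \hCO$, which are kept unshrunk (in 2D the analogous case is handled explicitly in \cref{thm:deg4ET2d}). For such an unshrunk copy $v$ lying on $m$ boundary facets, the incident edges are the $m$ original boundary edges of $\partial \CO$ at $v$ (the link of $v$ in the boundary $2$-sphere is an $m$-cycle) plus one new edge per boundary facet coming from that facet's Class~\ref{3dETcls2} cell, and these $m$ new edges end at $m$ distinct offset copies of $v$ because two boundary facets at a degree-$3$ vertex of the same $3$-cell would be adjacent, contradicting Condition~\ref{asmn:outintr}. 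Hence the degree is exactly $2m$. Nothing in \cref{asmn:3dKCHCO} forces $m=3$: Condition~\ref{asmn:outintr} only forbids two boundary facets of the \emph{same} $3$-cell, so a vertex in the interior of a flat boundary patch of a cubical complex has $m=4$ and its unshrunk copy has degree $8$. (In 2D this cannot happen because the boundary is a $1$-manifold, so $m=2$ always and the count $2+2=4$ works.) You have therefore located exactly where the argument breaks; to repair the statement one needs either an additional hypothesis that every boundary vertex lies on exactly three boundary facets, or a restriction of the claim to the vertices contained in Class~\ref{3dETcls1} cells.
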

\begin{proof}
  Consider a vertex $\hv \in \hK$ that is part of the $3$-cell $\htt$, added as a Class \ref{3dETcls1} mitered offset copy of the $3$-cell $t \in K$ (see \cref{ssec:euler3d}).
  Let $v$ be the corresponding vertex in $t$.
  Since \cref{asmn:3dKCHCO} holds for $K$, vertex $v$ has degree $3$ in $t$ and is part of three polygons $\{f_i\}_{i=1}^3$ that are facets of $t$.
  As $\htt$ is a mitered offset of $t$, vertex $\hv$ has the identical degree of $3$ in $\htt$.
  Further, each $f_i$ generates a Class \ref{3dETcls2} $3$-cell $\htt_{f_i}$ in $\hK$ (see \cref{ssec:euler3d}) for $i=1,2,3$.
  Vertex $\hv$ is connected to one \emph{new} edge of the form  $\{\hv,\hv'_i\}$ in $\htt_{f_i}$ (disjoint from $\htt$) for $i=1,2,3$.
  These three edges bring the total degree of $\hv$ in the $1$-skeleton of $\hK$ to $6$.
\end{proof}
\begin{prop}
	\label{prop:transcomplex}
	Homology of $\hK$ do not change.
\end{prop}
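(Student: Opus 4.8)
The plan is to prove the (equivalent but cleaner) statement that the Euler transformation preserves the underlying space, i.e.\ $|\hK| = |K|$ as subsets of $\R^d$, and then to invoke the topological invariance of homology to conclude $H_*(\hK) \cong H_*(K)$. Since $\hK$ is a polyhedral complex, its cellular homology agrees with the singular homology of $|\hK|$, so once $|\hK| = |K|$ is established we immediately obtain $H_*(\hK) \cong H_*(|K|) = H_*(K)$; in particular the $d$-dimensional holes recorded by $\CH$ are preserved, since those cells are untouched ($\hCH = \CH$, $\hCO = \CO$).

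First I would reduce the claim to a tiling statement. By construction (\cref{ssec:euler2d} and \cref{ssec:euler3d}) we modify only the full-dimensional cells of $K$: each top cell is replaced by its mitered-offset (shrunken) copy, and the region vacated by shrinking is filled by the newly inserted cells. Hence it suffices to show that, inside the region formerly occupied by $K$, the offset copies together with the inserted cells tile $|K|$ exactly, with no overlaps and no uncovered points. For $d=2$, each offset polygon $\hf$ sits in the interior of $f$, and the band $f \setminus \hf$ splits into one strip adjacent to each edge and one wedge adjacent to each vertex. When $f,f'$ share an edge $e$, the Class~\ref{2dETcls2} cell $\hf_e$ is exactly the union of the two edge-strips abutting $e$ from $f$ and $f'$ (or the single strip between $\he$ and the unchanged boundary copy when $e$ is a boundary edge); and the Class~\ref{2dETcls3} cell $\hf_v$, bounded by the new edges $\{\hv_i,\hv_j\}$, is exactly the union of the vertex-wedges meeting at $v$. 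These pieces overlap only along shared faces and together recover $|K|$, giving $|\hK|=|K|$.

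The $d=3$ argument is identical in spirit: the offset band of each $3$-cell decomposes into face-slabs, edge-wedges, and vertex-corners, which are filled respectively by the Class~\ref{3dETcls2}, Class~\ref{3dETcls3}, and Class~\ref{3dETcls4} cells, again tiling $|K|$ exactly. The Euler-characteristic identity verified after \cref{lem:cntshVhEhF2d} serves as a consistency check (a necessary condition) for this conclusion, but it controls only the alternating sum of Betti numbers, so the space-equality argument is what actually yields the full homology statement. I expect the main obstacle to be the exact geometric bookkeeping in the tiling step---confirming that the mitered offset partitions each vacated band into precisely the pieces realized by the inserted cells, with the correct one-sided behaviour along boundary edges and facets (where one copy is left unchanged) and the degenerate identification $\hv = v$ at articulation vertices (cf.\ \cref{prop:hKcnctd2d}). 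Once this combinatorial-geometric tiling is verified, the homology claim follows at once from topological invariance.
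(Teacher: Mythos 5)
Your proposal is correct and takes essentially the same route as the paper, whose entire justification is the single sentence that $\hK$ and $K$ have the same underlying space (so homology is preserved by topological invariance). Your tiling argument for why $|\hK|=|K|$ actually supplies more detail than the paper does, which leaves that equality unargued.
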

Since $\hK$ and $K$ always have same underlying space.
\begin{rem}
  \label{rem:nonplnr3d}
  {\rm
  	While $3$-cells of each Class has a guaranteed geometric realization in $\R^3$ but some facets for $3$-cells belonging to Classes \ref{3dETcls3} and \ref{3dETcls4} can be non planar.
    In particular, if the cells in $K$ are assumed to be convex, then all Class \ref{3dETcls1} cells are also convex, since they are mitered offsets of the convex cells in $K$.
    In this case, the Class \ref{3dETcls2} cell $\htt_f$ generated by the convex polygon $f$ is also guaranteed to be convex, as each of the $p$ $4$-gon added is planar (see \cref{fig:3dETcls2}).
    
    The problem arises for the two $q$-gons added as part of each Class \ref{3dETcls3} cell---they may not be planar even if all edges $\he_j$ are identical in length.
    Instead of the depiction in \cref{fig:3dETcls3}, we might have the $q$-gon(s) as shown in \cref{fig:3dETcls3nplr}.
    We will find the best fit plane through all the vertices in question, and project the vertices up or down to the plane as needed by extending or shrinking edges as shown in the \cref{fig:3dETcls3nplr}
    \begin{figure}[htp!]
      \centering
      \includegraphics[scale=0.2]{euler_transformation_3d_around_edge_input-crop.pdf}
      \hspace*{0.1in}
      \includegraphics[scale=0.2]{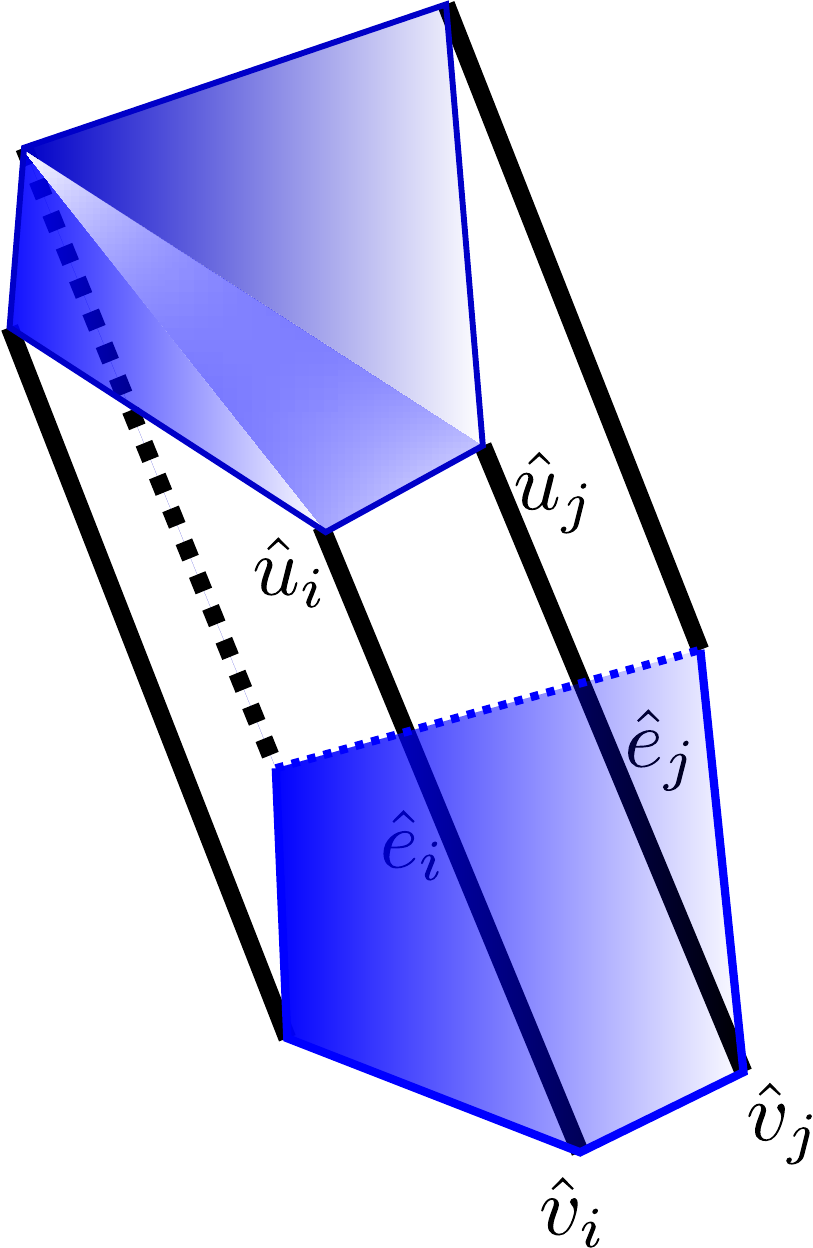}
      \hspace*{0.1in}
      \includegraphics[scale=0.2]{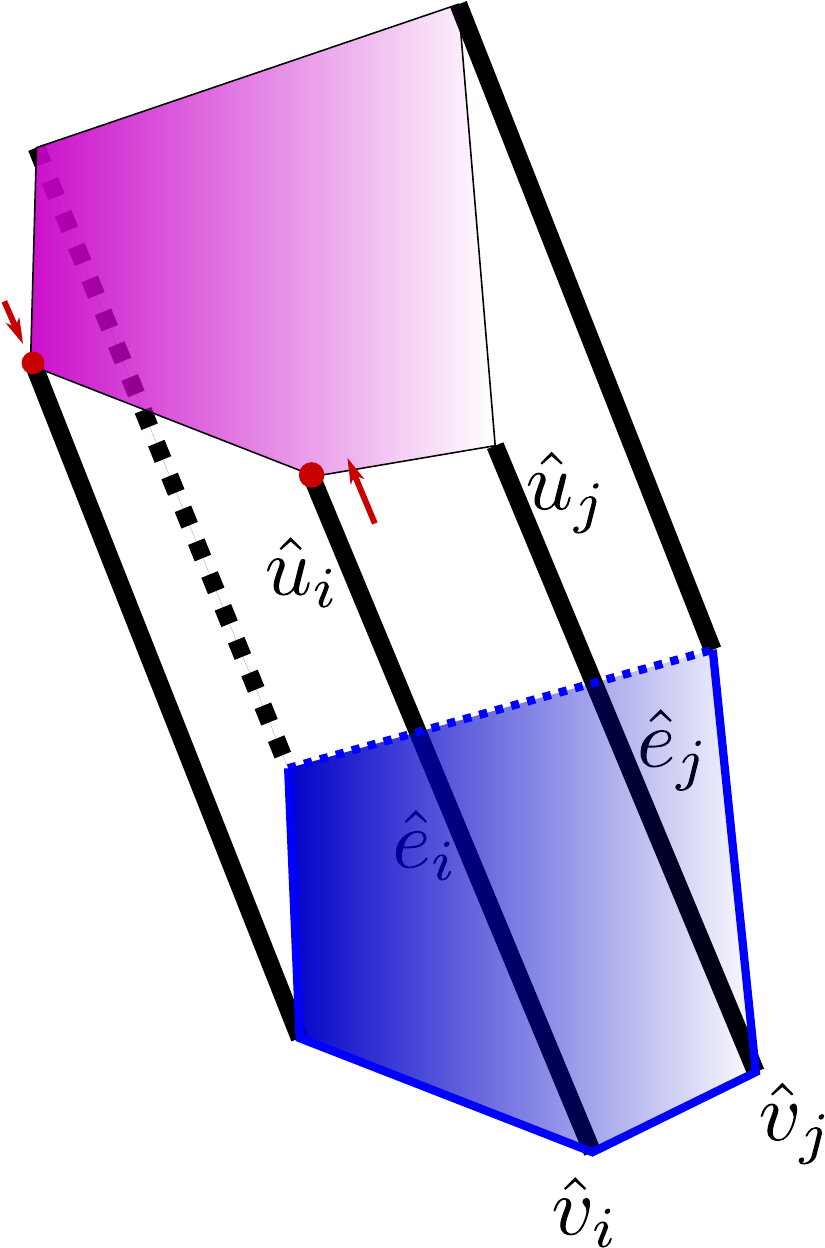}
      \caption{
        Input edge $e \in K$ shared by $5$ cells (left), just as shown shown in \cref{fig:3dETcls3}.
        But here, the pentagon on top in the Class \ref{3dETcls3} $3$-cell $\htt_e$ (middle) is not planar as originally set.
        This pentagon is replaced by the best fit plane (right, in pink) where one vertex is pushed up and another one down (red dots) while other vertices remain at their original positions. 
      }
      \label{fig:3dETcls3nplr}
    \end{figure}

    If the input complex is highly regular, e.g., a cubical complex with nearly identical cell sizes, it might be possible to choose the mitered offsets of each cell such that these polygons are indeed planar.
    But if such choices do not exist, we can project these points to a plane as discussed. 

    At the same time, for our motivating applications including infill lattice printing in additive manufacturing and coverage problems in robotics, we are concerned only with the $1$-skeleton of $\hK$.
    And the $1$-skeleton on $\hK$ is completely determined by the $3$-cells of Classes \ref{3dETcls1} and \ref{3dETcls2}.
    Hence we could just \emph{not} include $3$-cells of Classes \ref{3dETcls3} and \ref{3dETcls4} in $\hK$.
    Naturally, the underlying space $|\hK|$ will not be homeomorphic to $|K|$ because of the missing $3$-cells (we can show that $|K|/|\hK|$ is a single enclosed void).
    But the $1$-skeletons cover the same domain.
  }
\end{rem}

We now present bounds on the numbers of each class of cells in $\hK$ as multiples of corresponding numbers in $K$.
The counts would be lower if we do not include $3$-cells of Classes \ref{3dETcls3} and \ref{3dETcls4}.

\begin{lem}
  \label{lem:cntshThVhEhf3d}
  Let $V,E,F,T$ denote the sets of vertices, edges, polygons (faces), and $3$-cells in $K$, and let $\hV, \hE, \hF, \hT$ denote the corresponding sets in $\hK$.
  Let $\nu$ denote the maximum number of vertices in any $3$-cell in $K$, and let $\pi$ denote the maximum number of edges in any polygon (face) in $K$.
  The following bounds hold for the cardinalities of the sets of cells in $\hK$.
  \begin{align}
    |\hT| &   =  |T| + |F| + |E| + |V| \label{eq:3dhTsz} \\
    |\hV| & \leq \nu |T|  \label{eq:3dhVbd} \\
    |\hE| & \leq 3 \nu |T| \label{eq:3dhEbd} \\
    |\hF| & \leq (\pi+2)|F| + 2 |E| \label{eq:3dhFbd}
  \end{align}
\end{lem}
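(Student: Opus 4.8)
The plan is to prove the four bounds separately, in increasing order of difficulty, relying entirely on the explicit four-class description of the cells of $\hK$ given in \cref{ssec:euler3d} together with the degree result \cref{thm:deg6ET3d}. The first count, $|\hT| = |T|+|F|+|E|+|V|$, is an exact equality that falls straight out of the construction: the four classes of $3$-cells in $\hK$ are generated, respectively, by the $3$-cells, polygons, edges, and vertices of $K$, with exactly one $\htt$, $\htt_f$, $\htt_e$, or $\htt_v$ per generating cell, and these four families are pairwise disjoint. Summing their sizes gives the claim with equality.

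Next, for $|\hV|$ I would observe that every vertex of $\hK$ is a copy of some vertex of $K$: inspecting the four classes, only the Class \ref{3dETcls1} cells $\htt$ introduce vertices (as offset copies of the vertices of $t$), while Classes \ref{3dETcls2}--\ref{3dETcls4} reuse those copies and add none. Hence $|\hV| \le \sum_{t\in T}(\text{number of vertices of } t) \le \nu|T|$, where the first inequality allows for possible identifications of copies at articulation vertices and the second uses the definition of $\nu$. The edge bound is then immediate from \cref{thm:deg6ET3d}: since every vertex of $\hK$ has degree $6$, the handshaking identity gives $|\hE| = \tfrac{1}{2}\cdot 6\,|\hV| = 3|\hV| \le 3\nu|T|$.

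The main work is the face bound $|\hF| \le (\pi+2)|F| + 2|E|$, and the chief obstacle is to attribute each polygon of $\hK$ to a unique generator so as to avoid double counting, since many $2$-cells are shared between different classes of $3$-cells. I would partition the polygons of $\hK$ into three types according to the construction: (i) the offset copies of the original polygons of $K$, where each $f\in K$ is duplicated into two copies, contributing at most $2|F|$; (ii) the side $4$-gons $\hf_i$, of which each $p$-sided polygon $f$ generates exactly $p$, contributing $\sum_{f\in F} p \le \pi|F|$; and (iii) the two $q$-gon ``caps'' added with each Class \ref{3dETcls3} cell $\htt_e$, contributing exactly $2|E|$. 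The construction explicitly records that the $4$-gons of type (ii) are shared with, and counted once in, the Class \ref{3dETcls2} cells, and that the polygons bounding the Class \ref{3dETcls4} cells $\htt_v$ coincide with the type-(iii) $q$-gons already added; using these two observations to certify that types (i)--(iii) are exhaustive and pairwise disjoint, summing yields $|\hF| \le 2|F| + \pi|F| + 2|E| = (\pi+2)|F| + 2|E|$, as required.
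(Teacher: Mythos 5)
Your proposal is correct and follows essentially the same route as the paper: equality for $|\hT|$ from the one-cell-per-generator construction, counting vertices via the pairwise-disjoint Class \ref{3dETcls1} offsets, handshaking with \cref{thm:deg6ET3d} for $|\hE|$, and attributing the polygons of $\hF$ to the $p+2$ facets per Class \ref{3dETcls2} cell plus the two new $q$-gons per edge. Your extra care in certifying that the three polygon types are exhaustive and pairwise disjoint is a slightly more explicit version of the paper's remark that the $4$-gons and $q$-gons are already accounted for in other classes, but the argument is the same.
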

\begin{proof}
  \cref{eq:3dhTsz} follows from the definition of Euler transformation in $d=3$ in \cref{ssec:euler3d}, where we get a unique $3$-cell in $\hK$ corresponding to each $3$-cell (Class \ref{3dETcls1}), face or polygon (Class \ref{3dETcls2}), edge (Class \ref{3dETcls3}), and vertex (Class \ref{3dETcls4}) in $K$.
  The count will be equal to $|T| + |F|$ if we do not include $3$-cells of Classes \ref{3dETcls3} and \ref{3dETcls4}.

  Any two $3$-cells of Class \ref{3dETcls1} (mitered offsets) in $\hK$ are disjoint, and all vertices in $\hK$ belong to one of these $3$-cells.
  Hence the total number of vertices in $\hK$ is the sum of the number of vertices in each $\htt \in \hV$.
  The bound in \cref{eq:3dhVbd} results from the fact that each $3$-cell $t \in T$ generates a unique mitered offset $3$-cell $\htt \in \hT$, and the maximum number of vertices in any $t$ is $\nu$.

  \cref{thm:deg6ET3d} specifies that the degree of each vertex $\hv \in \hV$ is $6$.
  Hence we get that $2|\hE| = 6|\hV|$, the sum of all degrees in $\hK$.
  Combining this result with the bound in \cref{eq:3dhVbd} gives the bound in \cref{eq:3dhEbd} (assuming we add all Class \ref{3dETcls3} cells).

  Every facet or polygon $f \in F$ with $p$ edges generates a Class \ref{3dETcls2} $3$-cell with $p+2$ polygons as facets.
  Note that this set includes the two copies of $f$ that are facets of mitered offset $3$-cells in $K$ which share $f$.
  We get two \emph{new} polygons from the Class \ref{3dETcls3} $3$-cells generated by each edge $e \in E$.
  The bound in \cref{eq:3dhFbd} now follows since the number of edges of any $f \in F$ is $\pi$.
\end{proof}

We point out that the bounds in \cref{eq:3dhVbd,eq:3dhEbd,eq:3dhFbd} are tight when $K$ is a cell complex with same type of cells.
For instance, if $K$ is a tetrahedral complex, these bounds are tight with $\nu=4, \pi=3$.
If $K$ is a cubical complex, we get tight bounds with $\nu=8, \pi=4$.

\medskip
\noindent We now show that $\hK$ remains connected even if we do not add $3$-cells of Classes \ref{3dETcls3} and \ref{3dETcls4}.

\begin{prop}
  \label{prop:hKcnctd3d}
  Assuming the input complex $K$ is connected in $d=3$, its Euler transformation $\hK$ is also connected even without including Classes $3$ and $4$.
\end{prop}
\begin{proof}
  Observe that any pair of mitered offset $3$-cells $\hK$ (Class \ref{3dETcls1} cells specified in \cref{ssec:euler3d}) are disjoint by definition.
  But we show that if $K$ is connected, then so is $\hK$ even if we do not include $3$-cells of Classes \ref{3dETcls3} and \ref{3dETcls4}.
  By \cref{asmn:3dKCHCO}, when $3$-cells $t,t' \in K$ intersect in a polygon $t \cap t' = f$, then by the definition of Euler transformation (\cref{ssec:euler3d}), the corresponding offset polygons $\htt,\htt' \in \hK$ are connected by a new Class \ref{3dETcls2} $3$-cell $\htt_{f}$ generated by the polygon $f$.
  Also, by \cref{asmn:3dKCHCO}, for any two $3$-cells $t,t'$ in a biconnected component of $K$ with $t \cap t' = \emptyset$, there exists a finite sequence of $3$-cells $t_1=t, t_2,\dots,t_{r-1},t_r=t'$ such that $t_i \cap t_{i+1} = f_i$, a polygon, for $i=1,\dots,r-1$.
  Following the previous observation, each pair $\htt_i, \htt_{i+1}$ is connected in $\hK$ by a Class \ref{3dETcls2} $3$-cell $\htt_{f_i}$.
  Hence $\htt,\htt'$ are connected in the same biconnected component in $\hK$.
  Finally, articulation vertices are preserved by the Euler transformation, ensuring that biconnected components of $K$ remain connected in $\hK$.
  Hence it follows that $\hK$ is connected as well, since we assume the input complex $K$ is connected.
\end{proof}

We end this section with an additional nice property of the Euler transformed complex $\hK$: every edge is shared by exactly four polygons, i.e., faces.
As a result, the $2$-complex resulting from the intersection of $\hK$ with a plane in 3D that intersects only the edges but not any vertices in $\hK$ is guaranteed to be Euler.
This result could have implications in 3D printing where one may consider slicing the Euler transformed $3$-complex so as to print 2D layers.

\begin{lem}\label{lem:edgcnct4facs}
  Each edge in $\hK$ is connected to $4$ faces in $\hK$
\end{lem}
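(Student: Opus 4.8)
The plan is to show that every edge of $\hK$ lies in exactly four $2$-cells by first recognizing that the edges of $\hK$ come in only two kinds, and then counting the incident faces for each kind. The first kind consists of the copies $\he$ of an edge $e\in K$ that lie inside a Class \ref{3dETcls1} mitered-offset cell $\htt$; call these \emph{copy edges}. The second kind consists of the \emph{new} edges $\{\hv,\hv'\}$ introduced as facet-edges of the Class \ref{3dETcls2} cells, which join the two copies of a vertex $v$ across a polygon $f\in K$ shared by cells $t,t'$; call these \emph{link edges}. I would first verify that these exhaust all edges of $\hK$: the Class \ref{3dETcls3} and Class \ref{3dETcls4} cells introduce no further edges, since every edge appearing among their facets is either a copy edge $\he_i$ or a link edge of the form $\{\hu_i,\hu_j\}$ or $\{\hv_i,\hv_j\}$ that was already created in a Class \ref{3dETcls1} or Class \ref{3dETcls2} cell.

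Next I would enumerate the faces of $\hK$: (i) the copy polygons $\hf$ of polygons $f\in K$ (each polygon yielding two copies by the duplication); (ii) the $4$-gons produced inside the Class \ref{3dETcls2} cells, one per incident pair (edge $e$, polygon $f$) with $e\subset f$; and (iii) the $q$-gons produced inside the Class \ref{3dETcls3} cells, two per edge $e\in K$. A key bookkeeping point, already recorded in the definition of the Class \ref{3dETcls4} cells, is that the polygonal facets of $\htt_v$ coincide with these $q$-gons, so Class \ref{3dETcls4} contributes no new faces and there is no double counting. I would also record the two elementary incidence facts that drive the argument: each edge of a polyhedron lies in exactly two of its polygonal facets, and each vertex of a polygon lies on exactly two of its edges.

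With these in hand, the count for a copy edge $\he$ (the copy of $e\in K$ in the cell $t$) is immediate: inside $t$ the edge $e$ belongs to exactly two facets $f,f'$, so $\he$ is a side of the two copy polygons that are facets of $\htt$, and it is also a side of exactly the two $4$-gons generated by the pairs $(e,f)$ and $(e,f')$, giving $2+2=4$ faces; no $q$-gon contains $\he$, since every $q$-gon is composed solely of link edges. For a link edge $\{\hv,\hv'\}$ attached to the pair (vertex $v$, polygon $f$), I use that $v$ lies on exactly two edges $e_1,e_2$ of $f$: the edge is a side of exactly the two $4$-gons of $\htt_f$ generated by $e_1$ and $e_2$, and it is also one side of the $v$-side $q$-gon of each Class \ref{3dETcls3} cell $\htt_{e_1}$ and $\htt_{e_2}$, because the consecutive pair of cells $t,t'$ around $e_k$ that share $f$ contributes precisely the edge $\{\hv,\hv'\}$. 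Again this totals $2+2=4$, and no copy polygon contains a link edge.

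Finally I would note that the completeness of the relevant cyclic fans---needed so that the wrap-around facets in the Class \ref{3dETcls3} and Class \ref{3dETcls4} constructions are well defined, and so that each boundary polygon still carries two copies---follows from \cref{asmn:3dKCHCO}: the cells of $K\cup\CH\cup\CO$ tile $\R^3$, so the full-dimensional cells around any edge form a complete cycle, and each boundary polygon $f\in K$ is shared by a cell of $K$ and a cell of $\CH\cup\CO$, each contributing one copy. The main obstacle I anticipate is exactly this bookkeeping: correctly matching the $4$-gons across the Class \ref{3dETcls2}/Class \ref{3dETcls3} descriptions and the $q$-gons across the Class \ref{3dETcls3}/Class \ref{3dETcls4} descriptions so that each face is counted once, and checking that boundary polygons and articulation-vertex configurations do not alter the local count of $4$.
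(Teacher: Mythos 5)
Your proof is correct and follows essentially the same route as the paper's: a two-case split into offset copies of edges of $K$ (incident to two Class \ref{3dETcls1} copy polygons and two Class \ref{3dETcls2} $4$-gons) and new ``link'' edges $\{\hv,\hv'\}$ (incident to two $4$-gons of $\htt_f$ and one $q$-gon from each of $\htt_{e_1},\htt_{e_2}$). Your added bookkeeping---checking that these two edge kinds are exhaustive and that faces shared between Class \ref{3dETcls2}/\ref{3dETcls3} and Class \ref{3dETcls3}/\ref{3dETcls4} cells are not double-counted---only makes the argument more complete than the paper's version.
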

\begin{proof}
  We consider two cases for an edge $\he$ in $\hK$: $\he$ is part of some Class \ref{3dETcls1} $3$-cell $\htt$ and $\he$ is not part of any $3$-cell. 
  The two cases are illustrated on a cubical $3$-complex in \cref{fig:3dETedgefacerelation}.

  In the first case, $\he$ is the mitered offset of edge $e$ in $3$-cell $t \in K$.
  Let $e$ be shared by polygons $f',f''$ that are faces of $t$.
  Then $\he$ in $\htt$, the mitered offset of $t$, is shared by faces $\hf',\hf''$ of $\htt$, the mitered offset polygons of $f'$ and $f''$, respectively.
  Edge $\he$ is also shared by one polygon each from the two Class \ref{3dETcls2} $3$-cells $\htt_{f'}$ and $\htt_{f''}$, generated by $f'$ and $f''$, respectively.
  These two polygons are also faces of a Class \ref{3dETcls3} $3$-cell $\htt_{e}$ generated by $e$.
  Hence $\he$ is shared by exactly four polygons in $\hK$.


  In the second case, let edge $\he = (\hv, \hv')$ be not contained in any Class \ref{3dETcls1} $3$-cell.
  By construction (see \cref{ssec:euler3d}), edge $\he$ is contained in one Class \ref{3dETcls2} and two Class \ref{3dETcls3} $3$-cells, and also one Class \ref{3dETcls4} $3$-cell separating these two Class \ref{3dETcls3} $3$-cells.
  Observe that no two $3$-cells of the same class in $\hK$ can share a polygon as a face.
  Let edges $e',e''$ meet at vertex $v$ in polygon $f$ in the input $3$-cell $t \in K$.
  Let the corresponding cells in the mitered offset $\htt \in \hK$ be $\he', \he''$ meeting at vertex $\hv$ in polygon $\hf$.
  Then edge $\he$ is contained in two $4$-gons that are faces of the Class \ref{3dETcls2} $3$-cell $\htt_{f}$.
  It is also contained in one polygon each from two Class \ref{3dETcls3} $3$-cells $\htt_{e'}$ and $\htt_{e''}$.
  Note that the last two polygons are also faces of the Class \ref{3dETcls4} $3$-cell $\htt_{v}$ generated by vertex $v$.
  Overall, $\he$ is shared by exactly four polygons in $\hK$.
  Note that We could, equivalently, describe the four polygons with respect to $\htt' \in \hK$, where the polygon $f$ is shared by $3$-cells $t, t' \in K$.
\end{proof}

\begin{figure}[htp!]
  \centering
  \includegraphics[scale=0.4]{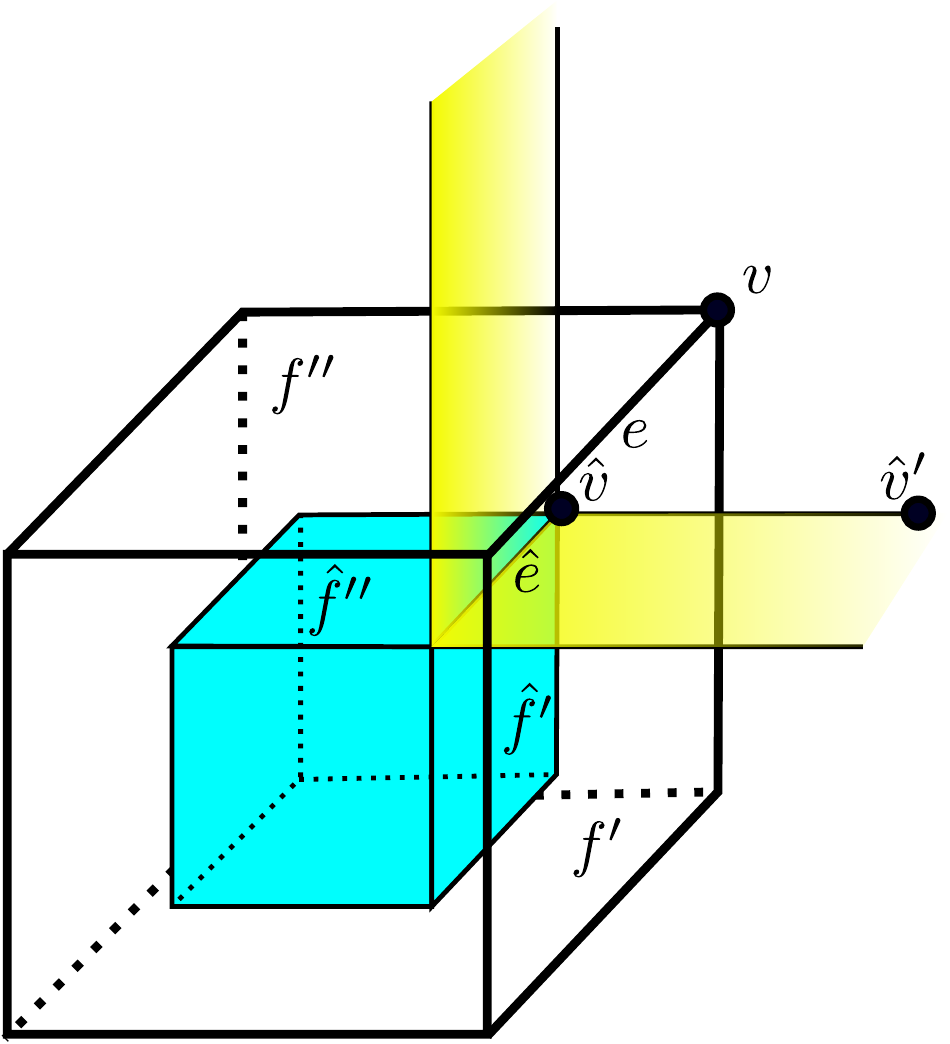}
  \quad
  \includegraphics[scale=0.4]{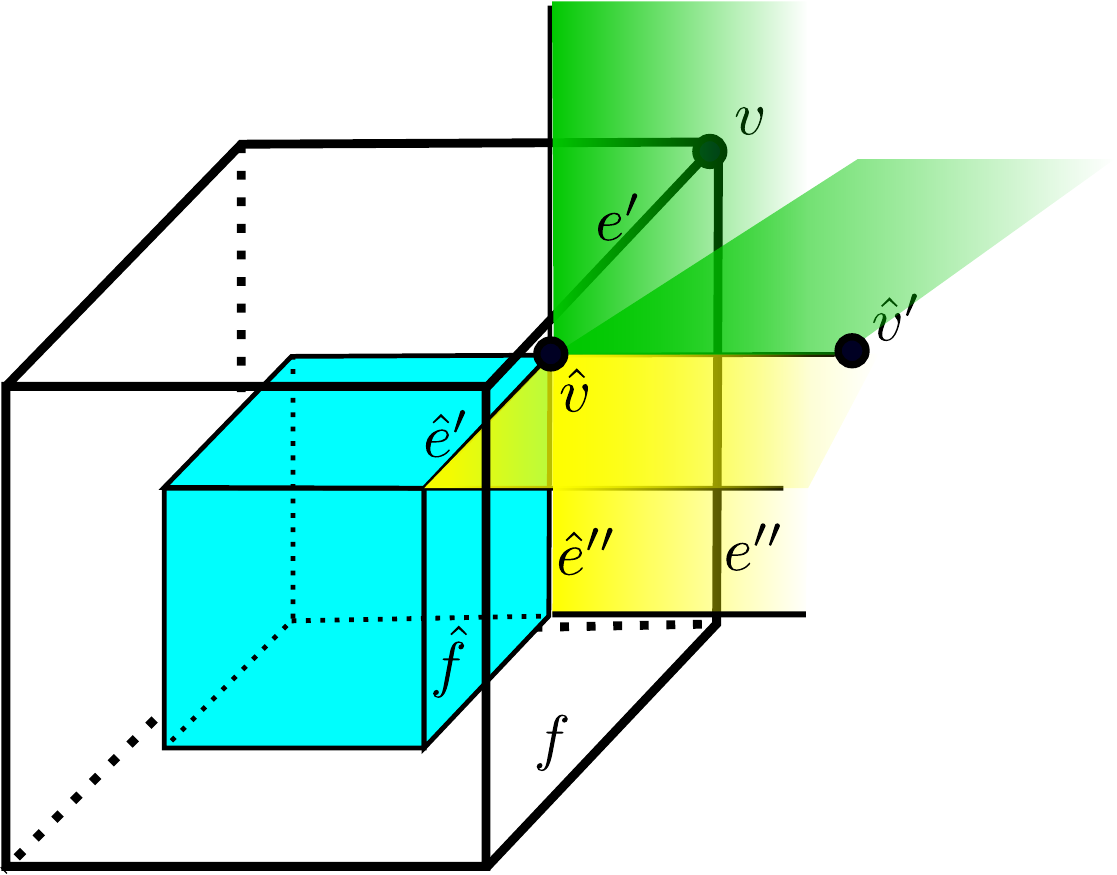}
  \caption{\label{fig:3dETedgefacerelation}
    Illustration of two cases for edge $\he$ in \cref{lem:edgcnct4facs} on a cubical complex.
    In the first case (left), the $3$-cell $t$ is the cube is shown in white, and its corresponding Class \ref{3dETcls1} $3$-cell $\htt \in \hK$ is shown in blue.
    In the right figure showing the second case, the two $4$-gons of the Class \ref{3dETcls2} $3$-cell $\htt_{f}$ are shown in yellow, and the polygons from two Class \ref{3dETcls3} $3$-cells are shown in green.\\
    }
\end{figure}

\begin{prop} \label{rem:2DsliceEuler}
  Let $\cP$ be a plane in $\R^3$ such that $\sigma \not\in \cP$ for every simplex $\sigma \in \hK$, the Euler transformed $3$-complex.
  Then the $1$-skeleton of the $2$-complex that is $\cP \cap \hK$ is Euler.
\end{prop}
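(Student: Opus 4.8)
The plan is to slice the $3$-complex with $\cP$ and read off the combinatorics of the resulting planar $2$-complex $\cP \cap \hK$ directly from the incidence structure of $\hK$, reducing everything to the count supplied by \cref{lem:edgcnct4facs}. I would first fix the meaning of the hypothesis: since $\sigma \not\subseteq \cP$ for every cell $\sigma \in \hK$, the plane contains no vertex, no full edge, and no face of $\hK$. Consequently $\cP$ meets each edge $\he$ of $\hK$ in at most one point, and that point lies in the relative interior of $\he$; and $\cP$ meets each face $\hf$ of $\hK$ in at most a $1$-dimensional arc. This gives the standard dictionary for a generic slice: the $0$-cells of $\cP \cap \hK$ are exactly the transversal crossings $\cP \cap \he$ of the edges $\he$ of $\hK$, the $1$-cells of $\cP \cap \hK$ are the arcs $\cP \cap \hf$ over the faces $\hf$ met by $\cP$, and the $2$-cells are the polygons $\cP \cap \htt$ over the $3$-cells $\htt$.

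The heart of the argument is a local degree computation at a slice-vertex. Let $x = \cP \cap \he$ be a vertex of $\cP \cap \hK$, sitting in the interior of the edge $\he$ of $\hK$. The slice-edges incident to $x$ are precisely the arcs $\cP \cap \hf$ that have $x$ as an endpoint, so I need to show that the faces $\hf$ contributing such an arc are exactly the faces of $\hK$ containing $\he$, each contributing one. For each face $\hf \supseteq \he$, the edge $\he$ lies on the boundary of $\hf$, so near $x$ the face looks like a half-disk with $\he$ on its boundary; since $x \in \he \subseteq \hf$ lies on $\cP$, the arc $\cP \cap \hf$ passes through $x$ and, because $\he$ bounds $\hf$ on one side, it emanates from $x$ into the interior of $\hf$ as a single arc. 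Conversely, any slice-edge ending at $x$ comes from a face whose relative interior accumulates at $x$, and such a face must contain $\he$. Hence the degree of $x$ in the $1$-skeleton of $\cP \cap \hK$ equals the number of faces of $\hK$ that contain $\he$.

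Invoking \cref{lem:edgcnct4facs}, every edge of $\hK$ is shared by exactly four faces, so $x$ has degree $4$ for every slice-vertex. Since $4$ is even, every vertex of the $1$-skeleton of $\cP \cap \hK$ has even degree, and therefore each connected component of this $1$-skeleton admits an Eulerian tour; i.e., $\cP \cap \hK$ is Euler. The step I expect to be the main obstacle is the local ``one arc per face'' claim in the middle paragraph: one must rule out tangential contact of $\cP$ with $\he$ or with a face, and handle the fact that faces of Class \ref{3dETcls3} and \ref{3dETcls4} cells need not be planar. The clean hypothesis $\sigma \not\subseteq \cP$ gives transversality to edges immediately, but to make the face intersections genuine single arcs I would either appeal to a standard general-position/perturbation argument for $\cP$ (which does not change the combinatorial type of the slice) or restrict attention to the Class \ref{3dETcls1} and \ref{3dETcls2} cells that already determine the $1$-skeleton, as noted in \cref{rem:nonplnr3d}.
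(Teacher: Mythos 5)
Your proposal is correct and follows essentially the same route as the paper: identify the slice-vertices as transversal intersections $\cP \cap \he$ with edges of $\hK$, and invoke \cref{lem:edgcnct4facs} to conclude each such vertex has degree $4$, hence even. The paper's own proof is a terser version of exactly this argument; your additional care about the ``one arc per face'' claim and the non-planar Class \ref{3dETcls3}/\ref{3dETcls4} facets is a reasonable refinement but not a different approach.
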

\begin{proof}
  Since the plane $\cP$ does not contain any simplex of $\hK$, vertices in the resulting $2$-complex $\cP \cap \hK$ are created by $\cP$ intersecting edges in $\hK$.
  \cref{lem:edgcnct4facs} says each edge $e \in \hK$ is a face of exactly four polygons.
  Hence the vertex generated by $\cP \cap e$ will have degree $4$, with four edges generated by $\cP$ intersecting each of the four polygons that share $e$.
\end{proof}

Finally, we consider what happens if the restriction that each vertex has degree $3$ in each $3$-cell that contains the vertex in the input complex $K$ is not satisfied.

\begin{rem}\label{rem:3dETdeg4vtx}
  {\rm 
    As an example where the assumption of vertices with degree $3$ in each cell does not hold, consider a $3$-cell $t$ that is a pyramid with a trapezium at the base, as shown in  \cref{fig:mothde3ver}.
    Let vertex $v$ be the one on top with degree $4$ in the $1$-skeleton of $t$.
    Then the mitered offset $\ttt$ created after even an infinitesimal shrinkage will replace $v$ in $K$ with a face $\tf$ in the resulting complex $\tK$, instead of a corresponding single vertex $\hv$ \cite{AuWa2013,AuWa2016}.
    But since $v$ has degree $4$ in $t$, \cref{thm:deg6ET3d} is not valid, and we could get vertices with odd degrees.
    For instance, $\tv$ in \cref{fig:mothde3ver} has degree $5$.
    Nevertheless, every vertex in the $1$-skeleton of $\ttt$ will now have degree $3$.
    More generally, every vertex in the $1$-skeleton of $\tK$ obtained in this fashion has degree $3$, and it satisfies all requirements of \cref{asmn:Kholesoutside} of an input complex.
    If we apply Euler transformation \emph{again} on $\tK$, we are guaranteed to obtain a $3$-complex $\hK$ that satisfies all previous results presented in this Section.
    This situation is similar to the one in 2D described in \cref{rem:adjbdyedges}.
}	
\end{rem}
 
\begin{figure}[htp!]
  \centering
  \includegraphics[scale=0.45]{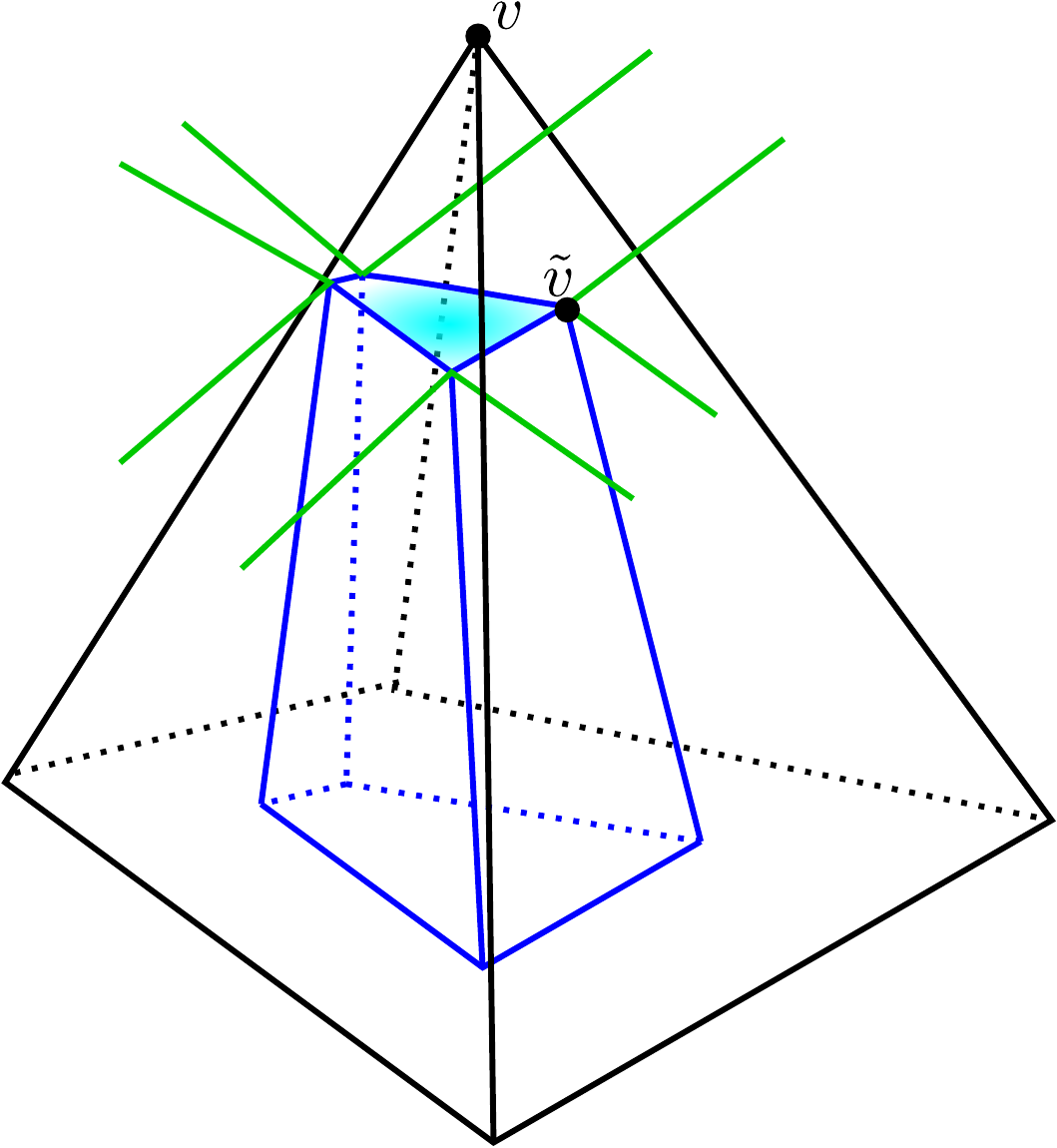}
  \caption{\label{fig:mothde3ver}
    A pyramid ($3$-cell) $t$ (black edges) in $K$, where vertex $v$ has degree $4$.
    The mitered offset of $t$ with infinite shrinkage $\ttt$ (blue edges) has vertices with degree $5$ in $\ttt$, e.g., $\tv$.
    But any vertex $\tv$ in $\ttt$ has degree $3$ in its $1$-skeleton, and hence satisfies the input assumption for applying Euler transformation.}
\end{figure}

\section{Measures of Geometric Quality} \label{sec:geomqual}

We inspect measures of geometric quality of the cells in the Euler transformation, and study how they compare with corresponding measures for the input complex.
We show that the user can choose a small set of parameters for each top-dimensional cell ($2$- or $3$-cells) in order to control these measures of quality.

Several measures of element quality are used in various domains ranging from numerical analysis, to finite element methods, to computer graphics \cite{ChDeSh2012,GiRaBa2012,LiEbZh2009,Si2010}.
We concentrate mostly on an \emph{aspect ratio} as defined below, but also present results on minimum edge lengths and angles in the 2D case.
\begin{defn}
  \label{def:asptrto}
  For a given $d$-cell $g$, let $\diam(g)$ denote its \emph{diameter}, i.e., the largest Euclidean distance between any two points in $g$.
  Also, let $\rho(g)$ denote its \emph{inradius}, i.e., the radius of the largest $d$-ball inscribed in $g$.
  The \emph{aspect ratio} of the cell is defined as
  \begin{equation}
    \label{eq:asptrto}
    \g(g) = \frac{\diam(g)}{\rho(g)} \,.
  \end{equation}
\end{defn}

\subsection{Geometric quality in $d=2$} \label{ssec:qual2d}

We consider the three classes of polygons generated in the Euler transformation $\hK$ corresponding to polygons, edges, and vertices in $K$ (\cref{ssec:euler2d}).
Since a Class \ref{2dETcls1} polygon $\hf$ is a mitered offset of the corresponding polygon $f \in K$, the user can choose parameters that control how $\hf$ is scaled with respect to $f$.
We consider two parameters ($\gl, \mu$) that control how edge lengths scale, and the offset parameter $b$, which is the perpendicular distance that each edge is pushed in when offsetting.
We define $\gl,\mu$ such that $\gl |e| \leq |\he| \leq \mu |e|$ holds for each edge $e \subset f \in K$.
We first choose $\gl$ and $\mu$ based on conditions given in \cref{eq:lammubtwn01,eq:edmamiratio}.
We denote by $|e_{\min}|, |e_{\max}|$ the minimum and maximum edge lengths in $f$.
These conditions ensure that a feasible value for $b$ can be chosen.
Based on $\gl$ and $\mu$ chosen, \cref{eq:edofbound} specifies a range of feasible values for $b$.
See \cref{fig:2dETcls1edgeconstr} for illustration of the angles used.
Choosing a $b$ in this range guarantees there are no topological or combinatorial changes in the polygon when offsetting.
We specify these ranges for \emph{each} polygon in terms of measures associated with $f$, i.e., input data.
The user could choose a uniform set of values over the entire complex, but the individual ranges afford more flexible choices.

These parameters are specific to $f$, but we do not use $\gl_f,\mu_f$ and so on in order to keep notation simpler.
We illustrate the constructions in \cref{fig:2dETcls1edgeconstr}.
Intuitively, the user may want to choose $\gl$ not too large and $\mu$ not too small in order to get good quality measures (bounded aspect ratio, minimum edge length, or maximum interior angle).
Let $b$ denote the offset distance for edge $e=\{u,v\}$, and let $r_u,r_v$ the corresponding distances for $u,v$.
By properties of mitered offset, choosing an edge length ratio (in $(\gl,\mu)$) and $b$ determines $r_u$ and $r_v$.
We denote by $p,q$ the projection lengths of $r_u, r_v$, respectively, on edge $e$.
Note that $p = b / \tan(\theta_u/2)$ and $q = b / \tan(\theta_v/2)$.

\begin{figure}[htp!]
  \centering
  \includegraphics[scale=0.3]{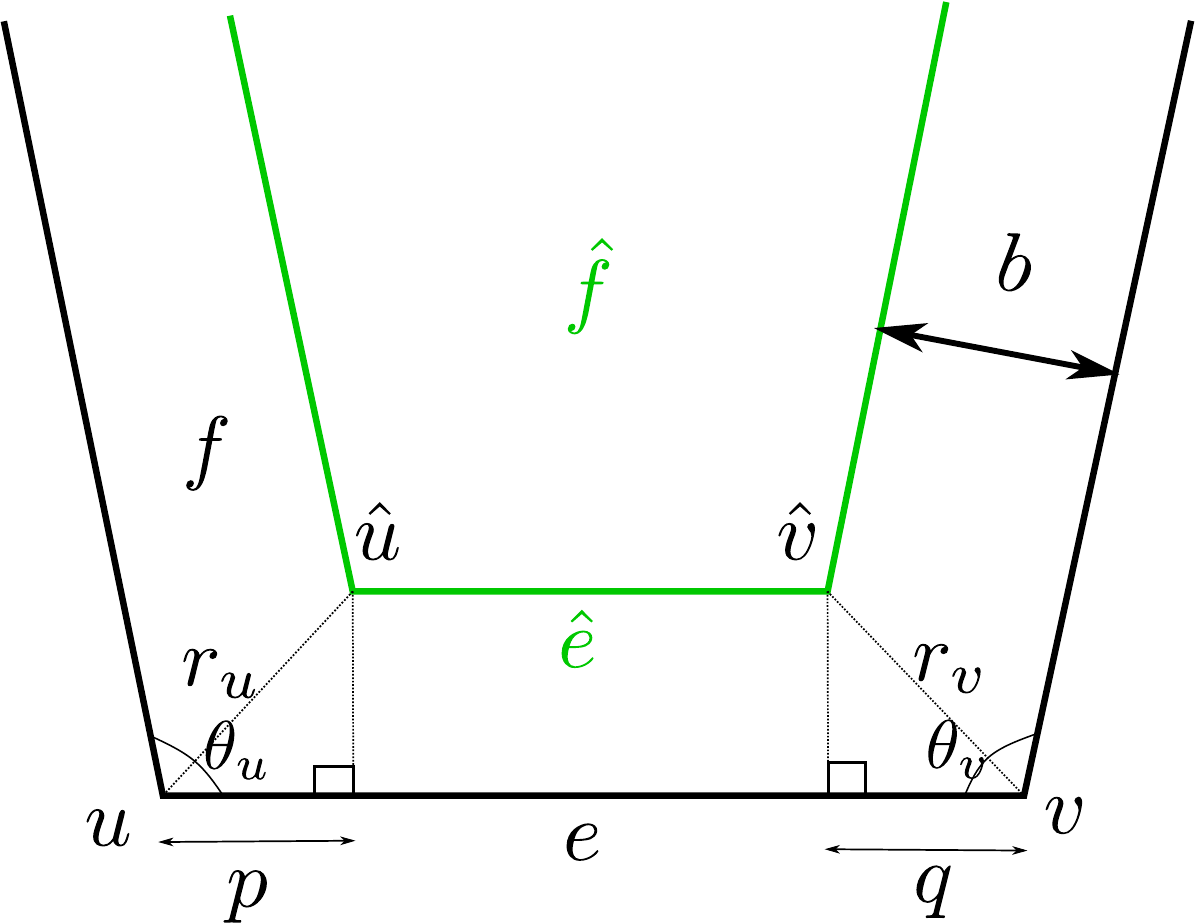}
  \caption{Offset polygon $\hf \in \hK$ (in green) and input polygon $f \in K$ (in black), with associated offset parameters.}
  \label{fig:2dETcls1edgeconstr}
\end{figure}

\begin{align}
  0 ~ < \gl & ~< \mu ~< 1 \, , \text{ and } \label{eq:lammubtwn01} \\
  \frac{|e_{\max}|}{|e_{\min}|} & ~ < ~ \frac{1- \lambda}{ 1- \mu} \, .  \label{eq:edmamiratio}  \\
  \hspace*{-0.2in} \max\{1, \tan(\theta_u/2),  \tan(\theta_v/2)\} \frac{(1 -\mu)|e|}{2} & ~< ~b~ < \min\{1, \tan(\theta_u/2),  \tan(\theta_v/2)\} \frac{(1 -\gl)|e|}{2}.  \label{eq:edofbound} 
\end{align}

Note that \cref{eq:lammubtwn01,eq:edmamiratio,eq:edofbound} guide the choices of $\gl, \mu$ and $b$ that the user can pick.
Once $\gl, \mu, b$ are chosen, we obtain certain bounds that are implied by these choices.
\cref{eq:edofbound} gives that $(1 - \mu) |e|\leq p + q \leq (1 -\gl) |e|$.
We have $\mu |e|\geq \he  \geq \gl |e|$, where $\he = |e| - (p + q)$.
Hence we get the tightest bounds on $b,p,q$ as
\begin{align}
  \frac{(1 -\mu)|e_{\max}|}{2} & < b, p, q < \frac{(1 -\lambda)|e_{\min}|}{2}.\label{eq:edofbound2}
\end{align}
\cref{eq:edofbound2} implies the following bounds on $r_u, r_v$:
\begin{align}
  \frac{(1 -\mu)|e_{\max}|}{\sqrt{2}} & < r_u, r_v < \frac{(1 -\lambda)|e_{\min}|}{\sqrt{2}}. \label{eq:vrofbound}
\end{align}  

We now specify bounds on quality measures for polygons in each of the three classes.

\subsubsection{Class \ref{2dETcls1} cells} \label{sssec:2dETcls1gq}
Let $f$ be a Class \ref{2dETcls1} polygon, and $b$ the edge offset distance for $\hf$.
Let $D, \hD$ be the diameters and $\rho, \hrho$ the inradii of $f \in K$ and $\hf \in \hK$, respectively.
See \cref{fig:2dETcls1gq} for details. 
\begin{figure}[htp!]
  \centering
  \includegraphics[scale=0.25]{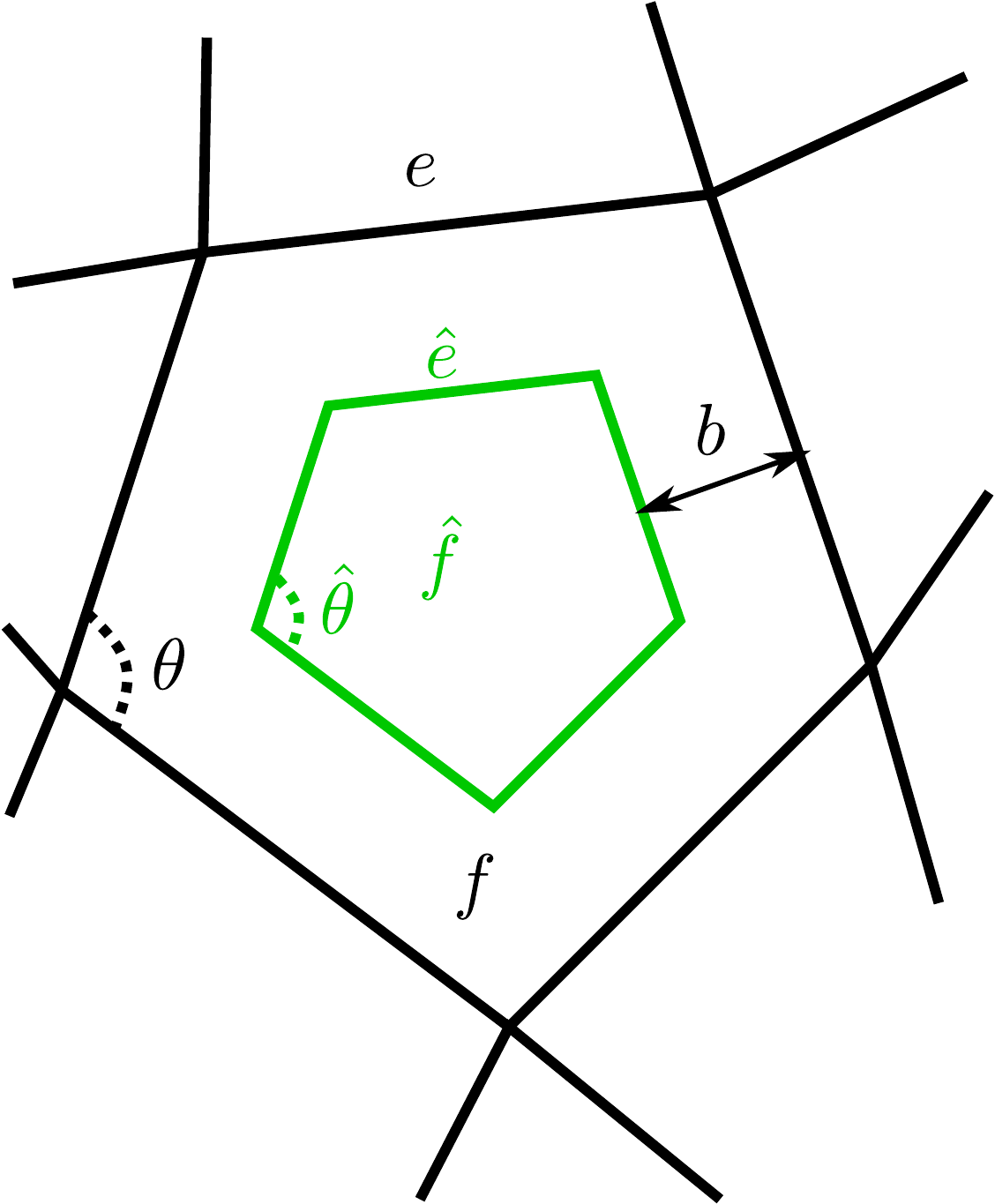}
  \caption{Quality measures for a Class \ref{2dETcls1} polygon $\hf$.}
  \label{fig:2dETcls1gq}
\end{figure}

\begin{description}
\item[Bounded Aspect Ratio]\label{bas:2dETcls1}
We know that $D \geq \hD$ and $\hrho \geq \rho - b$, and hence we get 
\begin{equation} \label{eq:2dETcls1ar}
    2 < \ghf \leq \frac{D}{\rho(1 - \frac{b}{\rho})} =\frac{\g(f)}{(1 - \frac{b}{\rho})}  \,.
\end{equation}
Since offset distance $b$ cannot be more than the inradius $\rho$, we have  $\frac{b}{\rho} < 1$.

\item[Minimum Edge Length]\label{mel:2dETcls1}
\begin{equation}
    \mbox{Since } |\he| \geq \gl |e| ~\forall e \in K, \mbox{ we get } |\he_{\min}| \geq \gl |e_{\min}| \,.
\end{equation}
\item[Maximum Interior Angle]\label{mia:2dETcls1}
  Since we are using mitered offsets, any interior angle  $\hat{\theta}$ in $\hf$ is same as the corresponding angle $\theta$ in $f$.
  Hence the maximum interior angle in $\hf$ is same as that in $f$.  
\end{description}

\subsubsection{Class \ref{2dETcls2} cells} \label{sssec:2dETcls2gq}
Let $r, r'$ be the offset distances for vertices $\hv, \hv'$ generated by $v \in e \in f, f' \in K$, and let $\he, \he'$ be the corresponding edges in $\hf, \hf'$, respectively.
Let $d$ be a diagonal in the Class \ref{2dETcls2} polygon $\hf_e$.
As in the previous Section, we let $D, \hD$ be the diameters and $\rho, \hrho$ the inradii of $f, \hf_e$, respectively.
We refer to \cref{fig:2dETcls2gq} for details.
\begin{figure}[htp!]
      \centering
      \includegraphics[scale=0.25]{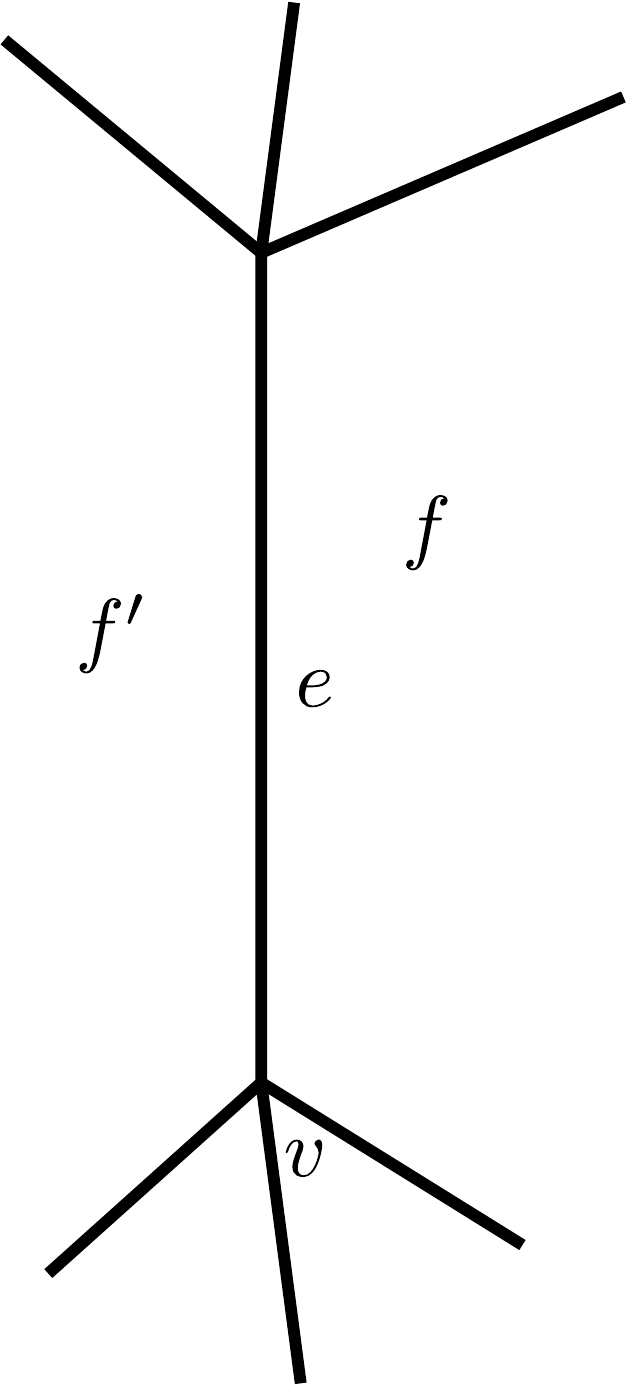}
      \quad\quad
      \includegraphics[scale=0.30]{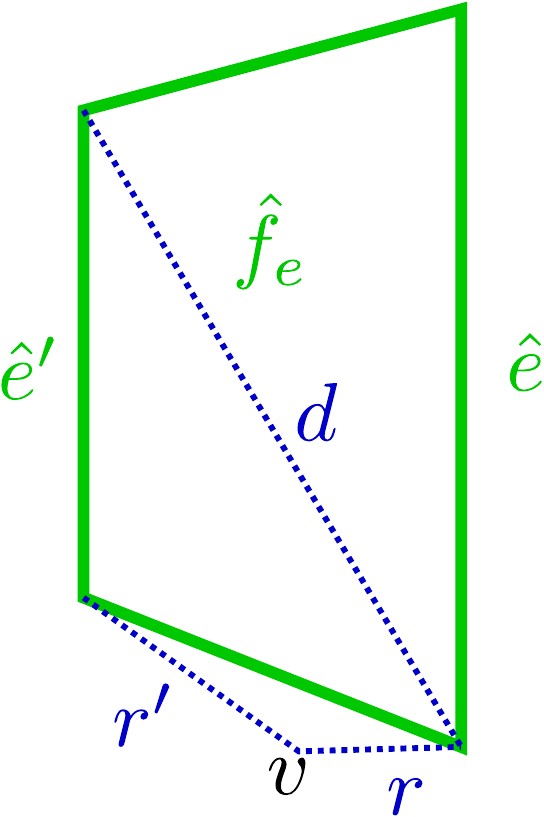}
      \quad\quad
      \includegraphics[scale=0.30]{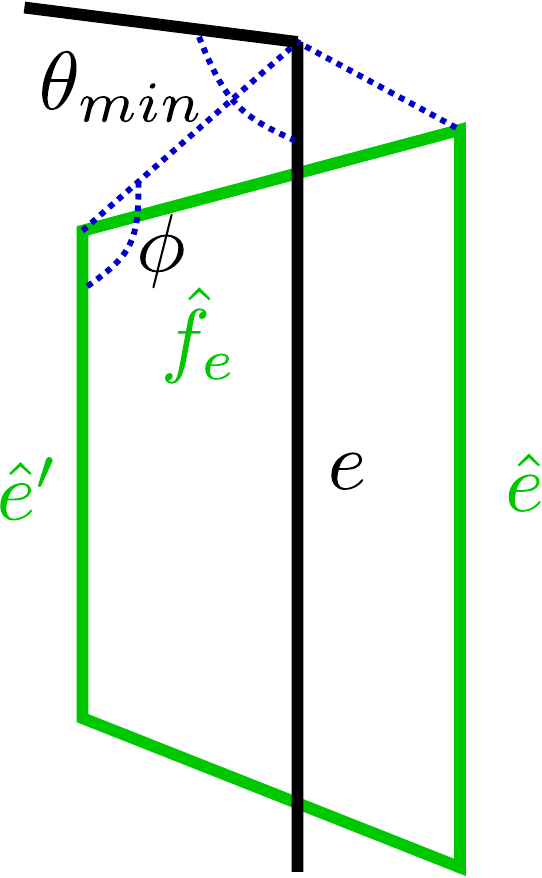}
      \caption{Quality aspects of a Class \ref{2dETcls2} cell $\hf_e$.
      }
      \label{fig:2dETcls2gq}
\end{figure}

\begin{description}
\item[Bounded Aspect Ratio]\label{bas:2dETcls2}
By triangle inequality, we get $|\he'| + r' + r \geq d$. 
Since $|e| \geq |\he'|$, we get
\[ |e| + r' + r \geq d \,. \]
Assume without loss of generality that $\lambda > \lambda'~,~ \mu > \mu'$  and  $|\he'| < |\he|$.
Then by \cref{eq:vrofbound}, we get
\begin{equation}
  ((1 - \lambda')\sqrt{2} + 1)|e| > d\,.   
\end{equation}
And since diameter of $\hf_e$ is not smaller than the maximum edge length and maximum diagonal length of a trapezium, we have $\hD < ((1 - \lambda')\sqrt{2} + 1)|e|$.
Using \cref{eq:edofbound}, and since $\mu > \mu' \text{ and } |\he'| < |\he| $ we get 
\begin{equation}\label{eq:bas:2dETcls2edofset}
  b, b' > 0.5(1 - \mu)|\he'| \,.   
\end{equation}
Then area of $\hf_e$ satisfies $\area(\hf_e) \geq (1 - \mu) |\he'|^2$.
As we know that $|\he'| \geq \lambda' |e| $, we get
\begin{equation}
    \area(\hf_e) \geq (1 - \mu) (\lambda')^2 |e|^2 \, .
\end{equation}
We know that the perimeter of $\hf_e$ satisfies $\per(\hf_e) \leq 4|e|$.
Since $\area(\hf_e) \leq \per(\hf_e) \hrho$ as $\hf_e$ is a trapezium \cite{ScAw2000}, we get
\begin{equation}
  \hrho > \frac{(1 - \mu ) (\lambda')^2 |e|}{4} \,.
\end{equation}
Hence
\begin{equation}
  2 < \g(\hf_e) < \frac{4 ((1 - \lambda')\sqrt{2} + 1)}{(1 - \mu) (\lambda')^2}\, .    
\end{equation}

\item[Minimum Edge Length]\label{mel:2dETcls2}
  Based on our assumption that $|\he'| < |\he|$, we have $\lambda' < \mu$.
  The other two non-parallel edges of $\hf_e$ have lengths at least $b + b'$.
  Hence the minimum edge length of $\hf_e$ is at least $\min \{|\he'|, b+b' \}$, which, by \cref{eq:bas:2dETcls2edofset}, is at least $\min\{ \lambda' |e|, (1 - \mu)\lambda' |e|\} = (1 - \mu)\lambda' |e|$.
  If $e$ is a boundary edge of $f$, the result remains same except $\lambda' = \lambda$ since $e$ is shared between $f$ and $\CH \cup \CO$. 

\item[Maximum Interior Angle]\label{mia:2dETcls2}
  Assume without loss of generality that the minimum interior angle of $f$ is smaller than that of $f'$.
  If $\theta_{\min}$ is the minimum interior angle of f, then the maximum interior angle $\hf_e$ is strictly less than $\phi =\pi - \theta_{\min}/2$ since sum of adjacent interior angles formed by the non-parallel lines with the parallel lines of a trapezium is $\pi$, and in the case of a mitered offset any line segment $u\hu$ shown in \cref{fig:2dETcls1edgeconstr} is an angle bisector for the interior angle at $u$ of the input cell.  
  This result is valid in both cases when $e$ is an interior or a boundary edge.
\end{description}

\subsubsection{Class \ref{2dETcls3} cells}  \label{sssec:2dETcls3gq}
Let $r, R'$ be the minimum and maximum offset distances for vertices $\hv, \hv'$, which correspond to vertex $v \in f, f'$, two specific polygons among all that share $v$.
See \cref{fig:2dETcls3gq} for details.
Let $|\tilde{e}_{\min}|, |\tilde{e}_{\max}|$ be the minimum and maximum edge lengths for edges in all $2$-cells sharing vertex $v$, and let the edges belong to $f', f$, respectively.
Let $\alpha, \beta $ are maximum and minimum angles formed by edges of $\hf_v$ on vertex $v$.
Also, let $\lambda, \mu$ and $\lambda', \mu'$ be the user-defined parameters for $f, f'$, respectively.

\begin{figure}[htp!]
      \centering
      \includegraphics[scale=0.25]{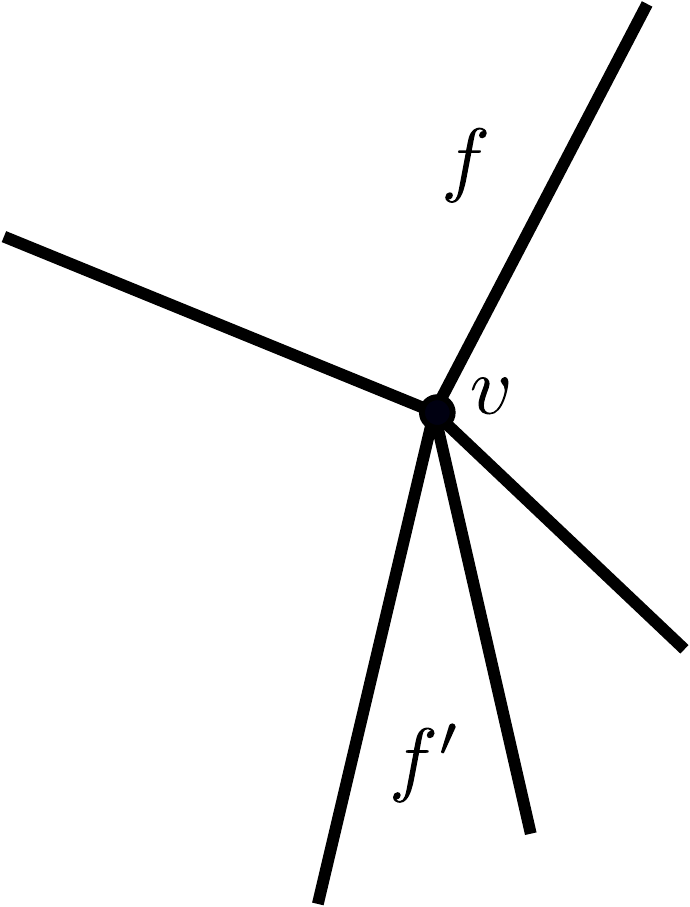}
      \quad
      \includegraphics[scale=0.25]{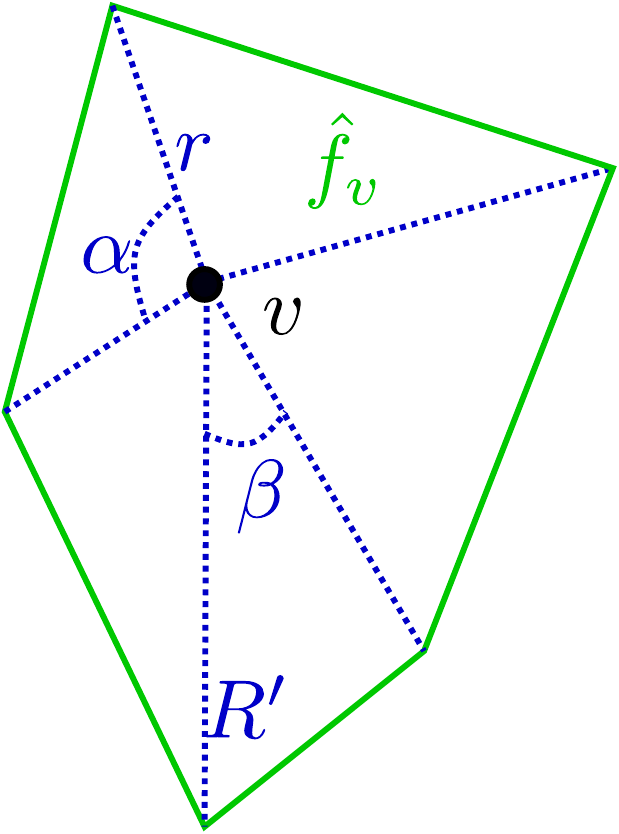}
      \quad
      \includegraphics[scale=0.25]{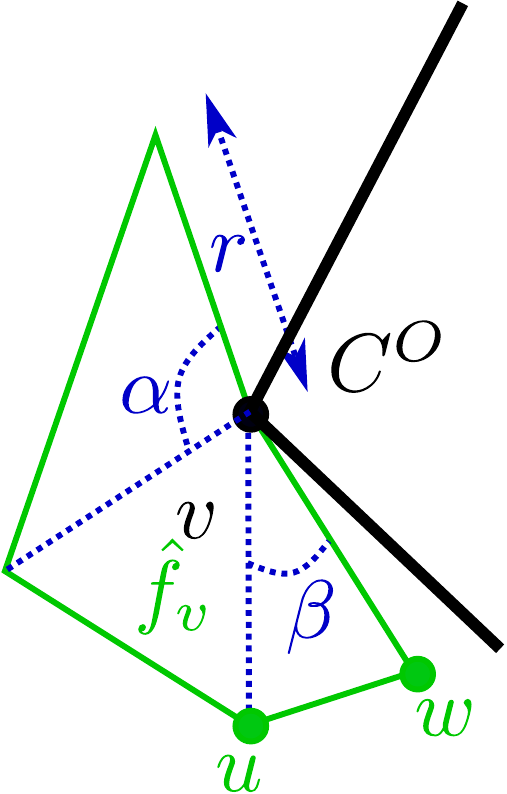}
      \quad
      \includegraphics[scale=0.25]{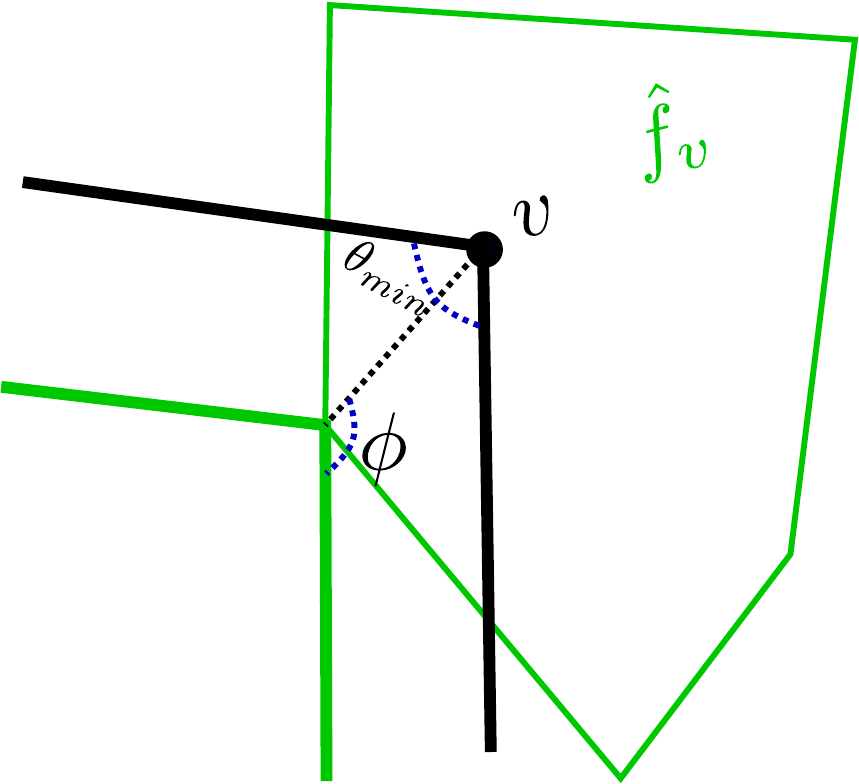}
      \caption{Quality measures for a Class \ref{2dETcls3} cell $\hf_v$.
      }
      \label{fig:2dETcls3gq}
\end{figure}

\begin{description}
\item[Bounded Aspect Ratio]\label{bas:2dETcls3}
  We first consider the case where $v$ is an interior vertex, i.e., it is not on the boundary of $K$.
  Since $\hrho \geq r \cos{\frac{\alpha}{2}}$ and $\hD \leq 2 R'$,  using \cref{eq:vrofbound} gives
  \begin{equation}
    2 < \g(\hf_v) \leq \frac{2R'}{r\cos{\frac{\alpha}{2}}} < C\bigg(\frac{1 - \lambda'}{1 - \mu}\bigg) 
  \end{equation}
  where $C = 2 |\tilde{e}_{\min}| /(|\tilde{e}_{\max}| \cos{\frac{\alpha}{2}})$.
  
  When $v$ is on the boundary, we consider applying the same offset $r$ to every $2$-cell sharing vertex $v$.
  Under this setting, $\Delta uvw$ is an isosceles triangle (third image in \cref{fig:2dETcls3gq}), and hence its inradius is $r \sin(\frac{\beta}{2}) \sqrt{(1- \sin(\frac{\beta}{2}))/(1 + \sin(\frac{\beta}{2}))}$.
  \begin{equation}
    \mbox{Denoting } L = \sin(\frac{\beta}{2}) \sqrt{(1- \sin(\frac{\beta}{2}))/(1 + \sin(\frac{\beta}{2}))},
    \mbox{ we get }~~~~~ \g(\hf_v)  < \bar{C}\bigg(\frac{1 - \lambda'}{1 - \mu}\bigg)~
  \end{equation}
  where, $ \bar{C} = 2|\tilde{e}_{\min}| / (|\tilde{e}_{\max}| L)\,$.

\item[Minimum Edge Length]\label{mel:2dETcls3}
If $v$ is not on the boundary, we get
\begin{equation}
    |\he| \geq 2 r \sin{(\frac{\beta}{2})} > \sqrt{2}(1 - \mu) |\tilde{e}_{\max}| \sin{(\frac{\beta}{2}}) \,.
\end{equation}
If $v$ is on the boundary, we get
\begin{equation}
    |\he| \geq \min\bigg\{ r, 2r\sin{\frac{\beta}{2}}\bigg\} \,.
\end{equation}
Then we get by \cref{eq:vrofbound} that
\begin{equation}
    |\he| > \sqrt{2}(1 - \mu)|\tilde{e}_{\max}| \min\bigg\{ \frac{1}{2}, \sin{\frac{\beta}{2}}\bigg\}\,.    
\end{equation}

\item[Maximum Interior Angle]\label{mia:2dETcls3}
  As shown in the Figure \ref{fig:2dETcls3gq}, $\phi < \pi - (\theta_{\min}/2)$, where $\theta_{\min}$ is minimum interior angle of the $2$-cells sharing vertex $v$.
  Hence maximum interior angle of $\hf_v$ is strictly less than $2\phi < 2\pi - \theta_{\min}$.  
\end{description}

\medskip
The results in \cref{sssec:2dETcls1gq,sssec:2dETcls2gq,sssec:2dETcls3gq} show that the user could choose parameters $\gl_f,\mu_f,b_f$ for each cell $f \in K$ so that all measures of geometric quality of cells in $\hK$ are within desired bounds.
In order to make $\hK$ as uniform as possible, the user could choose a single, or a few, set(s) of values for these parameters that are applied for all cells.
For instance, one could choose a $\gl^* \geq \gl_f$ and $\mu^* \leq \mu_f$ with $\gl^* < \mu^*$ for all $f \in K$.
But such choices may not exist for all input complexes $K$.

\subsubsection{Euclidean Bound on Length after Euler Transformation }\label{ssec:euclideanbound}
Based on geometric quality parameter discussed in section \ref{ssec:qual2d}. Let $\lambda*, \mu*$ are the parameters for input complex $K$ such that $\lambda^*|e|\leq |\hat{e}|\leq \mu^*|e|~~~\forall e \in E$ .  $b, p, q < \frac{|e|(1 - \lambda^*)}{2} \implies r, s < \frac{|e|(1 - \lambda^*)}{2}$ as shown in Figure \ref{fig:eucledian_bound-crop}.
$\hat{e} = \sqrt{s^2 + b^2} < (1 - \lambda^*)|e| \frac{\sqrt{5}}{2}$. For each edge $e$ we have unique $\hat{e}, \hat{e}'$ edges in $\hat{K}$, hence total length of these types of edges is $< \sum_{e \in E} (1 - \lambda^*)|e| \sqrt{5}$. Sum of all the edges of class $1$ in complex $\hat{K}$ is $\leq 2 \sum_{e \in E} \mu^* |e|$. Hence total euclidean length($\hat{L}$) of edges in $\hat{K}$ is 
$2L < \hat{L} < \sqrt{5}\sum_{e \in E}|e| + (2\mu^* - \sqrt{5}\lambda^*)\sum_{e \in E}|e| = (\sqrt{5} + (2\mu^* - \sqrt{5}\lambda^*))L$, where $L = \sum_{e \in E}|e|$    
$\hat{L}$ is at least going to be double as compared to euclidean length of edges of original complex $K$, but it also depends upon value of $\mu^*, \lambda^*$. We can control density with parameters $\mu^*, \lambda^*$ to some extent.
\begin{figure}[htp!] 
	\centering
	\includegraphics[width=70mm,height=50mm]{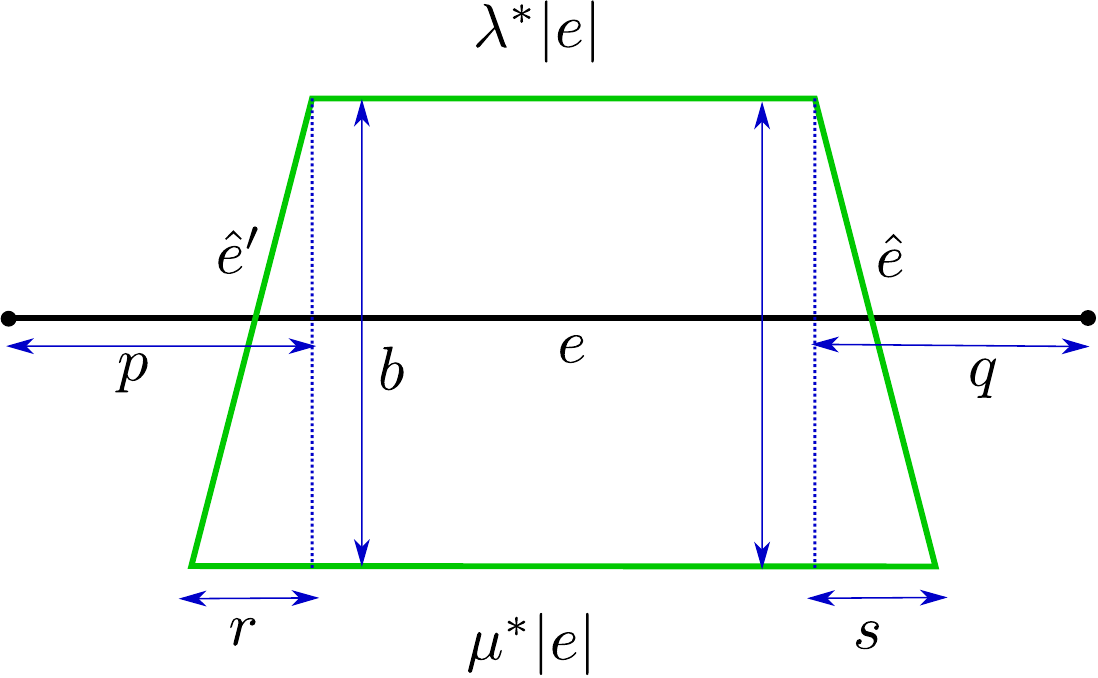}
	\caption{$4$-gons (green) is a class-$2$ $2$-cell in $\hat{K}$ created around the edge $e$ of input complex $K$.}
	\label{fig:eucledian_bound-crop}
\end{figure}

\subsection{Geometric quality in $d=3$} \label{ssec:qual3d}

We specify how the aspect ratios of $3$-cells of Classes \ref{3dETcls1} and \ref{3dETcls2} compare with those of cells in $K$.
Deriving similar bounds for quality measures of Class \ref{3dETcls3} and \ref{3dETcls4} cells appears more challenging, and is left as part of future work.
In $d=3$, we let the user specify the mitered offset distance $b$ by which each facet (polygon) of a $3$-cell $t \in K$ is moved in to create the corresponding offset cell $\htt \in \hK$.

\subsubsection{Class \ref{3dETcls1} cell}
Let $D, \hD$ denote the diameters, and $\rho, \hrho$ the inradii of cells $t, \htt$, respectively.
Since the Class \ref{3dETcls1} $3$-cell $\htt$ is created as a mitered offset of $3$-cell $t \in K$, we have $~\hD \leq D~$ and $~\hrho \geq \rho - b$.
Hence we get that
\[
\g(\htt) = \frac{\hD}{\hrho} < \frac{D}{\rho - b} = \frac{\g(t)}{\left( 1 - \frac{b}{\rho}\right)}, ~\mbox{ with } \frac{b}{\rho} < 1 \, .
\]
Note that this is the same bound satisfied by the aspect ratio of a Class \ref{2dETcls1} polygon specified in \cref{eq:2dETcls1ar}.

\subsubsection{Class \ref{3dETcls2} cell}
In general, the Class \ref{3dETcls2} $\htt_f$ generated by facet $f \in t \in K$ could have the shape of an ``oblique'' truncated cone with $\hf,\hf'$ as the bases (unlike the ``orthogonal'' truncated cone that \cref{fig:3dETcls2} might indicate).
let $\bar{D}=D+D'$ be the diameter of a ball that contains both $3$-cells $t,t'$ that share $f$ as a facet.
Let $\hat{r}, \hat{r}', r$ be the inradii  of facets $\hf, \hf', f$, respectively, and let $b, b'$ be the offset distance for $f, f'$.
Assume without loss of generality that $\hat{r} < \hat{r}'$.
We can fit an oblique cylinder of radius $\hat{r}$ and height $b + b'$, as shown in \cref{fig:3dETcls2gq}, inside $\htt_f$.
Then we can fit a ball of radius $R = \min\{ \frac{b+b'}{2}, \hat{r}\} \cos{\theta}$, where $\theta$ is the skew angle of the oblique cylinder.
Note that $\theta = 0$ if $\htt_f$ is an ``orthogonal'' truncated cone, in which case the cylinder will be normal as well.
We get
\[\g(\htt_f) = \frac{\hD}{\hrho} \leq \frac{\bar{D}}{R} \,.\]
\begin{figure}[htp!]
      \centering
      \includegraphics[scale=0.35]{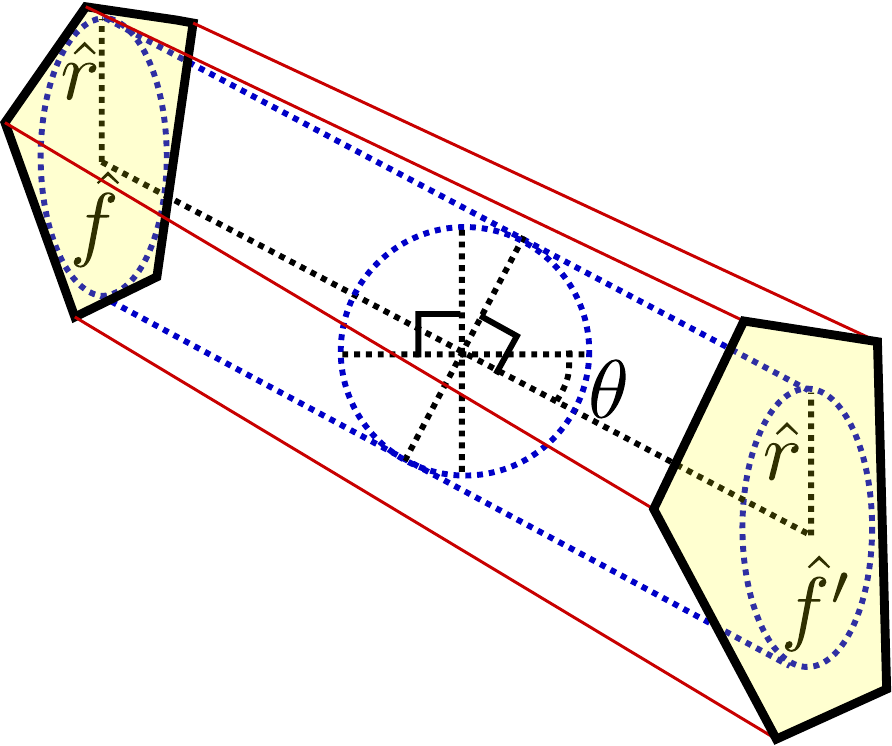}
      \caption{The oblique cylinder we could fit within a Class \ref{3dETcls2} cell $\htt_f$, and the parameters that determine its aspect ratio.
      }
      \label{fig:3dETcls2gq}
\end{figure}

\section{Application: Additive Manufacturing} \label{sec:3dprtg}

We have been using the Euler transformation for tool-path planning in large scale additive manufacturing (3D printing) with our collaborators in Oak Ridge National Laboratory (ORNL).
The system ORNL is developing is termed Big Area Additive Manufacturing (BAAM), and its goal is to be able to \emph{efficiently} print 3D objects that are much larger than the ones that typical 3D-printers in the market today are able to handle (think a few to several feet in each dimension).
Being able to print all edges in a contiguous manner is critical for efficiency.

We present brief details of a proof-of-concept print of a pyramid, which was represented as a collection of planar layers stacked from the ground up (see the left image in \cref{fig:3dprtg}).
The dimensions of the pyramid were $609.6\,$mm $\times$ $609.6\,$mm $\times$ $609.6\,$mm, and each layer had a height of $4.26\,$mm, resulting in a total of $143$ layers.
We started with a standard triangulation $K_0$ of the bottom layer, and obtained the Euler transformation $\hK_0$.
Given the geometry of this object, we took the intersection of $\hK_0$ with the domain of each subsequent layer $i>0$, trimming any vertices and edges of $\hK_0$ that were outside the layer's domain to create $\hK_i$.
This process created some odd-degree nodes at the boundary in $\hK_i$.
But we can show that the number of odd degree nodes in $\hK_i$ is even, and hence we added a set of edges along the perimeter of the layer connecting pairs of odd-degree nodes.
With this modification, $\hK_i$ is guaranteed to be Euler again.

For printing each layer, we used a blossom algorithm to identify an Eulerian tour.
Starting from any vertex, we choose edges at each step that minimize sharp turns in a greedy manner to identify the tool path.

The method outlined above naturally handles voids in the print domain.
We are developing various classes of efficient algorithms for additive manufacturing in this fashion that are capable of incorporating several physical and material constraints.
One such framework has been presented in a separate manuscript \cite{GuKrDr2020}.

\begin{figure}[htp!]
  \centering
  \includegraphics[width=0.45\textwidth]{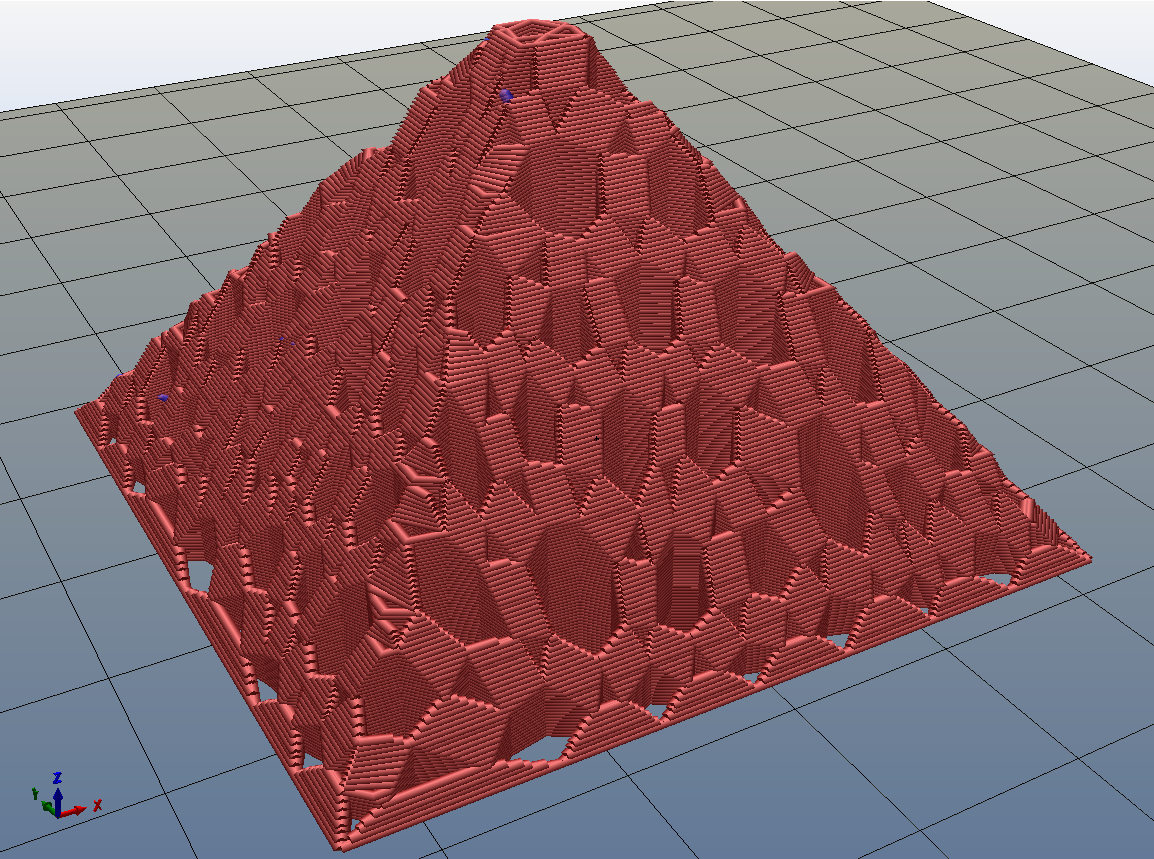}
  \quad
  \includegraphics[width=0.47\textwidth]{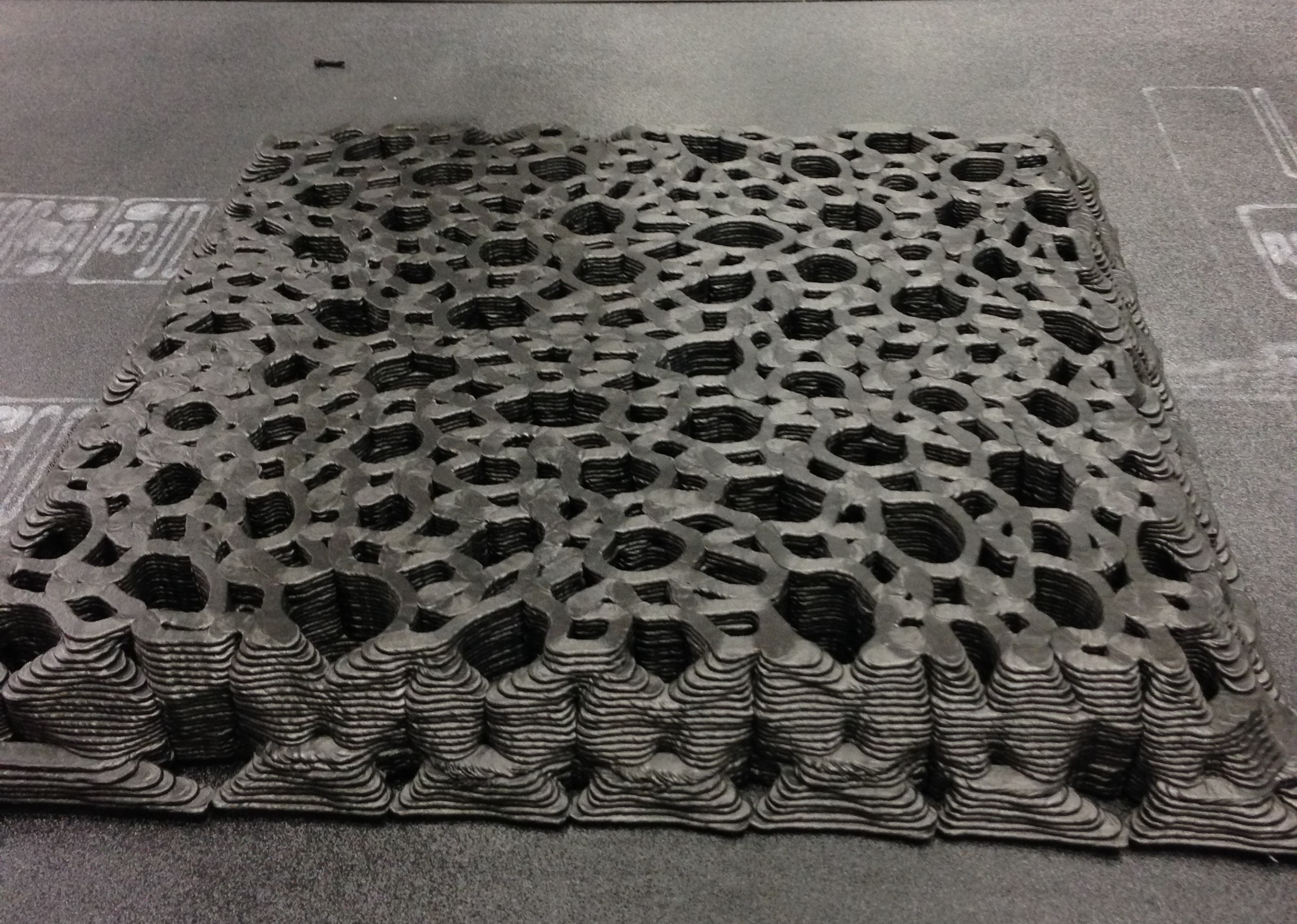}
  \caption{Visualization of the design of a pyramid-shaped object (left), and a view of the partially 3D-printed object (right). }
  \label{fig:3dprtg}
\end{figure}

\section{Discussion} \label{sec:disc}

The bottleneck for computational complexity of the Euler transformation is determined by the mitered offsets it creates for each cell in $K$.
The number of cells in $\hK$ are clearly linear in the number of cells in $K$ (\cref{lem:cntshVhEhF2d,lem:cntshThVhEhf3d}).
For $d=2,3$, if $K$ in $\R^d$ has $m$ $d$-cells, each of which has at most $p$ facets, the time complexity of Euler transformation is $O(m p^d)$ \cite{AuWa2013,AuWa2016}.

Not all cells in the Euler transformation $\hK$ are guaranteed to be convex, even when all cells in $K$ are (see \cref{fig:disjholes}).
We presented various results on the measures of geometric quality of each class of cell in $\hK$ (in \cref{sec:geomqual}).
At the same time, choosing an \emph{optimal} set of parameters ($\gl, \mu, b$) that maximizes the overall quality of \emph{all} classes of cells in $\hK$ simultaneously could be modeled as an optimization problem.

We pointed out in \cref{rem:nonplnr3d} that Class \ref{3dETcls3} or \ref{3dETcls4} cells in $\hK$ might have nonplanar facets (in $d=3$).
We suggested not including these $3$-cells in $\hK$ when we are more interested in its $1$-skeleton.

\medskip

\noindent {\bfseries Acknowledgment:}
This research was supported in part by an appointment of Gupta to the Oak Ridge National Laboratory ASTRO Program, sponsored by the U.S. Department of Energy and administered by the Oak Ridge Institute for Science and Education.
The authors acknowledge partial funding from the National Science Foundation through grants 1661348 and 1819229.

\bibliographystyle{plain}
\input{EulerTransformation.bbltex}

\end{document}